\newtheorem{Definition}{Definition}[section]
\newtheorem{Lemma}{Lemma}[section]
\newtheorem{Corollary}{Corollary}[section]
\newtheorem{Proposition}{Proposition}[section]
\newtheorem{Example}{Example}[section]
\newcommand{\be}{\begin{equation}}
\newcommand{\ee}{\end{equation}}
\newcommand{\ba}{\begin{eqnarray}}
\newcommand{\ea}{\end{eqnarray}}
\title{{\sf Non-perturbative Quantum Gravity}\\
 {\sf in Fock representations}} 
\author{
{\sf T. Thiemann}$^1$\thanks{{\sf 
thomas.thiemann@gravity.fau.de}}\\
\\
{\sf $^1$ Inst. for Quantum Gravity, FAU Erlangen -- N\"urnberg,}\\
{\sf Staudtstr. 7, 91058 Erlangen, Germany}\\
}
\date{{\small\sf \today}}
\begin{document} 

\maketitle

{\sf

\begin{abstract}
Perturbative quantum gravity starts from prescribing a background 
metric. That background metric is then used in order to carry out 
two separate steps: 1. One splits the non-perturbative metric into 
background and deviation from it (graviton) and expands the action 
in terms of the graviton which results in an ifinite series of 
unknown radius of convergence. 2. One constructs a Fock representation for 
the graviton and performs perturbative graviton quantum field theory on the 
fixed background as dictated by the perturbative action. 
The result is a non-renormalisable theory without predictive power.   

It is therefore widely believed that a 
non-perturbative approach is mandatory in order to construct a 
fundamental, not only effective, predictive quantum field theory of the 
gravitational interaction. Since perturbation theory is by definition 
background dependent, the notions of background dependence (BD) and 
perturbation theory (PT) are often considered as symbiotic, as if they 
imply each other.

In the present work we point out that there is no such symbiosis, 
these two notions are in fact logically independent. 
In particular, one {\it can} use BD structures
while while not 
using PT at all. Specifically, we construct BD Fock representations (step 2
above) 
for the full, {\it non-perturbative} metric rather than the graviton 
({\it not} step 1 above) 
and therefore
never perform a perturbative expansion. Despite the fact that the 
gravitational Lagrangean is a non-polynomial, not even analytic, 
function of the metric we show that e.g. the Hamiltonian constraint
with any density weight 
can be defined as a quadratic form with dense form domain in such a 
representation.     
\end{abstract}

\section{Introduction}
\label{s1}

Today we have not yet constructed an interacting  
matter quantum field theory (QFT) in four 
dimensional Minkowski space that obeys the Wightman axioms
\cite{1}. Still, elementary particle physics has been spectacularly 
successful in using perturbation theory based on Fock representations 
for the free (i.e. non-interacting) QFT which can be constructed 
in any dimension. The chosen Fock representation is subordinate
to a split of the non-perturbative Hamiltonian $H=H_0+V$, which is 
a polynomial function of the fields, into a free piece $H_0$ which is 
a quadratic polynomial and the interaction $V$ which is a polynomial 
of higher degree. While by construction $H_0$ is an operator on the 
free Fock space ${\cal H}_0$ with dense invariant domain ${\cal D}_0$
given by the span of Fock states,
by Haag's theorem \cite{2} the interaction term $V$ is typically 
not an operator but only a quadratic form with dense form domain
${\cal D}_0$. That is to say that $H$ is not an operator on ${\cal H}_0$
but merely a qudratic form which is at the heart of the problem of 
giving meaning to powers of $H$ as they appear in the would be unitary 
time evolution operator $e^{itH}$ or the would be Moeller operator
$e^{it H}\; e^{-itH_0}$ which is employed  
in the associated scattering matrix $S$. To define it as an operator 
would lead to a Wightman QFT and would require a representation 
subordinate to $H$ and not to $H_0$ on a Hilbert space $\cal H$.  

If one tries to construct the S-matrix nevertheless 
by Taylor expanding with respect to $V$ as a quadratic 
form, one meets divergences that come from products of quadratic forms
which is an ill-defined mathematical operation. These divergences can 
be tamed using regulariation and removed by absorbing them into 
the monomial coefficients of the Hamiltonian. If this can be done 
order by order for a {\it finite} number of monomials one calls the theory 
perturbatively renormalisable. If this requires an infinite number of 
monomials one calls the theory perturbatively non-renormalisable. 
In the non-renormalisable case one calls the theory truncated to a 
finite number of monomials effective if one can argue that the neglected 
terms do not play any role up to a given energy resolution.
Even in the renormalisable case there is no guarantee that the perturbation
series converges in any sense. Importantly, one can do this also in curved 
spacetime (CST) i.e. for a general Lorentzian signature metric, not necessarily
the Minkowski metric \cite{3}.

In quantum gravity one has tried to apply this programme to the gravitational 
field. A first difference with the matter field sector is that while 
the natural ``background matter'' field is the ``zero'' field, there is no
such natural background for the metric. Therefore in a first step 
one splits the full metric $g$ into a fixed background $g_0$ and 
a graviton deviation $h=g-g_0$. Then one expands the gravitational
Lagrangean in powers of $h$ at $g=g_0$. Since the gravitational Lagrangean 
is not polynomial, not even analytic, in $g$, the expansion is an infinite 
series with an unknown radius of convergence even classically. 
In a second step one collects the terms quadratic in $h$ into a free 
part (the analog of $H_0$) and the interacting rest (the analog of $V$). 
This must be accompanied by fixing the coordinate gauge freedom. 
One can then use the free part to construct a Fock representation for 
$h$ adapted to it and $g_0$. 
The rest of the programme then proceeds formally the same way as for 
matter QFT in the CST encoded by $g_0$. Unfortunately the resulting 
theory is perturbatively non-renormalisable \cite{4} in the 
following sense: the perturbative expansion of a Lagrangian in terms of 
$h$ delivers n-th order monomial interaction terms $L_n$. 
One can now ask whether the counter terms triggered by $L_n$ are found 
in the list $\{L_m\}_{m=0}^\infty$. If this is true for all $n\le N$ 
and if this leaves no more than a finite number of free parameters 
then the theory would be renormalisable. However for a given $L$ 
(say the Einstein-Hilbert Lagrangian) one finds even for relatively low 
orders $n$ counter terms $L'_m$ which belong to a different Lagrangian 
$L'$ i.e. are not in the allowed list. This means that one needs to 
generalise the Lagrangian introducing more couplings 
and it is believed that this process never ends and results in a theory with 
infinitely many free parameters.   

It is therefore widely believed that a fundamental QFT of gravity must 
be constructed using a non-perturbative approach (NPA). Examples for 
such programmes are, in alphabetical order, 
asymptotically safe quantum gravity (ASQG) \cite{6}, 
causal dynamical triangulations (CDT) \cite{7}, causal set theory 
(CST) \cite{8}, loop quantum gravity (LQG) \cite{9} and 
string theory (ST) in the AdS/CFT incarnation \cite{10}. See also
\cite{11} for a confrontation of these approaches in a single volume.

Since the perturbative approach to QG as sketched above uses a background 
metric $g_0$ as a starting point and since it has failed to construct 
a fundamental theory, many quantum geometers take background independence
(BI) as a profound, almost holistic, 
guiding principle to construct a fundamental theory of QG.
This would indeed be a necessary step, if the notions of background dependence 
(BD) and perturbative approach (PA) would be truly symbiotic, i.e. 
if indeed BD would imply a PA and vice versa. However, we would like 
to point our that {\it there is no such symbiosis.} Indeed, while one cannot
A. perform perturbation theory without B. using a background metric, 
one can B. use a background metric without A. performing perturbation theory.
Thus in terms of logical implication we have 
$A\;\Rightarrow B$ but  $B\;\not\Rightarrow A$. That is, we may use 
a backround metric $g_0$ to perform various QFT constructions but 
without using $g_0$ to split the non-perturbative metric $g$ into 
$g=g_0+h$. Without such a split, the Lagrangean is not expanded 
in powers of of $h$ and there is {\it no perturbation series at 
the classical level.}.

In this paper we will use a background metric $g_0$ to construct 
a Fock representation, but not of the perturbation field $h=g-g_0$ but
of the non-perturbative field $g$. In fact, even in perturbative QG
one could have used a Fock representation for $h$ based on a background 
metric $g_1$ which could be different from $g_0=g-h$ although there it 
would be an unnecessary, even disadvantageous (because the quadratic 
part of the Hamiltonian is an operator rather than a quadratic 
form only if $g_1=g_0$) blow-up of background structure. As a consequence 
the algebraic state (vacuum expectation value) underlying this Fock space 
structure is in particular not diffeomorphism invariant. This means that 
not a single Fock vector state, not even the vacuum 
(i.e. the vector annihilated by 
the annihilation operators) is diffeomorphism invariant. This appears to be 
in rather strong conflict in representing the gauge transformations of 
theory on the corresponding Hilbert space in the operator constraint approach 
and various operators such 
the Weyl operators that define the Fock representation and constraint
operators that encode the quantum Einstein equations. However we remind 
of the phenomenon of symmetry breaking \cite{12}: It is perfectly possible 
that an operator $H$ is invariant under a unitary symmetry $U$ while 
none of its ground states is (in that case the ground state is 
necessarily degenerate). With respect to the diffeomorphism group      
and the generators of diffeomorphisms (diffeomorphism constraints) it 
is perfectly posible that the generators are diffeomorphism co-variant 
while no state is invariant.

Such a background dependent representation is of any practical use 
only if one succeeds to formulate the quantum Einstein equations in 
mathematically rigorous way. In LQG one has gone rather far with 
a manifestly {\it background independent} representation of (a version 
tailored to non-abelian gauge theories of) 
the Weyl operators \cite{13}: There is a unique 
such representation of Narnhofer-Thirring type \cite{14}
based on an inavariant state $\omega$ 
on the corresponding Weyl algebra
\cite{15}. The spatial diffeomorphism group consequently acts by 
unitary operators. However, one must pay a high price for this: The 
state $\omega$ is irregular, one cannot construct the spatial 
metric and its conjugate momentum simultaneously as operators. Similarly
one cannot construct the generators of the spatial diffeomorphism
group as operators. One can get away with this because one can 
at least obtain a representation of the spatial diffeomorphism group.
However, the generators of the classical Einstein equations do not 
form a Poisson Lie algebra and thus Lie group techniques cannot be used
to construct the exponentiated temporal generator, called the Hamiltonian
constraint. One has to construct it directly in non-exponentiated form 
which requires regularisation and operator ordering choices and 
removal of the regulator based on spatial diffeomorphism invariance 
\cite{16}. However, not all regularisation choices are washed out 
when the regulator is removed and leaves the Hamiltonian constraint 
with quantisation ambiguities. Furthermore, while the algebra of constraint
operators closes with structure operators ordered to the right, these 
structure operators are not quantisations of the classical structure 
functions. Thus while there is no mathematcial anomaly, there is 
a physical one. These complications can be overcome in the 
Abelian truncation of vacuum gravity \cite{16a} but for a different 
Weyl algebra and with a different dense invariant domain of the 
constraint operators.  

To avoid this anomaly in LQG, one can replace the Hamiltonian 
constraint by the spatially diffeomorphism invariant master constraint 
\cite{17} or one can perform a reduced phase space quantisation \cite{18}
or one can try to define the constraints on a linear 
dual space of distributions 
without Hilbert space structure \cite{16b}.
However, another annoying consequence of the 
irregularity of the state which penetrates to all three of these 
proposals  
is the fact the resulting Hilbert space is not separable. This again 
leads to quantisation ambiguities. It is these ambiguities and their removal
that have motivated Hamiltonian renormalisation methods \cite{19} and also
the present work.

Fock representations are based on regular pure states on the Weyl algebra 
and thus we can define the spatial metric and its conjugate momentum 
as operator valued distributions. Moreover, the associated representation
of the Weyl algebra is irreducible, the underlying vacuum expectation 
value state is pure, every vector is cyclic for it.
As we will show, in a non-perturbative Fock representation one cannot
easily define the constraints as {\it operators}, but one can define them 
as quadratic forms on a suitable dense form domain. There are 
no quantisation ambiguities, one {\it must} choose normal ordering. 
Moreover, although quadratic forms cannot be multiplied, 
in a recent contribution \cite{29} we showed  
that for the generators of the spatial diffeomorphisms (spatial 
diffeomorphism constraints) one can define a {\it commutator} 
and check that it is free of physical anomalies.  
In this work we will show that given any representation, not 
necessaily in the Fock class, quadratic forms are 
sufficient to rigorously formulate and solve the Lorentzian or 
Euclidian quantum Einstein 
equations. Moreover, we will show 
that the commutator algebra can be meaningfully defined 
at least for Euclidian signature in background dependent 
Fock representations when the constraints are 
written in polynomial form. For the Lorentzian signature case 
this might also be possible by exactly the same technique 
but the computations, while straightforward, are much more complex
and have not been finished yet. 

As this works for a huge class of Fock representations one 
may ask which one to choose and whether this does not 
correspond to yet another source of ambiguity. However, note first of all 
that this choice of background structure $B_0$ consisting 
of a choice of background metric $g_0$ and a Fock state 
$\omega_0$ based on it must also be 
made in the perturbative approach, this ``background ambiguity'' is in addition
to the ambiguities that one encounters in the renormalisation 
process which requires to add an infinite number of counter terms 
corresponding to all possible curvature invariants. In the non-perturbative 
approach, if it can be accomplished, the 
counter term ambiguity is missing, which is an improvement. Next note 
that the background structure ambiguity $B_0$
is of the same quality as the representation ambiguity that 
one encounters in QFT in CST. The difference with QFT in CST is 
that there one neglects quantum gravity altogether and the background 
$g=g_0$ is considered as fixed.
But even then there are infinitely 
many choices and one must downsize them by imposing additional physical 
requirements such as the Hadamard condition on the 2-point function 
\cite{20} which states that the short distance behaviour should be 
the same as that of the usual Fock representations of Minkowski space 
singled out by Poincar\'e invariance. 
In quantum gravity one can use either the operator 
constraint or the reduced phase space approach. 
In the operator constraint approach  
the choice of representation 
is likely not very important because 
what one is interested in is not the states of that representation but rather
the solutions of the constraint equations which are distributions over 
a dense subspace in the representation space and that space of distributions 
could be independent of the choice of $g_0$. Of course 
the space of solutions must be equipped with a Hilbert space structure by 
itself which opens a new representation problem, namely that of the 
gauge invariant observables.  In a reduced phase space approach where one 
considers only representations of the observables from the outset, 
the choice of representation constructed from some $g_0$ enters 
only at that stage which is comparable to the problem 
of choice of representation in QFT in CST with the difference that 
now also quantum geometry fields have to be considered. One can now 
in principle again use the Hadamard condition to select suitable $B_0$ 
but with an additional input: The domain of the operators in question 
should be chosen in such a way that the quantum geometry fluctuations 
around $g_0$ are small. Such semiclassical, namely coherent, 
states do exist within Fock representations. This at least ties $B_0$
to the domain of the operators. Apart from that one needs further 
physical input in order to downsize the ambiguity $B_0$, for instance 
one could ask that a maximal number of quadratic forms that enter 
the dynamics in fact turn into operators which 
is similar in nature to the Hadamard condition that ensures that the 
stress energy tensor is a well defined quadratic form.
Finally we remind of Fell's theorem \cite{21} 
which says that 
in principle one can approximate all states from a given one in the sense 
of matrix elements of an arbitrary but finite number of observables 
to arbitrary precision. Unfortunately the proof of that theorem is 
not constructive and therefore of little practical use.

We note that this background dependent but 
non-perturbative approach is much closer to the formalism of QFT
in CST, which is believed to be framework to choose in the semiclassical limit 
of quantum gravity (vanishing geometry fluctuations), 
than when dealing with the background independent 
representation employed in LQG. For suitable matter the physical 
Hamiltonian of the reduced phase space formulation is spatially 
diffeomorphism invariant and one can promote the uniquness theorem 
\cite{15} to the physical Hilbert space level. However, that physical 
Hilbert space is again non-separable and the representation is not regular 
with respect to either the metric or conjugate momentum, the physical 
Hamiltonian operator thus again suffers from quantisation ambiguities. 
In the Fock representation the physical Hilbert space is separable, 
the quantisation
ambiguities are removed to a large extent but the physical 
Hamiltonian is now only a quadratic form, similar as in interacting 
QFT, and no longer an operator on the Fock space. 
Hence there are complementary advantages and disadvantages of Fock versus 
LQG representations. What makes the Fock representation attractive 
is that it is computationally much simpler and equips us with a huge aresenal 
or techniques. 

Thus we see that the viability of Fock representations 
in quantum gravity depends on whether one is able to formulate either 
of i. the quantum constraints, ii. the master constraint, iii. the 
physical Hamiltonian as a densely defined quadratic form on the Fock 
space. Now for the spatial diffeomorphism constraints (SDC) this 
is easy to check they are quadratic polynomials in the fields. For the
Hamiltonian constraint (HC) this is easy to check if one uses it in polynomial 
form which is possible by multiplying it by a sufficiently large 
power of the square root of the spatial metric upon which it becomes 
a polynomial as well. One can show that the SDC commutator can be defined 
in terms of quadratic forms and closes without anomaly
\cite{29}. For the HC 
this is less clear because the SDC and HC do not form a Lie algebra.
If we replace the HC by a master constraint which is invariant under spatial 
diffeomorphisms then we do obtain a Lie algebra and the corresponding 
commutator algebra of quadratic forms has a chance to close. However,
in order to be spatially diffeomorphism invariant, the master constraint
must not be a polynomial! Rather it is a polynomial multiplied 
by inverse powers of the determinant of the spatial metric. Nevertheless
it turns out that one can construct a densely defined quadratic form
from the master constraint using a new technique which we present,
to the best knowledge of the author, for the first time in this 
work. Finally, the physical Hamiltonian is obtained by solving 
the constraints for a canonical momentum conjugate to a scalar 
field which serves as a physical clcok and thus is naturally a density 
of weight one and thus also not a polynomial. Therefore the same 
remarks as for the master constraint apply with additional complications 
originating from the fact that solving for the momenta involves choosing 
a square root. We present a partial solution to this problem which 
again defines a quadartic form.\\
\\
The architecture of this work is as follows:\\
\\

In section \ref{s2} we 
introduce the classical framework and 
define a class of Fock representations for quantum 
gravity both in the constraint quantisation and 
reduced phase space approach. We focus mostly on the geometry sector.

In section \ref{s3} we define the SDC and HC (in polynomial form)
as quadratic forms on the 
Fock space. To define these, one makes use of mode structures
on the one particle Hilbert space, mode cut-offs 
and limiting patterns for the cut-off removal. 
One exploits the freedom in the choice of such structures to tame 
the commutator of quadratic forms and take limits. We show that the 
hypersurface deformation algebra closes in this quadratic form sense 
without anomaly for Euclidean signature vacuum GR written in real 
valued self-dual 
connection variable in four spacetime dimensions. The only reason
to restrict to Euclidian signature and four dimensions is that the 
real valued self-dual 
variables require that restriction but have the advantage of being 
polynomials of degree four only. The same technique can be applied in 
other dimensions and for Lorentzian signature but the computations 
are much harder as we encounter polynomials of degree ten or higher
and have not been 
completed yet.

In section \ref{s4} we present the new technique to define any 
real power of the determinant of the spatial metric {\it at a point} 
as a densely defined quadratic form. The form domain must take into 
account that the quantum metric be non-degenerate if negative powers 
are involved. The same technique can be used to define 
geometric observables such as lengths, areas and volumes  
and the master constraint as quadratic forms in Fock representations.
As for the physical Hamiltonian, which involves a square root, 
one can show that classically it has upper and lower bounds that do 
not involve a square root. One can then quantise these upper and lower 
bounds as quadratic forms on the Fock space which may serve as an 
approximation of the square root expression. We also mention some 
ideas of how to turn the quadratic forms into actual operators without 
using perturbative counter term methods.
 
In section \ref{s5} we summarise and conclude.

\section{Classical Hamiltonian formulation}
\label{s2}

Our description will be brief, more details can be found in 
\cite{9,11}.

\subsection{Constraint formulation}
\label{s2.1}

A foliation of the spacetime $D+1$ manifold $M\cong \mathbb{R}\times \sigma$
with a fixed $D$ manifold $\sigma$ into is encoded by lapse $N^0:=N$ 
and shift functions $N^a,\;a=1..D$ which define the timelike unit 
normal $n^\mu=N^{-1}[\delta^\mu_0-N^a \delta^\mu_n],\;\mu=0,..,D$ of 
the $t=const.$ hypersurfaces in the coordinates 
$x^0=t\in \mathbb{R}$ and $x^a$ on $\sigma$. One has the 
relation $g_{00}=-N^2+q_{ab} N^a N^b, g_{0a}=q_{ab}\;N^b,\; g_{ab}=q_{ab}$ 
between 
$N^\mu$ and the components of the spacetime metric tensor $g_{\mu\nu}$.
Let 
$\omega_\mu^A,\; A=1,..,d$ be a 
connection in a principal fibre bundle (we will shortly 
specialise it to the frame rotation bundle)
with $d-$dimensional gauge group.
The canonical Hamiltonian density is 
a linear combination of constraints $C_\mu,\; \mu=0,..,D$ and $G_A,\; 
A=1..d$ 
\be \label{2.1}
H_{{\rm can}}=v^\mu\; P_\mu+v^A P_A+N^\mu\; C_\mu-\omega_0^A\; G_A
\ee
called secondary Hamiltonian constraint (HC) $C_0$,  
spatial diffeomorphism constraint (SDC) $C_a$ and Gauss constraint $G_A$
respectively and of the primary constraints $P_\mu,P^A$ of momenta 
conjugate to $N^\mu,\omega_0^A$ which are multiplied by the velocities 
$v^\mu=\dot{N}^\mu,\; v^A=\dot{\omega}^A$ that one cannot solve for 
in the Legendre tranform of the Lagrangian. All constraints are first 
class.

The secondary constraints contain 
contributions from both geometry and matter. As nature contains fermions 
one is forced to consider $D+1$-Bein fields $e_\mu^I,\; I=0,..,D$ which 
constitute the spacetime metric $g_{\mu\nu}=\eta_{IJ} e^I_\mu e^J_\nu$ 
where $\eta$ is the Minkowski metric. This implies in particular Gauss 
constraints corresponding to the $D+1$ frame Lorentz group SO(1,D). 
One must remove  
the $D$ boost generators among those Gauss constraints 
by imposing the time gauge 
$e^\mu_0=n^\mu$ leaving $D(D+1)/2-D=D(D-1)/2$ constraints
for the frame rotation group SO(D) as the boost generators generate 
further secondary constraints on $e_\mu^I$ with respect to which they 
are second class.
We will therefore understand that only those rotation 
generators are included in (\ref{2.1}).

Then the gravitational sector is entirely described by $N^\mu$ and 
the D-Bein $e_a^j,\; j=1,..,D$ corresponding to $D+1+D^2=(D+1)^2-D$ fields,
together with the momenta $P_\mu,P^a_j$ conjugate to them
and we have $q_{ab}=\delta_{jk} e^j_a e^k_b$. Since the 
constraints $C_\mu,G_A$ do not depend on $N^\mu,\omega_0^A$ the primary 
constraints in fact have vanishing Poisson brackets with the secondary 
constraints. It follows that $N^\mu,\omega_0^A$ are pure gauge and 
we can reduce the phase space by fixing $N^\mu=f^\mu,-\omega_0^A=f^A$ 
to be arbitrary Lagrange multiplier functions. Then the stability of that 
gauge enforces $v^\mu=v^A=0$. Thus (\ref{2.1}) simplifies to a linear 
combination of secondary constraints
\be \label{2.2}
H_{{\rm can}}=f^\mu\; C_\mu+f^A\; G_A
\ee
The geometry contributions to the constraints reads explicitly
\ba \label{2.3}     
C_0 &=& [\det(q)]^{-1/2}\;[q_{ac}\;q_{bd}-[D-1]^{-1}q_{ab} q_{cd}]\;
p^{ab}\; p^{bc}-[\det(q)]^{1/2}\;(R[q]+\Lambda) 
\nonumber\\
C_a &=& P^b_j\;[e_b^j]_{,a}-[P^b_j e_a^j]_{,b}
\nonumber\\
G_{jk} &=& 2\; P^b_{[j} \delta_{k]l};e^l_b
\ea
where $\Lambda$ is a cosmological constant and
\be \label{2.4}
p^{ab}:=\frac{1}{2}\; P^{(a}_j e^{b)}_k \delta^{jk},\;e^a_j \; 
e_a^k:=\delta_j^k
\ee
$R[q]$ is the Ricci scalar of $q$. As $G_{jk}$ generates 
frame rotations, the quantities $q_{ab}, P^{ab}$ are rotation invariant.
Thus $G_{jk}$ has vanishing Poisson brackets with $C_0$ but not with 
$C_a$. It is therefore convenient to decompose (indices $j,k,l..$ are moved 
with $\delta_{jk}, \delta^{jk}$)
\be \label{2.5}
P^a_j=G_{jk}\; e^{ak}+2\;p^{ab} e_{bj}
\ee
and to rewrite $C_a$ in manifestly Gauss invariant form 
\be \label{2.6}
\hat{C}_a=C_a-[G_{jk} e^{bk} e^j_{b,a}-(G_{jk} e^{bk} e^j_a)_{,b}],\;
=P^{bc}\; q_{bc,a}-2(P^{bc}\; q_{ca})_{,b}
\ee
The non-vanishing Poisson brackets are (we set Newton's constant equal 
to unity) 
\be \label{2.7a}
\{P^a_j(x),e_b^k(y)\}=\delta(x,y)\;\delta^a_b\;\delta_j^k
\ee
In $D=3$ it is possible to reformulate the theory in terms of SU(2) 
connections \cite{23} by realising that 
\be \label{2.8a}
E^a_j:=[\det(q)]^{1/2}\; e^a_j,\;
A_a^j:=\Gamma_a^j-[\det(q)]^{-1/2}\;q_{ab} P^{bj}
\ee
defines a canonical transformation $(P,e)\mapsto (E,A)$ where 
\be \label{2.9a}
e^j_{a,b}-\Gamma^c_{ba} \; e_c^j+\epsilon_{jlk} \Gamma[e]^k_b\; e^l_a:=0
\ee
defines the spin connection $\Gamma_a^j$ of $e$ from the Levi-Civita 
connection $\Gamma^c_{ab}$ of $q$. In higher dimensions one can also 
arrive at a connection formulation \cite{24} but one has to work
with the hybrid connection of SO(D+1) and impose additional simplicity 
constraints. The non-vanishing Poisson brackets are then
\be \label{2.7}
\{E^a_j(x),A_b^k(y)\}=\delta(x,y)\;\delta^a_b\;\delta_j^k
\ee
In these variables the constraints are given in terms of the 
curvature $F=2(dA+A\wedge A)$ of $A$ by
\ba \label{2.8}
\hat{C}_0 &=& |\det(E)|^{-1/2} F_{ab}^j\;\epsilon_{jkl} E^a_k E^b_l
-2\;|\det(E)|^{1/2}\;R[q] 
\nonumber\\
\hat{C}_a &=& F_{ab}^j E^b_j
\nonumber\\
G_j &=& E^a_{j,a}+\epsilon_{jkl}\; A_a^k \; E^{bl}
\ea
where $C_0-\hat{C}_0$ differ by terms proportional to $G_j$ and 
$q_{ab}=|\det(E)| E^j_a E_{bj},\; E^a_j E_a^k=\delta_j^k$.

The HC $C_0$ or $\hat{C}_0$ as produced by the Legendre transform of 
the Lagrangian carries 
a natural density weight of unity however then it is evidently not polynomial
in the fields $P^a_j,e_a^j$ or $P^{ab}, q_{ab}$ or $E^a_j, A_a^j$. For 
purposes of Fock quantisation it is convenient to consider its polynomial 
version. This is obtained as follows: In the $P^a_j, e_a^j$ formulation
we have $[\det(q)]^{1/2}=|\det(e)|$. The Ricci scalar is given explicitly
by 
\ba \label{2.9}
R&=& q^{ac}\; \delta^c_d \; R_{abc}\;^d
=q^{ac}\; \delta^c_d \; 
(-2\partial_{[a} \Gamma^d_{b]c}+
2\Gamma^e_{c[a}\; \Gamma^d_{b]e}),\;
\nonumber\\
\Gamma^d_{ab}&=&q^{dc}\; \Gamma_{cab},\;
2\Gamma_{cab}=2q_{c(a,b)}-q_{ab,c}
\nonumber\\
q^{ab}&=&\det(q)^{-1}\; [(D-1)!]^{-1}\; 
\epsilon^{a c_1..c_{D-1}}
\epsilon^{b d_1..d_{D-1}} 
q_{c_1 d_1}..q_{c_{D-1} d_{D-1}}
\nonumber\\
\Gamma^d_{ab,e}&=&
q^{dc}_{,e} \Gamma_{cab}+q^{dc} \Gamma_{cab,e}
=-q^{dg}\; q_{gh,e}\; q^{hc} \Gamma_{cab}+q^{dc} \Gamma_{cab,e}
\ea
It follows that $\det(q)^2\; \Gamma^e_{ca}\Gamma^d_{be}$ 
is a monomial of degree $2D$
in $2D-2$ factors of $q$ and two factors of $\partial q$ 
while $\det(q)^2\; \Gamma^d_{bc,a}$
is a monomial of degree $2D$ consisting of two terms, 
the first term consisting of $2D-2$ factors of $q$ and two
factors of $\partial q$ while the second term consists of (2D-1) factors of 
$q$ and one factor of $\partial^2 q$. Due to the contraction 
of the Ricci tensor with the inverse 
metric we require one more factor of $\det(q)$. It follows that 
\be \label{2.10}
\hat{C}_0:=\det(q)^{5/2}\; C_0
\ee 
is a polynomial in the $(q,p)$ polarisation where the ``kinetic term'' 
containing two facors of $p^{ab}$ is of degree $2(D+1)$ in $q_{ab}$ 
while the Ricci term has degree $2D+D-1=3D-1$ in $q_{ab}$. For $D=3$ 
its top degree is ten. 

For the $(e,P)$ polarisation we make use of the fact that the 
Ricci tensor can be written as $R_{abjk}\; e^{aj} e^{bk}$ and 
$R_{jk}=2\;[d\Gamma+\Gamma\wedge \Gamma]_{jk}$ where 
$\det(e)\;\Gamma_{jk}$ is a homogeneous 
polynomial of degree $D$ with one factor
of $\partial e$ and $D-1$ factors of $e$. Since 
$|\det(e)|=\sqrt{\det(q)}$, by the same 
reasoning it is now sufficient to set 
\be \label{2.11}
\hat{C}_0:=\det(q)^{3/2}\; C_0
\ee 
whose kinetic term containing two factors of $P^a_j$ has $2(D+1)$ factors 
of $e$ while the Ricci term has $2D+2(D-1)=4D-2$ factors of $e$. Its 
top degree is also ten in $D=3$ (in fact it is homogeneous).

Finally in the $(A,E)$ polarisation which is only available in $D=3$
we have $R=R_{abjk} E^{aj} E^{bk} \det(q)^{-1}$ and again 
note that $\det(E)\Gamma_{jk}$ is a homogeneous polynomial 
of degree 3 with one factor of $\partial E$ and two factors of $E$. 
Since $\det(E)=\det(q)$ we must use (\ref{2.10}) again 
which contains eight facors of $E$ in the $F$ term and 
is a polynomial of degree $6+2=8$ in the curvature term. Its top degree 
is also ten.

To summarise in all three polariasations we can generate a polynomial 
form of the Hamiltonian constraint with minimal top degree ten in $D=3$ but 
it has density weight six in the $(q,p)$ and $(A,E)$ polarisation while 
it has density weight four in the $(e,P)$ polarisation.

\subsection{Master constraint formulation}
\label{s2.2}

Consider the smeared constraints 
\be \label{2.15}
K(f,u,m):=\int_\sigma\; d^Dx\; [f\; C_0+u^a\;C_a+m^A\; G_A]
\ee
They obey the following Poisson bracket relations
\be \label{2.16}
\{K(f,u,m),K(g,v,n)\}=-K(u[g]-v[f],[u,v]+q^{-1}(f\;dg-g\;df),[m,n]+u[n]-v[m])
\ee
where $[u,v]$ is the commutator of vector fields, $[m,n]$ the 
commutator of matrices, $df$ the exterior derivative
of a function and $u[f]=df[u]$ the contraction of a vector field with the 
exterior derivative. The r.h.s. of (\ref{2.16}) is a linear combination 
of smeared constraints, but the smearing functions are no longer constant 
functions on the phase space because of the appearance of $q^{ab}$. 
Thus the algebra of smeared constraints is no Lie algebra, also not 
in the polynomial version. 

This is often argued to cause 
trouble in the constraint quantisation using Fock spaces
due to the following reason: As we will see, the polynomial constraints 
can be quantised as densely defined quadratic forms in a Fock
representation and this is 
sufficient to state the quantum constraint equations and to solve them.
This requires to normal order the polynomial constraints. 
Furthermore, we show that there exists a mathematically 
meaningful procedure to define the commutator 
of quadratic forms even though the product of quadratic forms is ill-defined.
The result of the commutator calculation, if it exists, is then again a 
quadratic form which is necessarily normal ordered. However, the 
normal ordering of an expression of the form, say in $D=3$,  
\be \label{2.16a}
\{\hat{C}_0(f),\hat{C}_0(g)\}=
-
\int\;d^Dx\; \det(q)^3 q^{ab}(f g_{,b}-f_{,b} g)\; C_a  
\ee
is not be given by 
\be \label{2.17}
-\int\;d^Dx\; :C_a:\;\;: \det(q)^3 q^{ab}:\;\;(f g_{,b}-f_{,b} g)\;
\ee
Thus, if $l$ is a linear functional on the form domain $\cal D$
of the constraints 
such that $l[C_a(u^a)\psi)=0$
%:=\sum_b l(b)\;<b,C_a(u^a)\psi>=0$ 
for all vector fields $u$ and all 
$\psi\in {\cal D}$ 
then $l([C(u),C(v)]\psi)\not=0$ even if 
$[C(u),C(v)]$ is given by the normal ordering of (\ref{2.16a}), i.e. 
even if there is no anomaly in the constraint algebra. This,
in contrast to the situation in which the constraints are 
promoted to operators with dense invariant domain $\cal D$ 
is, however, not a contradiction
because $C(v)\psi$ is no longer in $\cal D$ so that $l(C(u)C(v)\psi)\not=0$
(in fact this expression diverges). This will be explained in more 
detail in the next section. 

We can avoid these complications using the master constraint method.
It comes in a minimal and a maximal version. In the minimal version
it is only applied to $C_0$
\be \label{2.18}
M:=\int\; d^Dx\; C_0\;\det(q)^{-1/2} C_0
\ee
To see that the single $M$ classically encodes the same information as 
the infinite number of $C_0(f)$ we note that $M=0$ iff $C_0(f)=0$ for all
$f$ thanks to the positivity of the integrand of (\ref{2.18}). Next
for a function $F$ on the phase space we have $\{F,\{F,M\}\}_{M=0}=0$ iff
$\{C(f),F\}_{M=0}=0$ for all $f$ i.e. $M$ defines the same weak Dirac
observables.  Since $C_0$ is a density of weight one it follows that 
$\{G(m),M\}=\{C_a(u^a),M\}=0$ and in the quantum theory one is reduced to 
check the algebra of the $G(M),C_a(u^a)$ for anomaly freeness
and that they commute with $M$, the commutator $[M,M]=0$ is now trivial. 
However,
the price to pay is that $M$ in contrast to $G(m),C_a(u^a)$ is not 
polynomial and thus it is unclear how to define $M$ as a quadratic form 
on the Fock space. Surprisingly, this is indeed possible as we will see
in later sections. 

Now instead of defining a solution by 
$l(\hat{C}_0(f)\psi=0$ for all $f,\psi$ one instead imposes 
$l(M\psi)=0$ for all $\psi$ where $M$ is normal ordered. Normal 
ordering keeps symmetry of a quadratic form but destroys positivity.
Note however that in an ordering such that $C_0$ is a symmetric 
operator we would formally 
have 
\be \label{2.19}
<\psi,M\psi>=\int\; d^Dx\; ||([\det(q)]^{-1/4} C_0)(x)\psi||^2
\ee
which is formally solved by those $\psi$ with $\hat{C}_0(f)\psi=0$ for all
$f$. Thus ``modulo ordering corrections'' the condition $l(M\psi)=0$ appears 
to be reasonable.\\
\\
In the maximal version we combine {\it all constraints} into a single 
master constraint. As then we do not need to check any algebra, we can use 
the constraints in polynomial form. That is one would set e.g.
\be \label{2.20}
M=\int\; d^Dx\;[\det(q)\;q^{ab}\; C_a\; C_b+\det(q)^5\;C_0^2+
\delta^{AB}\; G_A G_B]
\ee
and normal order.

\subsection{Reduced phase space formulation}
\label{s2.3}

In this case one removes the constraints classically using 
gauge fixing. This has the usual global reservations such as Gribov copies.
It is in general true that constraining and quantisation do not commute 
thus we consider this as an alternative route which is only classically 
locally equivalent to the constraint method. 

As for the non-gravitational Gauss constraints 
we can use one of the standard gauges such as axial gauge or Coulomb gauge.
For the gravitational Gauss constraint we can use the upper triangular 
gauge on the D-bein \cite{25}. That is, one can solve the relation
\be \label{2.21}
q_{ab}=e_a^j\delta_{jk} e^k_b
\ee
uniquely and algebraically for $e^j_a=u^j_a,\; 
u^j_a=0,\;a>j; [u^a_j]_{j=a}>0$. Here $j=a$ refers to a choice of coordinate 
system in which we label coordinates by $a=1,..,D$ as well as the internal 
indices $j=1,..,D$ and one can state sufficient conditions under which this 
is compatible with the atlas of $\sigma$. 
We consider $q_{ab}=u^j_a\;u_{bj}$ as a function of of $u_a^j$ and solve 
the Gauss constraints $G_{jk}=0$ algebraically for $P^a_j,\; a>j$. The
reduced phase is then coordinatised by $Q^a_j:=P^a_j, u_a^j,\; a\le j$. 
Equivalently
we invert $q_{ab}=q_{ab}[u]$ for $u_a^j=u_a^j[q]$ and pull 
back the symplectic structure
\be \label{2.22}
\Theta=\int\;d^Dx\; \sum_{a\le j} Q^a_j\;[\delta u_a^j]
=\int\; P^{ab} \;[\delta q_{ab}] 
\ee
where $P^{ab}=\frac{1}{2}\; P^{(a}_j e^{b)j}$ at $e_a^j=u_a^j,\; G_{jk}=0$
(if the fermionic current is not vanishing one must add the fermionic 
contribution to $\Theta$ to see this \cite{25}).

Thus one may choose to work either with $q_{ab}, P^{ab}$ or $u_a^j, Q^a_j$.
Next one reduces the constraints $C_\mu$. This may be done in various ways 
depending on the matter content. To have a concrete example in mind we 
consider the case of $D+1$ minimally coupled, massless scalar fields 
$\phi^I,\; I=0,..,D$ with 
conjugate momentum $\pi_I$ which contribute
\be \label{2.23}
\pi_I \phi^I_{,a},\;\;\;\; 
\frac{1}{2}\;[\delta^{IJ} \pi_I\pi_J[\det(q)]^{-1/2}
+\delta_{IJ} \phi^I_{,a}\phi^J_{,b}[\det(q)]^{1/2}]
\ee
to $C_a,C_0$ respectively. We fix the coordinate gauge freedom 
\be \label{2.24}
G^I=\phi^I-L^{-1}\delta^I_\mu x^\mu=0,\;\phi^I_\ast=L^{-1} x^I
\ee
where we have introduced a length scale $L$ for dimensional reasons ($\phi^I$
is dimension free while coordinates have length dimension). We then 
solve the constraints for $\pi_I$ which yields
\be \label{2.23a}
\pi^\ast_a=-L\;C'_a,\;
\pi_0^\ast=\sqrt{-2[\det(q)]^{1/2}\tilde{C}},\;
\tilde{C}=C'+\frac{1}{2}(L^2\;\delta^{ab} C'_a C'_b/\sqrt{\det(q)}
\;+L^{-2} q^{ab} \delta_{ab}\sqrt{\det(q)})  
\ee
where $C',C'_a$ are HC and SDC without the scalar contribution and 
we have chosen the positive root for $\pi_0^\ast$. The 
stability of the gauge condition under gauge transformations
\be \label{2.24a}
G^I_{,0}+\{C_\mu(f^\mu),G^I\}=-L^{-1} \delta^I_0+L^{-1}\;f^a \delta^I_a
+f^0\;[\det(q)]^{-1/2} \pi_I=0
\ee
gives at $\phi=\phi_\ast, \pi=\pi^\ast$ with $Q=\sqrt{\det(q)}$
\be \label{2.25}
f^0_\ast=L^{-1}\; Q \; [\pi_0^\ast]^{-1},\;
f^a_\ast=-L\;f^0_\ast\; \pi_a\ast\; Q^{-1}=-\pi_a^\ast/\pi_0^\ast
\ee
The physical Hamiltonian is defined for a phase space function $F$ independent
of  the gauge degrees of freedom $\phi,\pi$ by 
\be \label{2.26}
\{H,F\}=\{C_\mu(f^\mu),F\}_{\phi-\phi_\ast=\pi-\pi^\ast=f-f_\ast=0}
\ee
It drives the dynamics of the ``true degrees of 
freedom'' i.e. the non-scalar fields and working out 
(\ref{2.26}) we find that it is explicitly given by
\be \label{2.27}
H=-L^{-1}\;\int\; d^Dx\; \pi_0^\ast
\ee
~\\
In quantising the reduced phase space we only quantise the non-gauge fixed 
matter fields and the gravitational degrees of freedom $q_{ab}, P^{ab}$ 
and we are no longer interested in the constraints but seek to implement 
the Hamiltonian $H$ at least as a quadratic form. To do this, the square 
root in (\ref{2.24}) is problematic because we can quantise only integrals 
over $\tilde{C}^n\;Q^m$ 
with $Q:=\det(q)^{1/2}$ and $n\in \mathbb{N}_0,\;m\in \mathbb{Z}_0$
as quadratic forms as we will see. 
To deal with this problem we offer 
the following preliminary solution: We note that the expression 
$-\tilde{C}$ 
under the square root is constrained to be positive in the classical 
theory. Consider any function $\Lambda_0>0$ and write 
\be \label{2.28}
-L^{-1}\sqrt{-2Q\tilde{C}}
=-\sqrt{2}\; L^{-1}Q\sqrt{\Lambda_0}\;
[1-(1+\frac{\tilde{C}}{\Lambda_0 Q})]^{1/2}
\ee
We have 
$y:=1+\frac{\tilde{C}}{Q\Lambda_0}\le 1$ and for such $y$ 
we have the estimate 
\be \label{2.29}
1-\frac{y}{2}-\frac{y^2}{2}\le \sqrt{1-y}\le 1-\frac{y}{2}
\ee
To see this note that for $y\le 1$ the claimed upper bound is positive 
so the claimed upper inequality is equivalent to $1-y\le (1-y/2)^2$ which is 
identically satisfied even for all $y$. 
The claimed lower bound can be written 
as $\frac{1}{2}(1-y)(2+y)$ which for $y\le 1$ is positive for $y\ge -2$.
Thus for $-2\le y\le 1$ the claimed lower inequality is equivalent to 
$(1-y)^2\;(2+y)^2\le 4\;(1-y)$ which holds for $y=1$ and for $y<1$ 
is eqivalent to $(1-y)\;(2+y)^2-4=-y^2(3+y)\le 0$ which indeed holds for 
$-2\le y\le 1$. Thus the claimed lower bound holds 
when the lower bound is 
positive and thus also for all other $y<1$ as $\sqrt{1-y}$ is not negative.
Of course, sharper bounds can be obtained by taking the 
higher order Taylor expansion of $\sqrt{1-y}$ in terms of 
$y$ into account.  

Accordingly, we can approximate the square root by an interpolating value 
between upper and lower bound
\be \label{2.30}
\overline{\sqrt{1-y}}=1-y/2-k\; y^2/2,\;\;k\in [0,1]
\ee
Accordingly we could approximate the Hamiltonian by
\be \label{2.31}
H_k=\sqrt{2\Lambda_0}\; L^{-1}\;\int\;d^Dx\;Q\;\sqrt{2\Lambda_0}\; L^{-1}\;
[y/2+k\;y^2/2-1]
\ee
and for any $0\le k\le $ 
the approximant $H_k$ is in between the classically allowed extrema.
Alternatively we could truncate the Taylor expansion 
of $-\sqrt{1-y}$ at some order $N$.

While upper and lower polynomial bounds are good to know an exact expression 
that avoids the square root would be benefitial. Such an expression is 
given by
\be \label{2.32}
\sqrt{z}=z\;\int_\mathbb{R}\; \frac{ds}{\sqrt{2\pi}}\;e^{-\frac{s^2}{2}\;z}
\ee
which however is no longer polynomial. It would require to be able 
to quantise $Q\; z\; e^{-s^2 z/2},\;z=-\frac{\tilde{C}}{Q}$ as a quadratic 
form which is not known to be possible. 

In lack of a better option we treat $k$ or the truncation 
order $N$ as a phenomenological 
parameter that serves to approximate the exact expression. We therefore 
have to make sure that the $k$ dependent or higher than order $N$ terms 
indeed are subleading 
by providing a choice for $\Lambda_0$. To that extent, 
we will see in section \ref{s4}
that we may actually 
define $\frac{\tilde{C}}{Q}$ as a quadratic form in Fock representations 
with dense form domain 
defined by the span of the excitations of a coherent state $\Omega_Z$ where 
$Z$ denotes a point in the classical phase space such that the 
corresponding D-metric is positive definite. We thus pick
\be \label{2.36a}
\Lambda_0:=-<\Omega_Z, \frac{\tilde{C}}{Q}\Omega_Z>
\ee
and quantise (\ref{2.31}) by normal ordering. 
If $\Omega_Z$
is a coherent state peaked on a spatially very homogeneous universe 
then $\Lambda_0$ is expected to be almost a constant on sufficiently 
large scales. It is expected that the fluctutation factors 
$\Delta:=1+\tilde{C}/(\Lambda_0 Q)=1-
\frac{\tilde{C}}{Q}\;[<\frac{\tilde{C}}{Q}>]^{-1}$
contribute subdominantly in $:Q\; \Delta^n:$ where $:.:$ denotes 
normal ordering. See the end of section \ref{s4.3} for more details.\\
\\
In any case, the point of this discussion is that the objects 
\be \label{2.36b}
\tilde{C}^n\; Q^{n-1}
\ee
can be quantised non-perturbatively as quadratic forms on the Fock space.
The expansion in powers of $n$ is not to be considered as a perturbative 
expansion but rather is similar to the relativistic expansion
$p_0=-\sqrt{m^2+\vec{p}^2}=m+\frac{\vec{p}^2}{2m}+..$ of the mass shell
condition $-p_0^2+\vec{p}^2+m^2=0$ with the difference 
that the square root is not expanded at $m$ but rather $0$. This is because 
we have not solved for the analog of $p_0$ which would be the negative 
gravitational mode but rather for the analog of 
$p_D=-\sqrt{-[-p_0^2+\vec{p}_\perp^2+m^2]}$ where 
$\vec{p}=(\vec{p}_\perp,p_D)$ and the analog of $m$ is the cosmological 
constant term. This cannot 
be avoided: It would have been more natural to solve the constraint 
$C=0$ for the negative conformal gravitational mode. But this is not possible 
algebraically as the gravitational momenta, including the 
conformal mode, appear in $C_a$ with derivatives.

\subsection{Fock representations}
\label{s2.4}

Consider an arbitary background metric $g$ of Euclidian signature 
on $\sigma$. From it we can construct the Laplacian 
$\Delta=g^{ab}\nabla_a\nabla_b$ where $\nabla_a$ is the torsion free 
covariant derivative compatible with $g$. The scalar one particle 
Hilbert space is defined by $L_2\equiv L_2(\sigma,\sqrt{\det(g)}\;d^Dx)$
and $\Delta$ is either essentially self-adjoint or at least has self-adjoint 
extensions of which we pick one.  
Let $\kappa$ be an invertible (in the sense of the spectral theorem) 
function of $\Delta$ and consider canonically conjugate tensor fields
$q^{a_1..a_A}_{b_1..b_B},\; p_{a_1..a_A}^{b_1..b_B}$ of dual type 
$(A,B,w=0)$ and $(B,A,w=1)$ respectively where $w$ is the density weight.
We consider the annihilators 
\be \label{2.37}
A_{c_1...c_{A+B}}:=2^{-1/2}\;
[\kappa\;
\;g_{c_1 a_1}..g_{c_A a_A}\; q^{a_1..a_A}_{c_{A+1}..c_{A+B}}
-i \kappa^{-1}\;\omega^{-1}
g_{c_{A+1} b_1}..g_{c_{A+B} b_B}\;p_{c_1..c_A}^{b_1..b_B}
\ee
where 
\be \label{2.38a}
\omega:=[\det(g)]^{1/2}
\ee
is invoked such that $A$ is a background tensor of type $(0,A+B,0)$.
 
The annihilators are smeared with density weight zero test functions 
$f^{a_1..a_{A+B}}$ of type $(A+B,0,0)$ relative to the volume form 
$\omega \; d^Dx$
\be \label{2.38}
A(f):=<f,A>_{L_2}:=\int\; d^Dx\;\omega\; [f^{c_1..c_{A+B}}]^\ast\;
A_{a_1..a_{A+B}}=2^{-1/2}\;[<\kappa f,q>-i<\omega^{-1}\;\kappa^{-1} f,p>]
\ee
where indices are pulled with $g$. Then using
\be \label{2.39}
[p_{a_1..a_A}^{b_1..b_B}(x),\;q^{a'_1..a'_A}_{b'_1..b'_B}(x')]
=i 
\delta_{a_1}^{a'_1}..\delta_{a_A}^{a'_A}
\delta^{b_1}_{b'_1}..\delta^{b_B}_{b'_B}\;
\delta(x,x')
\ee
where $\int\; d^Dx \delta(x,y) f(y):=f(x)$ we find 
\be \label{2.39a}
[A(f),A(g)^\ast]=\frac{1}{2}\;[
<\kappa f,\kappa^{-1} g>
+<\kappa^{-1} f,\kappa g>]
=<f,g>
\ee
where self-adjointness of $\kappa=\kappa(\Delta)$ was used: We have
with a multi-index $\mu=(a_1..a_C)$ and recalling that $\nabla_a$ acts 
on densities of weight $w$ by a correction 
$-w\;\Gamma^b_{ab}=-[\ln(\omega^w)]_{,a}$ so that $\nabla_a \tilde{v}^a=
\partial_a \tilde{v}^a$ for vector densities of weight one
\ba \label{2.40}
<f,\Delta g> &=&
\int\; d^Dx\; \omega\; f_\mu^\ast \nabla_a(\nabla^a g^\mu)
=\int\; d^Dx\; f_mu^\ast \nabla_a(\nabla^a \omega g^\mu)
\nonumber\\
&=& \int\; d^Dx\; 
[\nabla_a(f_\mu^\ast\nabla^a \omega g^\mu)
-(\nabla_a f_\mu^\ast)\; (\nabla^a \omega g^\mu)]
\nonumber\\
&=& -\int\; d^Dx\; 
(\nabla_a \omega f_\mu^\ast)\; (\nabla^a g^\mu)
\nonumber\\
&=& -\int\; d^Dx\; 
[\nabla_a( \nabla^a \omega f_\mu^\ast)\; g^\mu)
-(\nabla_a (\nabla_a \omega f_\mu^\ast))\;g^\mu]
\nonumber\\
&=& <\Delta f,g>
\ea
assuming that the integrations by parts result in no boundary terms.

Fermions are naturally densities of weight 1/2 and for each Weyl fermion 
component $\rho_I$ we have $\rho_I^\ast$ as its anti-conjugate 
\be \label{2.41} 
[\rho_I(x),\rho_{I'}^\ast(x')]_+=\delta_{II'}\;\delta(x,x')
\ee
We thus define the annihilator to be $B_I=\omega^{-1/2}\rho_I$.

The joint Fock vacuum is then defined by $A_\mu(x)\Omega=B_I(x)\Omega=0$.

\section{Polynomial SDC and HC as quadratic forms on Fock space} 
\label{s3}

We first define the constraints as quadratic forms on the Fock space.
Then we make some general remarks on the solution of constraints 
which are merely quadratic forms but not operators as well as 
the possibility of a commutator algebra of quadratic forms. We then 
give details for the case at hand when a Fock structure is available. 

\subsection{Polynomial constraint quadratic forms on Fock space}
\label{s3.1} 

Apart from algebraic work, there is not much to do. We have 
already shown that in section \ref{s2} that 
both the SDC and HC can be written in polynomial form by multiplying the 
HC by a sufficiently high power of $\sqrt{\det(q)}$. Now for any polynonomial 
\be \label{3.1}
P(x)=P(q(x),p(x),[\partial q](x),[\partial p](x),
\rho(x),\rho^\ast(x),[\partial \rho](x),[\partial \rho^\ast](x))
\ee
of the fields and its first spatial derivatives where $(q,p)$ 
and $(\rho,\rho^\ast)$ denotes the 
collection of all bosonic and fermionic conjugate pairs respectively, we 
decompose $q,p$ into annihilation and creation operators by inverting  
(\ref{2.37})
\be \label{3.2}
q=2^{-1/2}\;
[\otimes^A\; g^{-1}]\;\otimes\; [\otimes^B\; 1_D]\;\kappa^{-1}\cdot[A+A^\ast],
\;\;
p=i\; 2^{-1/2}\;
\omega[\otimes^A\; 1_D]\;\otimes\; 
[\otimes^B\; g]\;\kappa\;\cdot\;[A-A^\ast]
\ee
while $\rho=B, \rho^\ast=B^\ast$ are already annihilation and creation 
operators. Then in every monomial we order all occurrences of $A^\ast$ 
left of all occurrences of $A$ and 
all occurrences of $B^\ast$ 
left of all occurrences of $B$. For the bosonic and 
fermionic sector we multiply by $+1$ and $-1$
for any interchange of the  $A,A^\ast$ and of the 
$B,B^\ast$ respectively in this process.

The reordered polynomial is denoted by $:P(x):$ and given a definition 
of annihilation and creation operators the ordering of $:P(x):$ is unique.
For any $x\in \sigma$, $:P(x):$ is a densely defined quadratic form on the 
Fock space $\cal H$ where the dense form domain $\cal D$ is the finite 
linear span of the Fock states
\be \label{3.3}
A(f_1)^\ast\;..\;A(f_M)^\ast\;B(g_1)^\ast\;..\;B(g_N)^\ast\; \Omega
\ee
where $M,N\in \mathbb{N}_0$ are the bosonic and fermionic 
particle numbers respectively and $f_k,g_l$ are test functions (typically 
from the Schwartz function class ${\cal S}(\sigma)$ of smooth functions 
of rapid decrease at $\partial\sigma$ if $\sigma$ is not bounded or 
has a boundary). To see this we decompose $P$ into monomials $Q$ with 
say $K_B,K_F$ creators and $L_B,L_F$ annihilators (the subscripts 
denote bosonic and fermionic respectively) and consider 
\be \label{3.4}
<\psi,\;:Q(x):\; \psi'>_{{\cal H}}
\ee
where $\psi,\psi'$ are of the form (\ref{3.3}) with particle numbers 
$M,N$ and $M',N'$ respectively and let all annihilators and 
creators respectively act on $\psi',\psi$ respectively. The result is
zero if either of $K_B > M, K_F>N, L_B>M', L_F>N'$ or 
of the form 
\be \label{3.5}
<\psi_1,\psi_1'>\;h(x)
\ee
where the particle number of $\psi_1,\psi_1'$ respectively is given by 
$(M-K_B, N-K_F),\;(M'-L_B, N'-L_F)$ 
and $h$ is a test function again (a product of $K_B+K_F+L_B+L_F$ test 
functions involved in $\psi,\psi'$ or their derivatives convoluted 
with $g,g^{-1},\kappa,\kappa^{-1}$).  

Now the smeared, polynomial, normal ordered constraints are given 
by 
\be \label{3.6}
:C(f):=\int\; d^Dx\; f^\mu(x)\;:C_\mu(x):
\ee
and thus are symmetric (i.e. 
$<\psi,\;:C(f):\;\psi'>^\ast=<\psi',\;:C(f):\;\psi>$)
quadratic forms on the Fock space with dense form 
domain $\cal D$ because for any test function $h$ the integral
$\int\; d^Dx\; f^\mu\; h$ exists.

\subsection{Solution of the constraints as quadratic forms, constraint
algebra and solutions}
\label{s3.2}

The ususal approach to constraint quantisation is as follows \cite{26}:     
Suppose that some constraints $Z_I,\;I\in {\cal I}$ some index set 
are defined {\it as operators} on a dense invariant domain
${\cal D}$ of the Hilbert space ${\cal H}$, that is, $Z_I {\cal D}\subset 
{\cal D}$. A linear functional $l$ on $\cal D$ is called a solution 
iff $l[Z_I \psi]=0$ for all $I\in {\cal I},\;\psi\in {\cal D}$. Note 
that $Z_I\psi\in {\cal D}$ so that $l$ defined on ${\cal D}$ is defined
on $Z_I \psi$. 

Now let $b_n$ be an orthonormal basis of $\cal H$ 
constructed from elements of $\cal D$ (in the separable case we 
obtain this by using the Gram-Schmidt process, in the non-separable case 
this involves the axiom of choice. Fock spaces are separable). 
Then for an operator $Z_I$ densely 
defined on invariant $\cal D$ we have 
$Z_I \psi\in {\cal D}\subset {\cal H}$, thus 
\be \label{3.7a}
Z_I \psi=\sum_n b_n\; <b_n,Z_I\psi>,\;\; 
\ee
and
\be \label{3.7b}
||Z_I \psi||^2=\sum_n\;|<b_n,Z_I \psi>|^2
\ee
converges. A solution $l$ is now determined by the coefficients  
\be \label{3.8}
l_n:=l[b_n]
\ee
and can be written as the linear functional
\be \label{3.9}
l:=\sum_n\; l_n\; <b_n,.>_{{\cal H}}
\ee
It follows that $l$ is a solution iff
\be \label{3.10}
\sum_n\; l_n\; <b_n,Z_I\psi>=0
\ee
for all $I\in {\cal I}, \psi\in {\cal D}$ which is now an infinite system
of linear equations on the coefficients $l_n$ with no growth condition 
on $l_n$ as $n$ varies.\\
~\\
Now suppose that the constraints $Z_I$ are not operators on 
$\cal D$ but merely quadratic forms. Then $Z_I \psi$ is ill-defined
as a vector in the Hilbert space, we 
can only compute matrix elements $<\psi,Z_I\psi'>$. In particular, while 
we may try to use (\ref{3.7a}) as a formal definition, formula (\ref{3.7b})
now diverges. However, the condition (\ref{3.10}) is still meaningful!
That is, to state (\ref{3.10}) it is sufficient that $Z_I$ is a quadratic 
form on $\cal D$ and that we know the coefficients (\ref{3.8}) of $l$ with 
respect to an ONB $b_n\in {\cal D}$. We thus require $l$ to be a linear 
functional not only on $\cal D$ but also on the formal range of $Z_I$ 
on $\cal D$ defined by (\ref{3.7a}). We may consider $Z_I\psi$ as an 
anti-linear functional on $\cal D$ by 
\be \label{3.11}
\psi'\mapsto\; (Z\psi)[\psi']:=<\psi',Z_I\psi>
\ee
which is well-defined and then $l$ acts on $Z_I\psi$ by 
\be \label{3.12}
l[Z_I\psi]:=\sum_n\; l[b_n]\;(Z\psi)[b_n]
\ee
This can converge even absolutely if the $l[b_n]$ decay sufficiently much faster
than the $(Z_I\psi)[b_n]$ as $n$ varies. 

Assuming that the solution space consists not only of the trivial solution 
$l\equiv 0$, pick $l_0\not=0$. Then we may try to define an inner product,
at least in the separable case, on the space $L$ of solutions by 
\be \label{3.13}
<l,l'>_L:=\lim_M\;
\frac{\sum_{n\le M}\;l[b_n]^\ast\;l'[b_n]}{\sum_{n\le M}\;
|l_0[b_n]|^2} 
\ee
where we assumed w.l.g. that $b_n$ is labelled by $n\in \mathbb{N}_0$. 

Finally suppose that the constraints originate from functions on the 
phase space which we denote by the same symbol $Z_I$ and that form a 
Poisson Lie algebra
\be \label{3.14}
\{Z_I,Z_J\}=f_{IJ}\;^K\;\; Z_K
\ee
with structure constants $f_{IJ}\;^K$.
If the $Z_I$ are operators with dense invariant domain $\cal D$ then
for $\psi\in {\cal D}$ we have $Z_I\psi\in {\cal D}$ so that 
$[Z_I,Z_J]\psi\in {\cal D}$ and thus we define the quantum constraint 
algebra to be anomaly free iff
\be \label{3.15}
[Z_I,Z_J]\psi=i\;f_{IJ}\;^K\;\; Z_K\psi
\ee
for all $I,J\in {\cal I},\;\psi\in {\cal D}$. If the $Z_I$ are merely 
quadratic forms then $Z_I\psi$ is meaningless and while $<\psi,Z_I\psi'>$ 
is well defined, again $<\psi,Z_I Z_J\psi'>$ is meaningless. However,
let 
\be \label{3.16}
P_M:=\sum_{n\le M}\; b_n\;<b_n.>_{{\cal H}}
\ee
be the projection onto ${\cal H}_M$, the 
completion of the linear span of the $b_n,\;n\le M$, and suppose 
that for each $M<\infty$ the numbers
\be \label{3.17}
<\psi,Z_I\; P_M Z_J\psi'>
:=\sum_{n\le M}\; <\psi,Z_I\;b_n>\;
<b_n,Z_J\;\psi'>
\ee
exist. Then due to cancellation effects the weak limit 
\be \label{3.18}
<\psi,[Z_I,Z_J]\psi'>:=\lim_M\;  
[<\psi,Z_I\; P_M Z_J\psi'> - <\psi,Z_J\; P_M Z_I\psi'>]
\ee
may exist. In this case we have a {\it commutator algebra of quadratic 
forms} and we define it to be free of anomalies iff 
\be \label{3.19}
<\psi,[Z_I,Z_J]\psi'>=i\;f_{IJ}\;^K\;\; <\psi,Z_K\psi'>
\ee
for all $I,J\in {\cal I},\;\psi,\psi'\in {\cal D}$
i.e. in the sense of quadratic forms which generalises (\ref{3.15}).

In case that the classical constraints $Z_I$ close as in (\ref{3.14})
but with non-constant structure functions $f_{IJ}\;^K$ the condition 
(\ref{3.15}) (for the case that $Z_I$ are operators with dense invariant
domain $\cal D$) 
is ill-defined as it stands because now $f_{IJ}\;^K$ is 
itself operator valued so that ordering issues arise. For instance, 
if $Z_I,Z_J,f_{IJ}\;^K$ are symmetric operators then an ordering 
consistent with the adjointness relations would be 
\be \label{3.20}
[Z_I,Z_J]\psi=\frac{i}{2}\;(
f_{IJ}\;^K\;\; Z_K+Z_K\;f_{IJ}\;^K)\;\psi
\ee
Let now $l$ be a solution of the constraints in the sense 
of $l[Z_I\psi]=0$ for all $I\in {\cal I},\psi\in {\cal D}$ then as $Z_I\psi\in 
{\cal D}$ we would have the identity
\be \label{3.21}
l[[Z_I,Z_J]\psi]=0=\frac{i}{2}\;
l[(f_{IJ}\;^K\;\; Z_K+Z_K\;f_{IJ}\;^K)\;\psi]
\ee
for all $I,J,\psi$. If moreover $f_{IJ}\;^K\;\psi\in {\cal D}$ for all
$I,J,K,\psi$ then 
\be \label{3.21a}
0=\frac{i}{2}\;l[[f_{IJ}\;^K,Z_K]\;\psi]
\ee
for all $I,J,\psi$. Thus $l$ vanishes not only on all the $Z_I$ but also
on all of the $Z_{IJ}=i[f_{IJ}\;^K,Z_K]$ which correspond to the classical 
Poisson brackets $\{f_{IJ}\;^K,Z_K\}$ which are not required to vanish 
in the classical theory. This can be iterated and can result in a space 
of solutions which is too small \cite{27}. 
Accordingly, in case of structure functions 
it is not sufficient to have e.g. the property (\ref{3.20}), one must make sure 
that the space of solutions is large enough. This may may be achieved 
for non-symmetric constraint operators \cite{16} or non-symmetric 
structure functions or for the case 
that above assumption $f_{IJ}\;^K\psi\in {\cal D}$ fails.

Now consider the case 
that $Z_I$ are only quadratic forms. Then the above obstruction 
argument fails because 
we do not have the property $Z_I\psi\in {\cal D}$ so that we cannot conclude 
from $l[Z_I\psi]=0$ that also $l[Z_I\;Z_J\psi]=0$. Hence in this case 
there is no immediate conclusion about the size of the space of solutions 
to be drawn even if the $Z_I$ are symmetric quadratic forms. This 
is because of (\ref{3.18}) which says that while  
$<\psi,Z_I Z_J\psi'>, \;<\psi,Z_J Z_I\psi'>$ 
are ill defined (divergent) the difference between
$<\psi,Z_I P_M Z_J\psi'>$ and 
$<\psi,Z_J P_M Z_I\psi'>$ converges weakly as $M\to\infty$. Thus while we have
\be \label{3.21b} 
l[Z_I\psi]=\lim_{M'} \sum_{n'\le M'}\; l_{n'}\; <b_{n'},Z_I\psi>=0
\ee
we have 
\be \label{3.22} 
l[[Z_I,Z_J]\psi] 
:=\lim_{M'} \sum_{n'\le M'}\; l_{n'} \; <b_{n'},[Z_I,Z_J]\psi>
;=\lim_{M'} \sum_{n'\le M'}\; l_{n'} \; \lim_M\; 
\{<b_{n'}, Z_I P_M Z_J\psi> - <b_n, Z_J P_M Z_I\psi>\}
\ee
If $P_M Z_J \psi\in {\cal H}_M$ and 
$P_M Z_I \psi\in {\cal H}_M$ would converge strongly 
to some $\psi_J,\psi_I\in {\cal D}$ then
\be \label{3.23} 
l[[Z_I,Z_J]\psi] 
=\lim_{M'}\sum_{n'\le M'}\; l_{n'} \; 
\{<b_{n'}, Z_I \psi_J> - <b_{n'}, Z_J \psi_I>\}=l[Z_I\psi_J]-l[Z_J\psi_I]=0
\ee
However this is not the case, otherwise $Z_I,Z_J$ would be operators. 
Alternatively note that 
\be \label{3.24}
l[[Z_I,Z_J] \psi]=\lim_{M'}\;\lim_M \sum_{n'\in \le M'} \; l_{n'} 
\sum_{n\le M}\;
\{<b_{n'},Z_I\;b_n>\;<b_n, Z_J \psi> 
-<b_{n'},Z_J\;b_n>\;<b_n, Z_I \psi>\}
\ee
If we could interchange the limits $M,M'\to \infty$ then (\ref{3.24}) 
would indeed vanish. But such interchange of limits is typically 
only allowed when the double sum is absolutely 
convergent (Fubini type of theorem) which is not the case as 
the existence of $l[Z_I\psi]=0$ is crucially sensitive on the phases of 
the complex numbers $l_n$.

Note that if $[Z_I,Z_J]$ exists as a quadratic form and equals 
some ordered version of the classical expression 
$f_{IJ}\;^K\; Z_K$, that quantum version will 
be normal ordered in the case of the Fock quantisation. However 
then it will certainly not be ordered as $Z_K f_{IJ}\;^K$ and thus 
$l[[Z_I,Z_J]\psi]\not=0$ even if $f_{IJ}\;^K$ has $\cal D$ as invariant 
domain. 

This shows that the case of non-trivial structure functions 
comes with extra complications and motivates to recast the algebra of 
constraints into a form in which the structure functions are replaced 
by structure constants. Locally this is always possible because 
one can in principle solve the constraints for some of the momenta
and in this form the Poisson algebra of constraints is Abelian. Another 
possibility is the minimal or maximal 
master constraint approach discusssed in the 
previous section.
In Abelianised version and the minimal master constraint version
the constraints are no longer polynomial which therefore calls for new 
tools developed in section \ref{s4}. The maximal master constraint version
allows for a polynomial formulation and therefore  
fits into the present general discussion. The classical 
maximal master constraint is constructed from the $Z_I$ as 
\be \label{3.25}
M=\frac{1}{2}\sum_{I,J}\; Z_I\; Q^{IJ}\; Z_J
\ee
where $(Q^{IJ})_{I,J\in {\cal I}}$ is a positive definite matrix which 
may be a non-constant function on the classical phase space. One may 
certainly choose $Q^{IJ}, Z_I$ as polynomials so that $M$ is a polynomial
again and thus may be formulated as quadratic form in Fock representations 
by normal ordering.

One may criticise that the master constraint (\ref{3.25}) is ``blind'' for 
possible anomalies in the algebra or that it does not allow to check 
whether the classical algebra is represented in the quantum theory. However,
consider in the classical theory phase space functions $G^I$, so called gauge 
fixing conditions, such 
that $\Delta_I\;^J:=
\{Z_I,G^J\}$ is non-degenrate close to the constraint surface 
$Z_I=0,\; I\in {\cal I}$. Then 
\be \label{3.26}
\{M,G^J\}\; [\Delta^{-1}]_J\;^M\;
Q_{MI}=: \{M,G^J\}\;\hat{Q}_{JI}=Z_I +O(Z^2),\;\;Q_{IM}\;Q^{MJ}
=\delta_I\;^J
\ee
so that $Z_I$ maybe recovered from $M$ using $G^I$. Then 
\be \label{3.27}
\{Z_I,Z_J\}
=f_{IJ}\;^K\; Z_K
=f_{IJ}\;^M\; \hat{Q}_{KM} \{M,G^K\}+O(Z^2)
=\hat{Q}_{KI}\;\hat{Q}_{LJ}\;
\{\{M,G^K\},\{M,G^L\}\}+O(Z^2)
\ee
i.e.
\be \label{3.28}
\{\{M,G^I\},\{M,G^J\}\}=\hat{f}^{IJ}\;_K\;\{M,G^K\} +O(Z^2),\;
\hat{f}^{IJ}\;_K=f_{MN}\;^L \; Q^{IM}\;Q^{JN}\;Q_{KL},\;\;
\ee
so that the $Z^I:=\{M,G^I\}$ also close with structure functions, 
albeit in general with non polynomial ones. Thus 
classically $M$ is truly equivalent to the set $\{Z_I\}_{I\in {\cal I}}$
and in the quantum theory one could in principle
try to consider the constraint algebra 
defined by the $[M,G^I]$. 

Of course the whole point of the master constraint 
is to circumvent that constraint algebra with structure functions and one    
expects that unitarily inequivalent representations of the CCR and CAR 
supporting $M$ as a quadratic form differ in the ``gap number'' 
\be \label{3.29}
M_0:={\rm inf}_{0\not=\psi \in {\cal D}}\;\frac{<\psi,M\psi>}{||\psi||}
\ee
For instance, in Fock representations the classical positivity 
is destroyed in the normal ordering process while $M$ is a symmetric 
quadratic form. One therefore expects that in Fock representations
the expectation value of $M$ w.r.t. $\psi\in {\cal D}$ takes full range in 
$\mathbb{R}$, i.e. $M_0=-\infty$ so that solutions $l[M\psi]=0$ for all 
$\psi\in {\cal D}$ should exist. In that case a systematic procedure 
consists in solving the infinite system 
\be \label{3.30}
K_m:=\sum_n\; l_n M_{n,m}=0;\;\;M_{n,m}=<b_n,M b_m>
\ee
as follows: Let $(b_n)_{n=0}^\infty$ be a labelling of the ONB by 
$n\in \mathbb{N}_0$ and set $M^{(0)}_{n,m}:=M_{n,m},\;m,n\ge 0$.
Find all $m$ such that $M^{(0)}_{n,m}=0\; \forall\;
n\in \mathbb{N}_0$ and drop those $m$ from the set $\mathbb{N}_0$ resulting
in $R_0$. For the smallest $m_0\in R_0$   
find the smallest $n=n_0$ in $L_0:=\mathbb{N}_0$ 
such that $M_{n,m_0}\not=0$. 
Solve $K_{m_0}=$ for $l_{n_0}=\hat{l}_{n_0}$ and substitute that 
solution into the 
$K_m,\;m_0\not=m\in R_0$. Collecting coefficients one can write these 
$K_m$ as a sum over $l_n,\;n\in L_1=L_0-\{n_0\}$ with coefficients 
$M^{(1)}_{n,m}$. Drop all $m$ from $R_0-\{m_0\}$ for which 
$M^{(1)}_{n,m}=0\; \forall\; n\in L_1$ resulting in $R_1$. Now iterate,
thereby producing sets $L_0\supset L_1 \supset .. \supset L_N\; \supset ..,\; 
N\in \mathbb{N}_0,\; L_{N+1}=L_N-\{n_N\}$. 
The space of solutions $\{l_n\}_{n\in \mathbb{N}_0}$ will be parametrised 
by the free coefficients $l_n,\;n\not\in \{n_N\}_{N=0}^\infty$.

\subsection{Implementation in Fock representations}
\label{s3.3}

Parts of the exposition of the present section are also considered 
in \cite{29}. We include here a shorter version for reasons 
of self-containedness and refer to \cite{29} for more details.
Note that the index set $\cal I$ considered in this section 
labels a basis $e_I$ in the one particle Hilbert space and therefore 
can be used to smear constraints $Z$ as $Z_I:=<e_I,Z>$. 
In this case the meaning of $\cal I$ is the same as in the previous 
section. In general however the meaning of $\cal I$ in this and
the previous section is logically independent.\\
\\
The concrete implementation of the programme described in the previous 
section in Fock representations is sytematised by picking an  
orthonormal 
basis of real valued functions 
$e_I$ with {\it mode label} $I\in {\cal I}$ of the 
scalar one particle Hilbert space $L_2=L_2(\sigma,\; d^Dx\;\omega)$ and 
a background $D-$Bein $h^a_j,\;j=1..D$ with inverse co-$D-$Bein $h_a^j$ such 
that $g_{ab} h^a_j h^b_k=\delta_{jk}$. Then $e_{I;j_1..j_A}^{k_1..k_B}$ 
is an ONB of tensors of type $(A,B,0)$ with components 
\be \label{3.31}
[e_{I;j_1..j_A}^{k_1..k_B}]^{a_1..a_A}_{b_1..b_B}(x)
=e_I(x)\;\prod_{r=1}^A h^{a_r}_{j_r}(x)\;\;
\;\prod_{s=1}^B h_{b_s}^{k_s}(x)\;\;
\ee
with respect to the one particle inner product 
\be \label{3.32}
<t,\hat{t}>:=\int_\sigma\; d^Dx\;\omega\; 
\overline{t^{a_1..a_A}_{b_1..b_B}}\;
\prod_{r=1}^A\; g_{a_r a'_r}\;
\prod_{s=1}^B\; g^{b_s b'_s}\; \hat{t}^{a'_1..a'_A}_{b'_1..b'_B}
\ee
We define the background scalars 
of density weight zero (indices $j,k$ are moved with the Kronecker $\delta$)
\be \label{3.33a}
q_{j_1..j_{A+B}}:=
h_{a_1 j_1}..h_{a_A j_A}\; 
h^{b_1}_{j_{A+1}}..h^{b_B}_{j_{A+B}}\; q^{a_1..a_A}_{b_1..b_B},\;\;
p_{j_1..j_{A+B}}:=\omega^{-1}
h^{a_1}_{ j_1}..h^{a_A}_{j_A}\; 
h_{b_1 j_{A+1}}..h_{b_B j_{A+B}}\; p_{a_1..a_A}^{b_1..b_B},\;\;
\ee
which have Poisson brackets
\be \label{3.34}
\{p_{j_1..j_{A+B}}(x),q_{k_1..k_{A+B}}(y)\}=\delta(x,y)
\prod_{r=1}^{A+B}\; \delta_{j_r k_r}
\ee 
We define the bosonic smeared annihilators 
\be \label{3.33}
A_{I;j_1..j_{A+B}}:=<e_I,A_{j_1..j_{A+B}}>:=2^{-1/2}
<e_I,[\kappa\cdot q_{j_1..j_{A+B}}-i\kappa^{-1}\cdot p_{j_1..j_{A+B}}>_{L_2}
\ee
where $\kappa=\kappa(\Delta),\; 
\Delta=\omega^{-1}\partial_a \omega g^{ab} \partial_b$
and the fermionic smeared annihilators 
\be \label{3.34a}
B_{I\alpha}=<e_I,B_\alpha>:=<e_I,\omega^{-1/2} \rho_\alpha>_{L_2}
\ee
which annihilate the Fock vacuum $\Omega$.

The Fock orthonormal basis is then defined by 
\be \label{3.35}
b_n:=\prod_{I,\mu}\;\frac{[C_{I\mu}^\ast]^{n_{I\mu}}}{\sqrt{n_{I\mu}!}}\;
\Omega,\; ||n||:=\sum_{I,\mu} n_{I,\mu}<\infty
\ee
where $\mu$ stands for one of the multi-indices $j_1..j_{A+B}$ or 
$\alpha$ respectively in which $C_{I,\mu}$ means $A_{I,\mu}$ or 
$B_{I,\mu}$ respectively and the {\it occupation numbers} 
$n_{I\mu}\in \mathbb{N}_0$.\\
\\
Let $\cal D$ be the finite linear span of the $b_n$.
A {\it truncation structure} on $\cal I$ is a sequence 
of finite mode sets $S_M\subset {\cal I},\;
M\in \mathbb{N}_0$ with $S_0=\{I_0\}$, $S_M\subset S_{M+1},\; 
\cup_M S_M={\cal I}$ and $I_0$ is a single element in ${\cal I}$. 
It defines ${\cal D}_M\subset {\cal D}$ 
to be the span of the $b_n$ with $n_{I,\mu}=0$ for $I\not \in S_M$.
Note that the truncation 
structure does not impose any restriction on the occupation numbers 
$n_{I,\mu}$ for $I\in S_M$ except that by definition of a Fock state 
$||n||:=\sum_{I,\mu} n_{I\mu} <\infty$. For instance, 
for $\sigma=\mathbb{R}^D$ and $g_{ab}=\delta_{ab}$ 
the $e_I$ could be the Hermite 
functions $I=(I_1,..,I_D)\in \mathbb{N}_0^D$ and $I\in S_M$ iff
$|I|=I_1+..+I_D\le M$ and $I_0:=0$. Similarly, 
for $\sigma=T^D$ the $e_I$ could be the momentum eigenfunctions
with $I=(I_1,..,I_D)\in \mathbb{Z}^D$ and $I\in S_M$ iff
$|I|=|I_1|+..+|I_D|\le M$ and again $I_0:=0$.
   
The truncation structure can be used in order to define the truncations 
$:P_M(x):$ of a normal ordered polynomial $:P(x):$ as follows: We write 
$:P(x):$ in terms of the $A_{j_1..j_{A+B}}, B_\alpha$ and their adjoints 
and then use the completeness relation 
\be \label{3.36}
\delta(x,y)
=\sum_I\; e_I(x)\;[e_I(y)]^\ast\; \omega(y),\;\;
\int\; d^Dy \; \delta(x,y)\; f(y)=f(x)
\ee
to expand the annihililation and creation operators in terms of the 
smeared versions (\ref{3.33}), ((\ref{3.34}). For each monomial of order 
$N$ we obtain $N$ sums over the $I_1,..,I_N\in {\cal I}$. The truncation 
consists in the restriction of those sums to $I_1,..,I_N\in S_M$. 
In this way the infinite series is truncated to a normal ordered 
polynomial in the smeared creation and annihilation operators, for 
each monomial of order $N$ we obtain at most $|S_M|^N$ terms where 
$|S_M|$ is the finite number of elements of $S_M$. 

Note that since $S_M\to {\cal I},\; M\to \infty$, for each $\psi,\psi'$
we find $M\in \mathbb{N}_0$ such that $\psi,\psi'\in {\cal D}_M$. Therefore 
\be \label{3.37}
<\psi,\;:P(x):\;\psi'>=<\psi,\;:P_M(x):\;\psi'>
\ee
To see this consider $:P(x):$ expanded into smeared annihilation and 
creation operators without truncation of the infinite series involved. 
Then let all creation operators act to the left 
on $<\psi,.>$ and all annihilation operators to the right on $\psi'$. 
Whenever the mode label $I$ of such an operator is not in $S_M$ we can 
commute it trough the creation operators acting on $\Omega$ that are involved 
in $\psi,\psi'$ whereby we pick at most a sign and then $\Omega$ is 
annihilated. It follows 
\be \label{3.38}
:P(x):=w-\lim_{M\to \infty}\; :P_M(x):
\ee
i.e. every normal ordered polynomial is the weak operator topology limit 
of its truncations which is the appropriate operator topology to choose 
for quadratic forms.   

Now quadratic forms cannot be multiplied, i.e. their product is no 
longer a quadratic form which is the source of most 
complications in QFT and
necessiates regularisation and renormalisation procedures when 
considering the S-matrix (an infinite series of smeared interaction 
terms which itself is not an operator, only a quadratic form). However
the truncations $:P_M(x):$ are actually operators with dense invariant 
domain $\cal D$ because they are polynomials in smeared creation and 
annihilation operators with coefficients that take values in the 
Schwarz functions. Thus taking products of the truncations 
results in well defines operators densely defined on $\cal D$ whose 
limit in general does not exist, not even as a quadratic form. However,
due to cancellations involved, the limit of commutators of truncations 
may exist as a quadratic form. 

Thus, given smeared, normal ordered polynomial constraints and their 
truncations
\be \label{3.39}
C(u):=\int\;d^Dx\; u^\mu(x)\;:C_\mu(x):,\;\;
C_M(u):=\int\;d^Dx\; u^\mu(x)\;:C_{\mu,M}(x):,\;\;
\ee
we define, if it exists
\be \label{3.40}
[C(u),C(v)]:=w-\lim_{M\to \infty}\;[C_M(u),C_M(v)]
\ee
In \cite{29} we have carried out the analysis explicitly for the case 
of the spatial diffeomorphism constraints and found that 
\be \label{3.41}
[C[u],C[v]]=iC(-[u,v])
\ee
where $[u,v]$ is the commutator of vector fields. That is, {\it for any
choice of Fock representation of the above type}, no matter that it is 
heavily background dependent, the algebra of SDC quadratic forms not 
only exists but it is free of anomalies! This shows that background
dependence of elements of the quantum theory (here the representation of 
the CCR and CAR) does not spoil the background 
independence of the algebra of spatial diffeomorphisms.\\
\\
We now want to analyse whether this result can be extended to the 
full algebra of constraints for Lorentzian GR, written in polynomial form. We 
consider first the vacuum sector. We have seen that we can work with 
$(P^{ab},q_{ab})$ and in $D=3$ the HC becomes a polynmial of degree ten in 
density weight six while in the $(A,E)$ formulation 
it becomes a polynomial of degree twenty with density weight fourteen.
The higher the 
polynomial degree, the more terms will come out not normal ordered in 
the commutator calculation and the more potentially divergent and/or 
anomalous terms will appear upon reinstalling normal order. Thus, 
to get a first taste of the complications involved we consider Euclidian 
vacuum GR 
in 3+1 dimensions where we have available the connection formluation 
mentioned in section \ref{s2} involving GC, SDC and HC  
constraints $G_j,C_a,C_0$ which are only second or fourth order polynomials 
respectively in density weight two. 
We thus expect the complications involved to be less severe 
than for Lorentzian signature. The classical Poisson algebra
of the polynomial constraints 
\ba \label{3.42}
&& C(r,u,f) = \int\; d^Dx\; \{r^j\; G_j+u^a C_a+f\;C_0\},\;
G_j=E^a_{j,a}+\epsilon_{jkl} A_a^k E^a_m \delta^{lm},\;
\\
&& C_a=E^b_j A_{b,a}^j-(E^b_j A_a^j)_{,b},\;
C_0=F_{ab}^j\epsilon_{jkl} \delta^{km} \delta^{ln} E^a_m E^b_n,\;
F_{ab}^j=2A_{[b,a]}^j+\delta^{jm}\epsilon_{mkl} A_a^k A_b^l
\nonumber
\ea
is given by
\ba \label{3.43}
&& \{C(r,u,f),C(s,v,g)\} =-C(r\times s+u[s]-v[r]
-4\;Q(\omega,A),[u,v]-4\;Q(\omega,.),u[g]-v[f])
\nonumber\\
&&
Q^{ab}=E^a_j E^a_k \delta^{jk},\;\omega_a=f\;g_{,a}-g\;f_{,a},\; 
u[s]^j=u^a\; s^j_{,a},\;u[f]=u^a\; f_{,a}-u^a_{,a} \; f
\ea
Note that $s,f$ transform as scalars of density weight zero and minus one 
respectively under spatial diffeomorphisms.
We carry out the above programme in the simplest topology, background metric 
and Fock 
representation
$\sigma=\mathbb{R}^3,g_{ab}=\delta_{ab},\kappa=1$ 
and use the Hermite 
basis $e_I,\; I\in \mathbb{N}_0^3,\; |I|=I_1+I_2+I_3$ 
for concreteness to define the truncation structure. A possible triad is 
$h_a^j:=\delta_a^j$ and $\omega=[\det(g)]^{1/2}=1$. The annihilation 
operator is then given by 
\be \label{3.44} 
A_{mk}:=2^{-1/2}\{h^a_m\; A_a^l\;\delta_{lk}-i \delta_{ml} h_a^l E^a_k\}
\ee
and we have the non-vanishing commutators 
\be \label{3.45}
[A_{mj}(x),A^\ast_{nk}(y)]=\delta_{mn}\;\delta_{jk}\;\delta(x,y)
\;\;\Leftrightarrow\;\;
[E^a_j(x), A_b^k(y)]=i\delta^a_b\;\delta_j^k\;\delta(x,y)
\ee
and the smeared operators are simply 
\be \label{3.46}
A_{Imj}=<e_I,A_{jk}>_{L_2},\;\;
[A_{Imj},A_{Jnk}^\ast]=\delta_{IJ}\;\delta_{mn}\delta_{jk}
\ee
The normal ordered constraints can be expanded as (we define the derivatives 
$f_{,j}:=\delta^a_j f_{,a}$ and Einstein summation over twice 
repeated lower case indices independent of index position being understood) 
\ba \label{3.47} 
G(r) &=& 
-\frac{i}{2^{1/2}}\;\sum_I\; r_{Imi}\; \; [A_{Imi}-A_{Imi}^\ast]
+\frac{i}{2}\epsilon_{ijk}\sum_{I,J} r_{IJk}\; 
[A_{Imi}^\ast\;(A_{Jmj}-A_{Jmj}^\ast)+(A_{Jmj}-A_{Jmj}^\ast)A_{Imi}]
;\;\;
\nonumber\\
r_{Imi} &:=& <e_I,r^i_{,m}> ,\;r_{IJk}:= <r^k,e_I e_J>
\nonumber\\
C(u) &=& \frac{i}{2}\; \sum_{I,J}\; u_{IJmn}\;\{ 
[A_{Imj}-A_{Imj}^\ast]\;A_{Jnj}
+A_{Jnj}^\ast\;[A_{Imj}-A_{Imj}^\ast]\};\;\;
\nonumber\\
u_{IJmn} &:=&
\delta_{mn}\; <e_I,u_p e_{J,p}>+<e_I e_J, u_{n,m}>
\nonumber\\
-2\;C_0(f) &=&\;2^{1/2} 
\sum_{I,J,K}\; \;f_{IJnK}\;\epsilon_{ijk}
:[A_{Imi}-A_{Imi}^\ast]\;[A_{Jnj}-A_{Jnj}^\ast]]\; 
[A_{Kmk}+A_{Kmk}^\ast]:
\nonumber\\
&& +\frac{1}{2}  
\sum_{I,J,K,L}\; \;f_{IJnK}\;\epsilon_{rij}\;\epsilon_{rkl}
:[A_{Imi}-A_{Imi}^\ast]\;[A_{Jnj}-[A_{Jnj}^\ast]]\; 
[A_{Knk}+A_{Knk}^\ast]\;
[A_{Lmk}+A_{Kmk}^\ast]:
\nonumber\\
&& f_{IJnK}=<(e_I e_J f)_{,n}, e_K>,\;\;f_{IJKL}=<e_I e_J e_K e_L,f>
\ea
These inner products are well defined as the Hermite basis consists of 
Schwartz functions. 

The computations are organised by using the compound index notation 
\be \label{3.48}
\alpha=(Imi),\;
\beta=(Jnj),\;
\gamma=(Kpk),\;
\delta=(Lql)
\ee
and 
\be \label{3.49}
r_\alpha=r_{Imi},\; r_{\alpha\beta}=r_{IJk} \epsilon_{kij} \delta_{mn},\;
u_{\alpha\beta}=u_{IJmn} \delta_{ij},\;
f_{\alpha\beta\gamma}=f_{IJnK} \delta_{mp} \epsilon_{ijk},\;\;
f_{\alpha\beta\gamma\delta}=f_{IJKL}\; \epsilon_{rij}\;\epsilon_{rkl}\;
\delta_{mp}\;\delta_{nq}
\ee
which allows the compact notation (summation over repeated compound 
indices understood). 
\ba \label{3.50}
G(r) &=& G_1(r)+G_2(r)=
-i 2^{-1/2}\; r_\alpha\; F_\alpha+\frac{i}{2}\; r_{\alpha\beta}\;
:G_\alpha\; F_\beta:
\nonumber\\
C(u) &=& C_2(u)=\frac{i}{2}\; u_{\alpha\beta}\; :F_\alpha\; G_\beta:
\nonumber\\ 
-2C_0(f)&=& C_{0;3}(f)+C_{0;4}(f)=
2^{1/2}\; 
f_{\alpha\beta\gamma}\; :F_\alpha\; F_\beta\; G_\gamma:
+\frac{1}{2}
f_{\alpha\beta\gamma\delta}\; 
:F_\alpha\; F_\beta\; G_\gamma\; G_\delta:
\nonumber\\
&& F_\alpha:=A_\alpha-A_\alpha^\ast,\;G_\alpha:=A_\alpha+A_\alpha^\ast,\;
\ea
Here $:(.):$ denotes normal ordering of $(.)$ in the usual way, e.g.
\ba \label{3.51}
&& :\;F_\alpha\;F_\beta\;G_\gamma\;:
=A_\gamma^\ast:\;F_\alpha\;F_\beta\;:
+ :\;F_\alpha\;F_\beta\;:\;A_\gamma
\nonumber\\
&& :\;F_\alpha\;F_\beta\;:
= F_\alpha\;A_\beta - A_\beta^\ast\;F_\alpha
\ea
In (\ref{3.50}) the summation range of $\alpha,\beta,\gamma,\delta$ is 
unconstrained, that is, there is no restriction on the range 
of the mode label $I$ in $\alpha=(Imi)$ while of course $m,i\in\{1,2,3\}$
take only a finite range. Therefore the expressions (\ref{3.50}) are formal
only. 

We define the ordering on the compound indices 
\be \label{3.50a}
\alpha=(Imi)\le M \;\; \Leftrightarrow \;\;  I\in S_M
\ee
Let now 
\ba \label{3.52a}
G_{M_1;M_2,M_3}(r) &=& 
G_{1;M_1}(r)+G_{2;M_1,M_3}(r)=
-\frac{i}{2^{1/2}}\; \sum_{\alpha\le M_1}\;
r_\alpha\; F_\alpha+\frac{i}{2}\;\sum_{\alpha\le M_2,\beta\le M_3}\; 
r_{\alpha\beta}\;
:G_\alpha\; F_\beta:
\nonumber\\
C_{M_1,M_2}(u) &=& 
C_{2;M_1,M_2}(u) = 
\frac{i}{2}\; \sum_{\alpha\le M_1,\beta\le M_2}
\; u_{\alpha\beta}\; :F_\alpha\; G_\beta:
\nonumber\\ 
-2C_{0,M_1,..,M_7}(f)&=& 
C_{0;3;M_1,..,M_3}(f)+C_{0;4;M_4,..,M_7}(f)
\nonumber\\
&=&
2^{1/2}\;
\sum_{\alpha\le M_1,\beta\le M_2,\gamma\le M_3}\; 
f_{\alpha\beta\gamma}\; :F_\alpha\; F_\beta\; G_\gamma:
\nonumber\\
&& +\frac{1}{2}\; 
\sum_{\alpha\le M_4,\beta\le M_5,\gamma\le M_6,\delta\le M_7}
f_{\alpha\beta\gamma\delta}\; 
:F_\alpha\; F_\beta\; G_\gamma\; G_\delta:
\ea
be the expansions of $F,G$ in terms of the $e_I$, the sums
truncated as displayed. 
Thus 1. each occurring sum has now finite range and 2. individual normal 
ordered monomials have individual finite ranges. We can even take one step 
further and instead of summing over the same $\alpha\le M$ for 
$A_\alpha,\;A_\alpha^\ast$ occurring in $G_\alpha$ or 
$F_\alpha$ we can sum over different labels. This results in the explicit 
expressions 
\ba \label{3.52}
&& G_{M_1,..,M_8}(r) = 
G^0_{1;M_1}(r)
+G^1_{1;M_2}(r)
+G^0_{2;M_3,M_4}(r)
+G^1_{2;M_5,M_6}(r)
+G^2_{2;M_7,M_8}(r)
\nonumber\\
&& G^0_{1;M_1}(r)=-\frac{i}{2^{1/2}}\; 
\sum_{\alpha\le M_1}\;r_\alpha\; A_\alpha,\;
G^1_{1;M_2}(r)=\frac{i}{2^{1/2}}\; 
\sum_{\alpha\le M_2}\;r_\alpha\; A_\alpha^\ast
\nonumber\\
&& G^0_{2,M_3,M_4}=\frac{i}{2}\;\sum_{\alpha\le M_3,\beta\le M_4}\; 
r_{\alpha\beta}\;
A_\alpha\; A_\beta
\nonumber\\
&& G^1_{2,M_5,M_6}=\frac{i}{2}\;\sum_{\alpha\le M_5,\beta\le M_6}\; 
[r_{\alpha\beta}-r_{\beta\alpha}]\;
A_\alpha^\ast\; A_\beta
\nonumber\\
&& G^2_{2,M_7,M_8}=\frac{i}{2}\;\sum_{\alpha\le M_7,\beta\le M_8}\; 
r_{\alpha\beta}\;
A_\alpha^\ast\; A_\beta^\ast
\nonumber\\
&& C_{M_1..M_6}(u) = 
C^0_{2;M_1,M_2}(u) + C^1_{2;M_3,M_4}(u) + C^2_{2;M_5,M_6}(u)
\nonumber\\
&& 
C^0_{2;M_1,M_2}(u)=\frac{i}{2}\; \sum_{\alpha\le M_1,\beta\le M_2}
\; u_{\alpha\beta}\; A_\alpha\; A_\beta,\;\;\;
C^1_{2;M_3,M_4}(u)=\frac{i}{2}\; \sum_{\alpha\le M_3,\beta\le M_4}
\; [-u_{\alpha\beta}+u_{\beta\alpha}]\; A_\alpha^\ast\; A_\beta\;
\nonumber\\
&&C^2_{2;M_5,M_6}(u)=\frac{i}{2}\; \sum_{\alpha\le M_5,\beta\le M_6}
\; u_{\alpha\beta}\; A_\alpha^\ast\; A_\beta^\ast,
\nonumber\\ 
&& -2C_{0,M_1,..,M_{28}}(f)= 
\sum_{k=0}^3\; C^k_{0;3;M_{3k+1},M_{3k+2},M_{3k+3}}(f)
+\sum_{k=0}^4\; C^k_{0;4;M_{12+4k+1},M_{12+4k+2},M_{12+4k+3},M_{12+4k+4}}(f)
\nonumber\\
&& C^0_{0;3;M_1,M_2,M_3}(f)=
2^{1/2}\;
\sum_{\alpha\le M_1,\beta\le M_2,\gamma\le M_3}\; 
f_{\alpha\beta\gamma}\; A_\alpha\; A_\beta\; A_\gamma, \;\; ...
\nonumber\\
&& C^3_{0;3;M_{10},M_{11},M_{12}}(f)=
2^{1/2}\;
\sum_{\alpha\le M_{10},\beta\le M_{11},\gamma\le M_{12}}\; 
f_{\alpha\beta\gamma}\; A_\alpha^\ast\; A_\beta^\ast\; A_\gamma^\ast
\nonumber\\
&& 
C^0_{0;4;M_{13}..M_{16}}(f)=
\frac{1}{2}\; 
\sum_{\alpha\le M_{13},\beta\le M_{14},\gamma\le M_{15},\delta\le M_{16}}
f_{\alpha\beta\gamma\delta}\; 
A_\alpha\; A_\beta\; A_\gamma\; A_\delta,\;\; ...
\nonumber\\
&& C^0_{0;4;M_{25}..M_{28}}(f)=
\frac{1}{2}\; 
\sum_{\alpha\le M_{25},\beta\le M_{26},\gamma\le M_{27},\delta\le M_{28}}
f_{\alpha\beta\gamma\delta}\; 
A_\alpha^\ast\; A_\beta^\ast\; A_\gamma^\ast\; A_\delta^\ast
\ea
Here the superscript $k$ and subscript $N$ e.g. in 
$G^k_{N;M_1..M_N}(r);\;\; N=1,2;\; k=0,1,..,N$ means that we are looking 
at a normal ordered monomial of order $N$ with $k$ creation and $N-k$ 
annihilation operators. For each such monomial we have numerous 
ways of taking the $M_1,.., M_N\to \infty$. 
\begin{Definition} \label{def3.1} ~\\
Let for $0\le k\le N\in \mathbb{N}$
\be \label{3.53}
E^k_{N;M_1,..,M_N}:=\sum_{\alpha_1\le M_1,..,\alpha_N\le M_N}\;\;
E^k_{\alpha_1..\alpha_N}\; 
A_{\alpha_1}^\ast..A_{\alpha_k}^\ast\;
A_{\alpha_{k+1}}..A_{\alpha_N}
\ee
be an N-th order normal ordered monomial in $k$ creation operators and $N-k$
annihilation operators. \\
i.\\
A limiting pattern $P$ in $1\le l\le N$ indices is an ordered partition of 
$\{1,..,,N\}$ 
into subsets $U_1,..,U_l$ i.e. $\cup_{r=1}^l\; U_r=\{1,..,N\},\; 
U_r\cap U_s=\emptyset \;\forall\; r\not=s$.\\
ii.\\
A limiting process subordinate to a limiting pattern $P$ is the 
coincidence of cut-offs $M^r:=M_{s_1}=..=M_{s_{L_r}}$ for 
$U_r=\{s_1,..,s_{L_r}\}$ followed by a sequential limit: First 
take $M^1\to\infty$ at finite $M^2,..,M^l$, then take  
$M^2\to\infty$ at finite $M^3,..,M^l$, ... and finally take 
$M^l\to \infty$. This precise process is denoted by $P\to \infty$.\\
iii.\\
The extreme cases are the {\it total coincidence} limit $l=1$ 
with $M^1=M_1=..=M_N\to \infty$ and one 
of the $N!$ {\it total individual} 
limits $l=N$ defined by a permutation $\pi$ 
in $N$ elements i.e. $M_{i_{\pi(k)}}\to \infty$ at 
finite $M_{\pi(k+1)},..,M_{\pi(N)}$ in the order $k=1,2,..,N$. \\
iv.\\
The quadratic form limit $E^k_{N;P}$, if it exists, 
subordinate to a limiting pattern
$P$ is the weak operator topology limit
\be \label{3.54}
<\psi,\;E^k_P\;\psi'>:=\lim_{P\to \infty}\; 
<\psi,\;E^k_{N;M_1,..,M_N}\;\psi'>:
\ee
for all $\psi,\psi'\in {\cal D}$.\\
v.\\
A K-th order normal ordered polynomial is a sum of terms of the form 
(\ref{3.53}) with $0\le N\le K,\; 0\le k\le N$ each carrying its own 
$M^{(N)}_1,..,M^{(N)}_N$. A limiting pattern and process 
applied to the polynomial is the sum of the individual monomial limiting 
patterns and monomials.  
\end{Definition}
Note that the resulting $E^k_{N;P}$ is defined independently of 
the choice of $\psi,\psi'$ in (\ref{3.54}), i.e. the limiting process does not 
depend on those, so that the resulting object is indeed a quadratic form, i.e.
a sesqui-linear form with form domain $\cal D$.

Various generalisations of the limiting pattern are conceivable and should 
be kept in mind for potential future applications:\\ 
1.\\ We could refine the pattern
and make it also dependent on all the finite indices $m_k,i_k$ of the 
$\alpha_k=(I_k,m_k,i_k),\; k=1,..,N$.\\ 
2.\\
Instead of a sequential limit 
we could consider coincidence limits at different ``rates'', say 
$M^r=\exp(c_r\; m),\; m\to\infty,\;c_r\in \mathbb{R}_+$ for 
some of the $r$ or a mixture of 
those with sequantial limits etc.\\
3.\\
An expression of the form (\ref{3.53}) could be split into several 
of the same form with different coefficients 
$E^k_{\alpha_1,..,\alpha_N;s},\; s=1,..,S$ and each of them could be 
treated individually with its own limiting pattern.\\
 
For the purpose of this paper, the 
above, finite number of patterns will be sufficient. Then we have 
\begin{Proposition} \label{prop3.1} ~\\
The objects (\ref{3.50}) are quadratic forms on the Fock space with dense 
form domain ${\cal D}$ obtained from (\ref{3.52}) or (\ref{3.52a}) via the 
weak limits (\ref{3.54}) subordinate to any limiting pattern $P$, i.e. 
all those limits corresponding to different $P$ coincide.
\end{Proposition}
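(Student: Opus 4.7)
The plan is to exploit the fact that every vector in $\mathcal{D}$ has finite Fock support, so only finitely many mode labels can contribute to any matrix element. Since the proposition concerns the weak operator topology, I would fix arbitrary $\psi,\psi'\in\mathcal{D}$ and show that the matrix element of each monomial $E^k_{N;M_1,\ldots,M_N}$ appearing in (\ref{3.52}) or (\ref{3.52a}) stabilises, as an ordinary number, once the cut-offs are sufficiently large. Once this is established, the weak limit trivially exists along every limiting pattern $P$ and the limits all agree, because they all equal the stabilised value.

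First I would note that each $\psi,\psi'\in\mathcal{D}$ is a finite linear combination of basis vectors (\ref{3.35}), so there is a finite set $S_*\subset\mathcal{I}$ containing every mode label carried by the creation operators that appear in $\psi$ or $\psi'$. Because the truncation structure satisfies $S_M\subset S_{M+1}$ and $\bigcup_M S_M=\mathcal{I}$, there is a threshold $M_*$ with $S_{M_*}\supset S_*$. Next I would compute $\langle\psi,{:}A^\ast_{\alpha_1}\cdots A^\ast_{\alpha_k}A_{\alpha_{k+1}}\cdots A_{\alpha_N}{:}\psi'\rangle$ by moving the creators to act on the bra (as annihilators) and the annihilators on the ket. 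By the CCR (\ref{3.46}) the result vanishes unless every mode label $I_r$ carried by $\alpha_r$ lies in $S_*$, since otherwise the corresponding $A_{\alpha_r}$ (or its adjoint on the bra side) can be commuted through all creation operators in $\psi'$ (resp.\ $\psi$) and annihilates $\Omega$. Consequently, once every $M_i\ge M_*$ the truncated sum (\ref{3.53}), evaluated on $(\psi,\psi')$, coincides with the same sum restricted to $S_*\times\cdots\times S_*$, and is thus independent of the values of $M_1,\ldots,M_N$.

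Because $M_*$ depends only on the pair $(\psi,\psi')$ and not on the monomial or on the limiting pattern $P$, the limiting process defined in Definition \ref{def3.1}(ii) trivially converges: after finitely many of the sequential steps all cut-offs have crossed $M_*$, and from that moment on the matrix element is constant. Hence $\langle\psi,E^k_{N;P}\psi'\rangle$ exists for every pattern $P$, coincides with the matrix element obtained by putting every $M_i=M_*$, and in particular coincides across patterns. Since this holds for all $\psi,\psi'\in\mathcal{D}$, the weak limit defines a sesquilinear form on $\mathcal{D}$, i.e.\ a quadratic form with dense form domain, and by the identification above this form agrees with the formal expressions in (\ref{3.50}) interpreted through their action on $\mathcal{D}$, which is exactly (\ref{3.37}) extended to all sub-patterns.

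Finally I would observe that the full constraints $G(r),C(u),C_0(f)$ in (\ref{3.50}) are finite sums of monomials of the above form with coefficients $r_\alpha,r_{\alpha\beta},u_{\alpha\beta},f_{\alpha\beta\gamma},f_{\alpha\beta\gamma\delta}$ given by absolutely convergent inner products of Hermite/Schwartz functions, so the per-monomial argument applies term by term without further analytic input. There is no substantive obstacle here: the proposition is essentially a bookkeeping statement about the finite mode support of Fock states, and the only point requiring care is to notice that the stabilisation threshold $M_*$ is intrinsic to $(\psi,\psi')$ and therefore uniform over the finite collection of monomials making up the constraint, so that all limiting patterns $P$ produce the same value.
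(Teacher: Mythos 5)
Your proposal is correct and is essentially the paper's own argument: both rest on the sandwiching observation that $\psi,\psi'\in{\cal D}$ have finite mode support, so every label outside a fixed finite set $S_{M_0}$ (your $S_*$) is killed when annihilators act to the right and creators to the left, making each truncated sum stabilise and rendering the weak limit trivially existent and independent of the limiting pattern.
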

Note that the monomials of different order $N$ or of different
number of creation operators $k$ in (\ref{3.52}) are subjected 
to the limiting process individually, e.g. the first two terms of order $N=1$  
in the Gauss constraint are treated in the sense of (\ref{3.54})
independently of each other and independently 
of any of the three terms of order $N=2$ in accordance 
with item v. of above definition. The same applies for the various $N=3$ 
contributions among themselves, the various $N=4$ contributions among 
themselves and mutually independently of each other  
in the Hamiltonian constraint. The expression (\ref{3.52a}) is a special
case of (\ref{3.52}) in which we identify the cut-offs of those labels 
$\alpha,\beta$ respectively that originate from the same $F_\alpha,G_\beta$
respectively.\\
\begin{proof}
We sandwich (\ref{3.52}) between fixed vectors $\psi,\psi'\in {\cal D}$. Then 
we find $M_0\in \mathbb{N}$ such that $\psi,\psi\in {\cal D}_{M_0}$, the 
finite linear span of Fock states satisfying $n_I=0$ for $I\not \in S_{M_0}$.
For each monomial of order $N=1,2,3,4$ that appears in (\ref{3.52}) we 
let annihilators act to the right on $\psi'$ and    
creators act to the left on $\psi$. Following the pattern, we start with the 
the $\alpha's$ corresponding to $U_1$ to take full range. 
As the annihilators respectively 
creators commute among 
each other respectively, we can take those annihilators and creators to
the outmost right and left respectively. Whenever such $\alpha$ violates 
$\alpha\le M_0$ the corresponding term in the sum vanishes. In this way
all those $\alpha$ are confined to $\alpha\le M_0$. We proceed the same 
way with $U_2$ etc.\\
\end{proof}
Remarks:\\
A.\\
Note that we could have considered also labelling the truncated Gauss  
constraint by only two cut-offs $M_1,M_2$ where $M_1$ labels the first 
order monomial and $M_1,M_2$ label the second order monomial. The same 
applies to the truncated Hamiltonian constraint which could be labelled 
by only four cut-offs $M_1,..,M_4$ where only the first three label the 
third order monomial. This corresponds to a certain coincidence limit
of cut-off labels  
between quadratic forms of different 
monomials order and is much more restrictive than 
item v. of definition \ref{def3.1}. Similarly the labelling (\ref{3.52a}) 
corresponds to a coincidence limit that is obtained by expanding 
the continuum objects $F=A-A^\ast,\; G=A+A^\ast$ with respect to the $e_I$
thereby obtaining $F_\alpha=A_\alpha-A_\alpha^\ast,
G_\alpha=A_\alpha+A_\alpha^\ast$
and then decomposing the quadratic form into the $A_\alpha, A_\alpha^\ast$ 
so obtained while 
the more general scheme first decomposes the quadratic form into the 
continuum $A,A^\ast$ and then expands each factor $A,A^\ast$ individually 
with respect to the $e_I$ in each monomial. This therefore 
corresponds to a coincidence limit between cut-off labels of  
quadratic forms of the same monomial order but different numbers of 
creation operators    
$F_\alpha=A_\alpha-A_\alpha^\ast,\; G_\alpha=A_\alpha+A_\alpha^\ast$. 
which is also much more restrictive than item v. of the same definition.
While for the objects 
(\ref{3.52}) these coincidence limits exists and can be shown 
to coincide with the above limit these coincidental limits offer less 
flexibility in the choice of limiting patterns which becomes 
vital in the commutator calculations below.\\ 
B.\\
In  \cite{29} we have applied truncation structures and limiting patterns 
already to the spatial diffeomorphism constraint. There the mixture 
issue mentioned in A. with respect to $N$ 
does not arise as this is a monomial $N=2$ constraint
and we have applied the same 
total coincidence limit $M_1=M_2=M\to \infty$
individually for all three occurring normal ordered monomials of the form 
$A_\alpha^\ast\; A_\beta^\ast,\; A_\alpha^\ast\; A_\beta,\;
A_\alpha A_\beta$ i.e. we have taken a coincidence limit with respect to 
the $k=0,1,2$ labelling. What is more, when computing the commutator 
as defined below, we not only took 
$M_1=M_2=M$ for the first SDC quadratic form and  
$M'_1=M'_2=M'$ for the second SDC quadratic form but even $M=M'$.
This is not necessary and just one of the 
possibilities of taking the limit, leading however 
to the same unambiguous result 
in this case.\\
\\
When computing commutators of the above constructed quadratic forms
we need the following 
\begin{Lemma} \label{la3.1} ~\\
We have 
\ba \label{3.55}
&&[A_{\alpha_1}^\ast..A_{\alpha_k}^\ast\; A_{\beta_1}..A_{\beta_l},
A_{\gamma_1}^\ast..A_{\gamma_p}^\ast\; A_{\delta_1}..A_{\delta_q}]
\\
&=&
A_{\alpha_1}^\ast..A_{\alpha_k}^\ast\;
[A_{\beta_1}..A_{\beta_l},A_{\gamma_1}^\ast..A_{\gamma_p}^\ast]\;
A_{\delta_1}..A_{\delta_q}
-
A_{\gamma_1}^\ast..A_{\gamma_p}^\ast\;
[A_{\delta_1}..A_{\delta_q},A_{\alpha_1}^\ast..A_{\alpha_k}^\ast]
\;A_{\beta_1}..A_{\beta_l}
\nonumber
\ea
and 
\be \label{3.56}
[A_{\beta_1}..A_{\beta_l},A_{\gamma_1}^\ast..A_{\gamma_p}^\ast]\;
=([A_{\gamma_1}..A_{\gamma_p},A_{\beta_1}^\ast..A_{\beta_l}^\ast])^\ast
\ee
and for $k\le l$
\be \label{3.57}
[A_{\alpha_1}..A_{\alpha_k},A_{\beta_1}^\ast..A_{\beta_l}^\ast]\;
=\sum_{n=1}^k\; \sum_{1\le r_1<..<r_n\le k; 1\le s_1<..<s_n\le l}\;
[\sum_{\pi\in S_n}\;\prod_{m=1}^n\;\delta_{\alpha_{r_{\pi(m)}}\beta_{s_m}}]\;
[\prod_{s\not\in\{s_1,..,s_n\}} A_{\beta_s}^\ast]\;
[\prod_{r\not\in\{r_1,..,r_n\}} A_{\alpha_r}]
\ee
\end{Lemma}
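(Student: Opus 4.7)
The plan is to dispatch the three identities in sequence, exploiting throughout that creation operators mutually commute (as do annihilation operators), i.e. $[A_\alpha,A_\beta]=[A_\alpha^\ast,A_\beta^\ast]=0$, since the representation is bosonic.

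For (3.55), I would abbreviate $X=A_{\alpha_1}^\ast..A_{\alpha_k}^\ast$, $Y=A_{\beta_1}..A_{\beta_l}$, $Z=A_{\gamma_1}^\ast..A_{\gamma_p}^\ast$, $W=A_{\delta_1}..A_{\delta_q}$ and note that $[X,Z]=0$ and $[Y,W]=0$. Then insertion of $\pm ZXYW$ gives $XYZW-ZWXY = X(YZ-ZY)W + Z(XW-WX)Y = X[Y,Z]W - Z[W,X]Y$, which is precisely (3.55). For (3.56), apply the involution to get $([A_{\gamma_1}..A_{\gamma_p},A_{\beta_1}^\ast..A_{\beta_l}^\ast])^\ast = [A_{\beta_l}..A_{\beta_1},A_{\gamma_p}^\ast..A_{\gamma_1}^\ast]$, and use commutativity within each product to reorder the factors into the form of the left hand side. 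These two steps are essentially algebraic bookkeeping.

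The substantive work is (3.57), which I would establish by induction on $k$. The base case $k=1$ follows from a single telescope: repeated use of $A_\alpha A_\beta^\ast = A_\beta^\ast A_\alpha + \delta_{\alpha\beta}$ yields $[A_{\alpha_1},A_{\beta_1}^\ast..A_{\beta_l}^\ast] = \sum_{s=1}^l \delta_{\alpha_1\beta_s}\prod_{s'\ne s} A_{\beta_{s'}}^\ast$, which is the $n=1$ term in (3.57) with $r_1=1$ and the single pair $(1,s)$. For the inductive step, apply the Leibniz identity
\begin{equation}
[A_{\alpha_1}..A_{\alpha_k},Y] = A_{\alpha_1}..A_{\alpha_{k-1}}\;[A_{\alpha_k},Y] + [A_{\alpha_1}..A_{\alpha_{k-1}},Y]\; A_{\alpha_k}
\end{equation}
with $Y=A_{\beta_1}^\ast..A_{\beta_l}^\ast$, substitute the base case into the first summand and the inductive hypothesis into the second, and then use again that all $A^\ast$'s mutually commute and that annihilation operators commute past each other. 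The pairings $(r_1,s_1),..,(r_n,s_n)$ with a permutation $\pi\in S_n$ appearing in the final answer are generated in two ways: those in which $\alpha_k$ participates (contracted by the first summand against some $\beta_{s}$, contributing the pair $r_n=k$ joined to the pairings from the base case), and those in which $\alpha_k$ does not participate (contributed by the second summand, where $A_{\alpha_k}$ appears uncontracted on the far right and commutes through the remaining annihilation operators to land in the correct normal-ordered position).

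The main obstacle is purely combinatorial: verifying that the recursion produces exactly the sum over ordered subsets $r_1<..<r_n$, $s_1<..<s_n$ and permutations $\pi\in S_n$ with no overcounting. The cleanest way to handle this, and the fallback if the induction becomes unwieldy, is to give an independent proof via Wick's theorem: normal-order the product $\prod_{r=1}^k A_{\alpha_r}\prod_{s=1}^l A_{\beta_s}^\ast$ by summing over all partial matchings of the $\alpha$-indices with the $\beta$-indices, each matching of size $n$ producing a factor $\prod_m \delta_{\alpha_{r_{\pi(m)}}\beta_{s_m}}$ and a normal-ordered residue $\prod_{s\notin\{s_i\}} A_{\beta_s}^\ast\cdot\prod_{r\notin\{r_i\}} A_{\alpha_r}$; the $n=0$ contribution is exactly $\prod_s A_{\beta_s}^\ast \prod_r A_{\alpha_r}$, so subtracting it from both sides yields (3.57). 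The condition $k\le l$ enters only in bounding the maximum matching size by $k$.
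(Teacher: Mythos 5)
Your proposal is correct and takes essentially the same route the paper indicates: (3.55) from the basic CCR together with the mutual commutativity of the creator block with the creator block and the annihilator block with the annihilator block, (3.56) from the definition of the adjoint, and (3.57) by induction over $k$ (the paper records no more detail than this), with your Wick-expansion argument as an equivalent packaging of the same combinatorics. The only refinement worth noting is that in the inductive step the first Leibniz summand $A_{\alpha_1}\cdots A_{\alpha_{k-1}}\,[A_{\alpha_k},Y]$ is itself not yet normal ordered, so the inductive hypothesis (or your Wick fallback) must be applied to it as well, not only to the second summand; this is precisely the bookkeeping you flag and it closes without difficulty.
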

The proof is not difficult: (\ref{3.55}) follows directly from the basic
CCR, (\ref{3.56}) directly from the definition of the adjoint which allows 
us to assume $k\le l$ in (\ref{3.57}) and (\ref{3.57}) follows by induction 
over $k$. Lemma \ref{la3.1} gives the explicit formula for restoring 
normal order in commutators of normal ordered monomials.\\
\\
Suppose now that we have quadratic forms 
\be \label{3.58}
Q_j=\sum_{N=0}^{N_j}\;\sum_{k=0}^N\; [Q_j]^k_N,\;\;
[Q_j]^k_N:= 
[Q_j]^k_{N;\alpha_1..\alpha_N}\; A_{\alpha_1}^\ast..A_{\alpha_k}^\ast
A_{\alpha_{k+1}}..A_{\alpha_N}
\ee
with $j=1,2$ which as in (\ref{3.50}) are uniquely defined using any limiting 
pattern so that the formal infinite range sums are in fact finite in any 
matrix element calculation. To compute the quadratic form commutator 
$[Q_1,Q_2]$, if it exists, we have three possibilities.\\ 
\\
Possibility 1: {\it Truncating after commuting}\\
We use the the expressions (\ref{3.58}) with all labels $\alpha$
unconstrained, compute the commutator $[Q_1,Q_2]$ term by term 
corresponding to the terms labelled by 
$N,k$ and $N',k'$ respectively with $N\le N_1, k\le N,\;N'\le N_2,
k'\le N'$ respectively at fixed 
labels $\alpha$ and restore normal order using lemma \ref{la3.1}.  
Then for each normal ordered monomial that is produced we individually 
truncate the sums over all occurring 
$\alpha$'s in that normal ordered monomial and finally for each of those 
monomials investigate the possible limiting patterns which result in 
well defined quadratic forms.\\
Possibility 2: {\it Truncating before commuting}\\
We use expression (\ref{3.58}) and truncate the sums over 
$\alpha_1,..,\alpha_N$ in the monomial corresponding to $N,k$ 
by $\alpha_l\le M_{j,N,k,l};\;j=1,2;\;N=1,..,N_j\;
k=0,..N;\; l=1.,,.N$, then we compute the commutator of 
$[Q_{1,\{M_1\}},Q_{2,\{M_2\}}]$ of the so truncated quadratic forms where 
$\{M_j\}$ stands for the collection of those $M_{j,N,k,l}$, restore 
normal order using lemma \ref{3.1} and finally investigate the possible 
limiting patterns that result in a well defined quadratic form.\\
Possibility 3: {\it Mixture}\\
One applies possibility 1 in the sense that one defines
\be \label{3.59}
[Q_1,Q_2]:=\sum_{N=0,N'=0}^{N_1,N_2}\;\sum_{k,k'=0}^{N,N'}\;
[[Q_1]^k_N,[Q_2]^{k'}_{N'}]
\ee
and then applies possibility 2 to each term in the finite sum (\ref{3.59}).\\
\\
The difference between the first two possibilities is that in the first 
possibility we have more truncation lables than in the second possibility.
This is because in the second possibility only the original truncation labels
$\{M_1\}, \{M_2\}$ appear in all possible monomials that are produced 
by the normal ordering identity (\ref{3.57}) while in the first possibility 
each term on the right hand side of (\ref{3.57}) gets equipped with its own 
individual truncation labels. Thus the first possibility allows for 
more limiting patterns and thus more possibilities to generate a well
defined quadratic form. For the same reason it potentially also generates
more ambiguity as there may be more than one way to generate a well 
defined quadratic form. One may argue for the first possibility by taking 
the point of view that the non-truncated quadratic forms $Q_1,Q_2$ are 
well defined, thus the commutator should be based on those and the 
{\it afterwards}  termwise truncation in the decomposition into normal 
ordered expressions
is part of the regularisation procedure for the commutator just as it 
is for any quadratic form by definition \ref{def3.1} as it was also 
used to defined $Q_1,Q_2$. One may argue for the second 
possibility by taking the point of view that products of quadratic forms 
are ill-defined and thus they should be truncated before computing the 
commutator. That each term $[Q_j]^k_N$ is equipped with its individual 
truncation labels $M_{j,N,k,l}$ is justified by the assumption that all
limiting patterns on $\{M_j\}$ result in the same quadratic form $Q_j$.
Finally the mixed point of view can be justified when each term $[Q_j]^k_N$ 
by itself is a well defined quadratic form so that each of them by itself 
is obtained via any of the limiting patterns. Thus (\ref{3.59}) is 
a possible definition if each of the individual 
$[[Q_1]^k_N,[Q_2]^{k'}_{N'}]$ can be turned into a well defined quadratic 
form.\\
\\
Since for each of these prescriptions and limiting patterns the 
unconstrained $[Q_1,Q_2]$ is formally obtained, all of these possibilities 
appear 
to be equally justified a priori. It is similar to the problem of giving 
meaning 
to a formal multiple sum series of complex numbers (in our case obtained 
as matrix elements) which is not absolutely convergent 
but may have 
several sequential conditionally convergent sums. Thus 
giving a concrete conditional summing prescription is {\it part of 
the definition} of such a series.
\begin{Example} \label{ex3.1} ~\\
Let for $m,n\in \mathbb{N}_0$ 
\be \label{3.58a}
a_{m,4n}:=[m+1+4n]^{-1},\;
a_{m,4n+1}:=-[m+1+4(n+1)]^{-1},\;
a_{m,4n+2}:=-[m+2+4n]^{-1},\;
a_{m,4n+3}:=[m+2+4(n+1)]^{-1}
\ee
Then the series $s:=\sum_{(m,n)\in\mathbb{N}_0^2}\; a_{m,n}$ is ill defined as 
it stands as it is not absolutely convergent. The row sums 
\be \label{3.58b}
\sum_m a_{m,n}=\pm \infty
\ee
are plain divergent. The column sums are conditionally convergent
\ba \label{3.58b1}
&& \sum_n a_{m,n}:=
\sum_{n=0}^\infty\{
\frac{1}{m+1+4n}-\frac{1}{m+1+4(n+1)}\}
-\sum_{n=0}^\infty\{
\frac{1}{m+2+4n}-\frac{1}{m+2+4(n+1)}\}
\nonumber\\
&=&
4\sum_{n=0}^\infty
\frac{1}{(m+1+4n)(m+1+4(n+1))}
-4\sum_{n=0}^\infty
\frac{1}{(m+2+4n)(m+2+4(n+1))}
\nonumber\\
&=&\frac{1}{m+1}-\frac{1}{m+2}=\frac{1}{(m+1)(m+2)}
\ea
and thus the sequential and conditional sum
\be \label{3.58c}
s:=\sum_m\;\sum_n\; a_{m,n}=1
\ee
converges in this limiting pattern.
\end{Example}
Note that this example is smilar to but different from examples 
illustrating Riemann's rearrangement theorem \cite{33} which states 
that a conditionally convergent series $\sum_N c_N$ of 
real numbers can be made to converge 
to any real number by using a bijection 
$b: \mathbb{N}\to \mathbb{N};\; N\mapsto b(N)$ on the indices 
and considerung $\sum_N c_{b(N)}$ instead. While $\mathbb{N}_0^2$ is in 
bijection with $\mathbb{N}$ (e.g. via Cantor's map \cite{33} $B$) and we can thus 
consider above $a_{m,n}=a_{B(N)}=c_N$ as a series in $c_N$, summing rows or 
columns first does not correspond to $N\to \infty$ but rather to 
taking limits of subseries of $c_N$ and then summing the limits 
of those subseries. 
The rearrangement theorem however can be applied to those subsequences 
which in the above example are also only conditionally convergent.\\  
~\\
In what follows we explore the mixed possibility 3, i.e. 
we will treat linear and bi-linear terms in the Gauss constraint as well
as the tri-linear and quadri-linear terms in the Hamiltonian constraint as 
individually defined quadratic forms and then consider for those truncation 
before commutation.  Moreover, we will constrain the limiting patterns 
to be such that in normal ordered expressions of the form
$: F_{\alpha_1} .. F_{\alpha_k} G_{\alpha_{k+1}}..G_{\alpha_N}:$ we truncate 
$\alpha_k\le M_k$. Definition \ref{def3.1} allows to be more general, e.g. 
in the normal ordered decomposition
\be \label{3.59a} 
\sum_{\alpha,\beta} E_{\alpha\beta} :F_\alpha G_\beta:=
\sum_{\alpha,\beta} E_{\alpha\beta} A_\alpha A_\beta
+[-E_{\alpha\beta}+E_{\beta\alpha}]\; A_\alpha^\ast A_\beta 
-E_{\alpha\beta} A^\ast_\alpha A^\ast_\beta
\ee
we could truncate $\alpha,\beta$ in the the first, second, third term
respectively by $M_1,M_2$ and $M_3,M_4$ and $M_5,M_6$ respectively instead
of synchronising $M_1=M_3=M_5, M_2=M_4=M_6$. This is exactly the difference 
between (\ref{3.52a}) and (\ref{3.52}). Thus we choose to work with 
(\ref{3.52a}). The advantage will be a much reduced computational 
effort using the fact that $[F_\alpha,F_\beta]=[G_\alpha,G_\beta]=0$.
\begin{Proposition} \label{prop3.2} ~\\
Consider the truncated constraint ingredients
\be \label{3.60a}
G_{1;M_1}(r),\;G_{2;M_1,M_2}(r),\;C_{2;M_1,M_2}(u),\;
C_{0;3,M_1..M_3}(f),\;C_{0;4,M_1..M_4}(f)
\ee
defined in (\ref{3.52a}) and 
define the constraint commutators 
by 
\ba \label{3.60b}
[G(r),G(s)] &:=& \frac{1}{2}\;\{
[G_1(r),G_1(s)]-[G_1(s),G_1(r)]\}  
+\frac{1}{2}\{[G_2(r),G_2(s)]-[G_2(s),G_2(r)]\}
\nonumber\\
&& +\{[G_1(r),G_2(s)]-[G_1(s),G_2(r)]\}
\nonumber\\
{[}C(u),C(v)] &:=& \frac{1}{2}\;\{[C_2(u),C_2(v)]-[C_2(v),C_2(u)]\}
\nonumber\\
{[}C(u),G(r)] &:=& \{[C_2(u),G_1(r)]\}+\{[C_2(u),G_2(r)]\}
\nonumber\\
{[}G(r),C_0(f)] &:=&
\{[G_1(r),C_{0,3}(f)]\}+\{[G_2(r),C_{0,3}(f)]\}
+\{[G_1(r),C_{0,4}(f)]\}+\{[G_2(r),C_{0,4}(f)]\}
\nonumber\\
{[}C(u),C_0(f)] &:=&
\{[C_2(u),C_{0,3}(f)]\}+\{[C_2(u),C_{0,4}(f)]\}
\nonumber\\
{[}C_0(f),C_0(g)] &:=& \frac{1}{2}\;\{
[C_{0;3}(f),C_{0;3}(g)]-[C_{0;3}(g),C_{0;3}(f)]\}
+\frac{1}{2}\;\{[C_{0;4}(f),C_{0;4}(g)]-[C_{0;4}(g),C_{0;4}(f)]\}
\nonumber\\
&& +\{[C_{0;3}(f),C_{0;4}(g)]-[C_{0;3}(g),C_{0;4}(f)]\}
\ea
Consider each curly bracket expression on the right hand side of 
(\ref{3.60b}) individually. Each such curly bracket involves either one 
or two commutators. If one commutator appears, we just replace the 
quadratic forms by their correspondent in the list (\ref{3.60a}). 
E.g. $[C_2(u),C_{0;3}(f)]$ is replaced by  
$[C_{2;M_1,M_2}(u),C_{0;3;M_3,..,M_5}(f)]$ with arbitrary finite 
cut-offs $M_1..M_5$. 
If two commutators appear we replace the quadratic forms also 
by the correspondent in the list (\ref{3.60a}) but we use the same cut-offs 
in both commutators. For example we replace 
$[C_{0;3}(f),C_{0;4}(g)]-f\leftrightarrow g$ by 
$[C_{0;3;M_1,..,M_3}(f),C_{0;4;M_4,..,M_7}(g)]- f\leftrightarrow g$ with 
arbitrary finite cut-offs $M_1,..,M_7$.
  
Then for each such curly bracket there exists 
a limiting pattern such that 1. the limiting process results 
in a well defined 
quadratic form and 2. those resulting quadratic forms add up 
exactly to the normal 
ordered quadratic forms corresponding to the Fock quantisations of the 
right hand sides of the corresponding classical Poisson bracket calculations.
That is
\be \label{3.60}  
[\;:Z_1:\;,\; :Z_2:\;]=i\;:\;\{Z_1,Z_2\}\;:
\ee
where $Z_1,Z_2$ are any of the $G(r), C(u), C_0(f)$. 
\end{Proposition}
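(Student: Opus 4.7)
The plan is to work curly bracket by curly bracket through (\ref{3.60b}), using truncation-before-commuting (possibility 2 from the discussion preceding the proposition) with the synchronised cut-offs prescribed in the statement, and then to exploit the freedom in the limiting pattern $P$ to cancel divergent and spurious contributions against the antisymmetrisation that is built into each curly bracket. Since each truncated object $G_{1;M_1}, G_{2;M_1 M_2}, C_{2;M_1 M_2}, C_{0;3;M_1..M_3}, C_{0;4;M_1..M_4}$ is a polynomial in the smeared operators $A_\alpha, A_\alpha^\ast$ with $\alpha \le M_i$, each truncated commutator is a genuine densely defined operator on $\mathcal{D}$, so all manipulations before the limit are on safe ground.

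For a single curly bracket, say $\{[Z_1(r), Z_2(s)] - r \leftrightarrow s\}$, I would first apply Lemma \ref{la3.1}: (\ref{3.55}) splits each commutator of normal ordered monomials into two pieces, and (\ref{3.57}) expresses each surviving bracket of annihilators with creators as a sum of \emph{contraction patterns} multiplied by Kronecker deltas $\delta_{\alpha\beta} = \delta_{IJ}\delta_{mn}\delta_{ij}$. Summing $\delta_{IJ}$ against the basis $e_I$ and the coefficients $r_\alpha, u_{\alpha\beta}, f_{\alpha\beta\gamma}, f_{\alpha\beta\gamma\delta}$, each contraction produces a partial resolution of unity $\sum_{I \in S_M} e_I(x) e_I(y)^\ast$ which, by (\ref{3.36}), converges weakly to $\omega^{-1}(y)\delta(x,y)$ as $M \to \infty$. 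This is nothing but the quantum image of the basic Poisson bracket (\ref{3.45}), so the contractions will reproduce exactly the normal ordered Fock quantisation of the corresponding classical Poisson bracket computed in (\ref{3.43}), provided the contracted labels are sent to infinity first in the pattern $P$.

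The role of the antisymmetrisation and of the shared cut-offs in each curly bracket is to eliminate the contractions that do \emph{not} match the classical bracket. In (\ref{3.57}) the sum over contraction patterns splits into pieces that are odd versus even under $r\leftrightarrow s$ (or $u\leftrightarrow v$, $f\leftrightarrow g$); the odd pieces, once the Kronecker deltas are summed against the basis, reproduce the terms in (\ref{3.43}), while the even pieces (which at generic cut-off contain the potentially divergent ``central'' contributions) cancel pointwise at finite $\{M\}$ because both sub-commutators in a single curly bracket were forced to carry identical cut-offs. I would then choose the limiting pattern so that within each monomial produced by (\ref{3.57}), the cut-offs on the contracted slots are sent to infinity first and those on the uncontracted slots last; the latter limits simply reproduce the weak limits already used in Proposition \ref{prop3.1} to define $:Z_1:$ and $:Z_2:$, while the former deliver the delta functions. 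Because in self-dual variables the structure function $Q^{ab} = E^a_j E^b_k \delta^{jk}$ on the right-hand side of (\ref{3.43}) is polynomial, its normal ordered Fock quantisation falls within the framework of Section \ref{s3.1}, so the identification (\ref{3.60}) makes sense as an identity of quadratic forms.

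The main obstacle will be the highest order pieces $\{[C_{0;3}(f), C_{0;4}(g)] - f\leftrightarrow g\}$ and $\{[C_{0;4}(f), C_{0;4}(g)] - f\leftrightarrow g\}$ in $[C_0(f), C_0(g)]$. Here the commutator of monomials of total order up to seven or eight generates, via (\ref{3.57}), sums of contraction patterns with up to three or four simultaneous contractions, each of which needs its cut-offs sent to infinity in the correct order for the delta-function limits to assemble into a single $\delta(x,y)$ rather than produce coincidence divergences. One must verify, with careful bookkeeping of the $\epsilon$-tensor contractions in (\ref{3.47})--(\ref{3.49}), that (i) no even-under-$f\leftrightarrow g$ contribution survives the antisymmetrisation, and (ii) the surviving odd contributions, once re-expressed through the inverse of (\ref{3.44}) and re-normal-ordered, reproduce exactly the polynomial $-Q^{ab}(f g_{,b} - g f_{,b}) C_a$ plus the $-4 Q(\omega, A)$ correction in the shift argument of $C$ on the right-hand side of (\ref{3.43}). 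For the lower-order curly brackets $[G,G], [G,C], [G,C_0], [C,C], [C,C_0]$ only one or two contractions appear and the same strategy applies in direct analogy with the spatial diffeomorphism computation already carried out in \cite{29}, which the present scheme is designed to generalise.
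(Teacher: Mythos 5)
Your overall route coincides with the paper's: truncate before commuting, restore normal order via Lemma \ref{la3.1}, note that the single-Kronecker contractions automatically return the normal ordering of the classical bracket, and use the freedom in the limiting pattern to dispose of the multi-Kronecker terms. However, the mechanism you invoke for that last, decisive step does not work. The dangerous double-Kronecker terms arise only when a monomial of order $2$ meets one of order $3$ or $4$, or two monomials of order $\ge 3$ meet, i.e. in $[G_2(r),C_{0;3/4}(f)]$, $[C_2(u),C_{0;3/4}(f)]$ and in the Hamiltonian--Hamiltonian pieces. The first four of these sit in curly brackets of (\ref{3.60b}) containing a \emph{single} commutator, so there is no $r\leftrightarrow s$ or $f\leftrightarrow g$ antisymmetrisation available and your ``even pieces cancel'' argument cannot even be formulated there. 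In the paper those terms vanish for entirely different reasons: internal index structure ($u_{\alpha\beta}\propto\delta_{ij}$ contracted against $f_{\alpha\beta\gamma}\propto\epsilon_{ijk}$ in items 5A.i and 5B.i; the symmetries $f_{INpK}=f_{NIpK}$ and the total symmetry of $f_{IJKL}$ in items 4C.i and 4D.i) combined with synchronising specific cut-offs ($M_1=M_2$ on the two $F$-slots of $C_{0;3/4}$). Your proposal never identifies this mechanism.

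For $[C_0(f),C_0(g)]$ your central claim — that the even-under-swap contributions ``cancel pointwise at finite $\{M\}$ because both sub-commutators carry identical cut-offs'' — is explicitly false: the double-Kronecker contribution (\ref{3.163}) does not vanish at finite cut-offs, not even after subtracting $f\leftrightarrow g$ and choosing symmetric ranges. Whether its limit is zero or an extra $\Lambda\,\delta^{ab}$ term, with $\Lambda$ a regularisation of $\delta(x,x)$ as in (\ref{3.174})--(\ref{3.178}) and (\ref{3.184h}), depends on the order in which the surviving sums are performed; the cancellation, when it occurs, happens only \emph{after} the inner sum has been carried out and has produced an expression depending on $fg$ without derivatives (ultralocality), upon which the $f\leftrightarrow g$ subtraction acts. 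Exhibiting a pattern for which this works simultaneously for the two-, three- and four-Kronecker terms of 6A--6C (the triple-Kronecker terms require separate $\epsilon$-contraction identities, and $[C_{0;4},C_{0;4}]$ a strictly sequential pattern), and verifying that the single-Kronecker remainders reassemble into $:Q^{ab}\omega_a F_{bc}^j E^c_j:$, is precisely the substance of the proposition; your proposal defers exactly this (``one must verify\ldots''), so the essential part of the proof is missing.
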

Before going through the straightforward but tedious and lengthy proof
a few remarks are in order:\\
\\
{\it 1. Sources of divergences in commutator calculations and 
mechanisms to avoid them}\\
\\ 
When we use (\ref{3.57}) to restore normal order 
in a commutator we end up with normal ordered monomials that come 
with coefficients $D$ that are products of the coefficients of the quadratic 
forms that we compute the commutator of, times products of Kronecker symbols.
Thus such a term is of the form 
\be \label{3.61}
\sum_{\alpha_1\le M_1..\alpha_L\le M_L}\; 
D_{\alpha_1..\alpha_L}\; \prod_{k=1}^n\; \delta_{\alpha_{r_k},\alpha_{s_k}}\;
:\prod_{t\in\{1,..,L\}-\{r_1,..,r_n,s_1,..,s_n\}}\;C_{\alpha_s\sigma_s}:\; 
\ee
where $\sigma=\pm, C_{\alpha,+}=A_\alpha^\ast, C_{\alpha,-}=A_\alpha$ and
$\{r_1,..,r_n,s_1,..,s_n\}\subset \{1,..,L\}$ with $2n\le L$ and $n\ge 1$.
Clearly the labels $\alpha_t,\; t\not \in \{r_1,..,r_n,s_1,..,s_n\}$ are 
automatically truncated when sandwiching (\ref{3.61}) between Fock states.
Thus it is the sum over the remaining labels that is in danger of 
diverging. It turns out that only terms with $n\ge 2$ are dangerous. 
This can be seen from the typical example 
$D_{\alpha_1..\alpha_4}=<e_{I_1} e_{I_2}, f>\; <e_{I_3} e_{I_4}, g>$
where $f,g$ are smearing functions of the constraints.

If we now  perform the sum over $I_1=I_3$ then by the completeness relation 
we end up with $<e_{I_2}\; e_{I_4}, f\;g>$ which is finite. If we perform 
a second sum over $I_2=I_4$ then we end up with $<[\sum_I e_I^2], fg>$
which diverges because $\sum_I e_I(x)\;e_J(y)=\delta(x,y)$ so that 
\be \label{3.62}
\Lambda(x):=\lim_{M\to \infty}\; \sum_{I\in S_M}\; e_I(x)^2
\ee
diverges being formally equal to $\delta(x,x)$. Such diverging sequences 
appear all over the place in the {\it product} of quadratic forms. 
However, the commutator maybe better behaved because before we sum over 
$I_2=I_4$ there is a second term $<e_{I_2}\; e_{I_4}, g\;f>$ 
which comes from the fact that 
$[Q(f),Q(g)]=\frac{1}{2}([Q(f),Q(g)]-[Q(g),Q(f)]$ and these two terms would in
fact cancel (so called ultra locality). This is the reason for why 
we considered the curly bracket expressions with two commutators and equally 
chosen cut-offs so that such a cancellation can occur.
This also shows why a sequential limiting pattern is of advantage.

On the other hand, we may have a term with a derivative of 
a smearing function
$D_{\alpha_1..\alpha_4}=<e_{I_1} e_{I_2}, f>\; <e_{I_3} e_{I_4}, g_{,j}>$.
Performing the first sum still delivers a finite result
$<e_{I_2}\; e_{I_4}, f\;g_{,j}>$ but subtracting the term with $f,g$ 
interchanged in this case does not vanish but yields
$<e_{I_2}\; e_{I_4}, [f\;g_{,j}-f_{,j} g]>$. Now performing the second sum 
again gives $<[\sum_I e_I^2]\;[f\;g_{,j}-f_{,j} g]>$ which diverges. 
Thus, to avoid such divergences one will try to find a limiting pattern 
which avoids derivatives of smearing functions $f,g$ of the constraints. 
For instance in 
\be \label{3.62a}
<e_I\; f, e_{J,j}>\; <e_K\; g, e_L> \; \delta_{IK}\delta_{JL} 
- f\leftrightarrow g
\ee
carrying out the sum over $I=K$ first gives zero by the same mechanism as 
above while carrying out the sum over $J=L$ first gives (integrations 
by parts do not create boundary terms because the $e_I$ are Schwartz functions
for $\sigma=\mathbb[R]^3$ or all functions are periodic for $\sigma=T^3$)
\be \label{3.63}
-<[e_I\; f]_{,j} e_K, g> \; \delta_{IK} 
- f\leftrightarrow g
=<[e_I\; e_K, [g_{,j} f- f_{,j} g]> \; \delta_{IK} 
\ee
and the sum over $I=K$ is again divergent. This demonstrates that the 
existence or non existence of the commutator as a quadratic form 
can depend rather drastically on the limiting pattern which is reminiscent 
of the situation with double sum series mentioned in example \ref{ex3.1}. 

Now the observation 
is that the constraints except for the linear term in the Gauss constraint
which due to its linearity cannot produce more than one Kronecker symbol 
and thus does not suffer from the above danger, all constraints contain the 
$E^a_j$ witout derivatives (if we use $F_{ab}^j E^a_j=C_a+A_a^j G_j$ 
instead of $C_a$). Thus a promising limiting pattern is such that 
one carries out the sum over those $\alpha$ that label the $F_\alpha$ before 
carrying out the sum over those $\alpha$ that label the $G_\alpha$ which 
do come with derivatives. This strategy turns out to be successful here and 
may carry over to the Lorentzian constraints in all dimensions formulated 
in terms of ADM variables $P,q$ where only $q$ carries (up to second order) 
derivatives but not $P$. \\
\\
2. {\it General strategy of the computation}\\
\\
Step 1: At finite cut-offs we simply decompose a term of the form 
\be \label{3.63a}
[:F_{\alpha_1}..F_{\alpha_p}\;G_{\beta_{p+1}}..G_{\beta_{m}}:,\;
F_{\beta_1}..F_{\beta_q}\;G_{\beta_{q+1}}..G_{\beta_n}]
\ee
into annihilation and creation operators. This yields 
$2^{m+n}$ terms of the form 
\be \label{3.63b}
[A_{\gamma_1}^\ast..A_{\gamma_k}^\ast\;A_{\gamma_{k+1}}..A_{\gamma_m},\;
A_{\delta_1}^\ast..A_{\delta_l}^\ast\;A_{\delta_{l+1}}..A_{\delta_n}]
\ee
where $0\le k\le m,\; 0\le l\le n$ and the $\gamma,\delta$ respectively 
are permutations of the $\alpha,\beta$ labels respectively.\\
\\
Step 2: Then we apply lemma \ref{la3.1} to (\ref{3.63}). 
For each $u=1,..,{\sf min}(m-k,l)$ this gives 
${m-k \choose u}\;{l \choose u}$ 
and for each $u=1,..,{\sf min}(k,n-l)$ this gives 
${k \choose u}\;{n-l \choose u}$ normal ordered
products of $k+l-u$ creation and $m+n-k-l-u$ annihilation operators
respectively 
with coefficients that are totally symmetric 
linear combinations of $u!$ products of $u$ 
Kronecker symbols.\\ 
\\
Step 3: We now substitute 
$A_\alpha=[G_\alpha+F_\alpha]/2,\;
A_\alpha^\ast=[G_\alpha-F_\alpha]/2$ within the normal ordering symbol
which gives $2^{m+n-2u}$ terms at given $u$ in step 2.
Since within the normal ordered symbol we can permute all $F,G$ operators,
we can rearrange the commutator into normal ordered monomials of 
the $F,G$ with coefficients which involve at least one and 
at most ${\sf min}(m,n)$ 
products of Kronecker symbols of the form 
$\delta_{\alpha_i,\beta_j},\; i=1,..,m; j=1,..,n$. Note that for 
$m=n, p=q$ there will be no term with $n$ factors of Kronecker symbols.
This is because, suppressing indices, in 
$[:F^p G^{m-p}:,:F^p G^{m-p}:]$ the contributions to $\delta^m$ 
are coming 
from 
\be \label{3.63c}
[A^m+(-1)^p (A^\ast)^m,A^m+(-1)^p (A^\ast)^m]=
(-1)^p\; (
[{\sf symm}\otimes^m \delta]-[{\sf symm}\otimes^m \delta]^T)=0
\ee
where symm means total symmetrisation and T transposition.\\
\\
Step 4: To estimate the number of terms that one has to manipulate 
in these first 3 steps we note that of the $2^{m+n}$ 
terms obtained from step 1 there are 
${k \choose m}\; {l \choose n}$ 
terms of the type (\ref{3.63c}), hence 
we obtain 
\be \label{3.63d}
N_{m,n}=
\sum_{k=0}^m\;\sum_{l=0}^n\;{m \choose k}\; {n \choose l}\;
[
\sum_{u=1}^{{\sf min}(m-k,l)}\; u!\; {m-k \choose u}\;{l \choose u}\; 
2^{m+n-2u} 
+
\sum_{u=1}^{{\sf min}(n-l,k)}\; u!\; {n-l \choose u}\;{k \choose u}\; 
2^{m+n-2u} 
]
\ee
terms in total. To estimate the number of these terms we 
approximate the binomial distribution 
by the Poisson distribution  $2^{-k} {k \choose u}\approx e^{-1/2}
\frac{2^{-u}}{u!}$. This allows to estimate the sums over $u$ by extending 
to $u=\infty$. Then  
\ba \label{3.63e}
&& N_{m,n}\approx
\sum_{k=0}^m\;\sum_{l=0}^n\;{m \choose k}\; {n \choose l}\;
2^{m+n}\;
e^{-1}\;[\sum_{u=0}^\infty\;\frac{2^{-4u}}{ u!}]
[2^{m-k+l}+2^{n-l+k}]
\nonumber\\
&=& 2^{m+n}\;e^{15/16}\;
\sum_{k=0}^m\;\sum_{l=0}^n\;{m \choose k}\; {n \choose l}\;
[2^{m-k+l}+2^{n-l+k}]
\ea
Once again substituting for the Poisson distribution alllows to 
perform the sum over $k,l$ which we extend to $\infty$ 
\ba \label{3.63f}
N_{m,n}\approx
&=& 
2^{m+n}\;e^{-31/16}\;
\sum_{k=0}^m\;\sum_{l=0}^n\;
[2^{m-k+l}+2^{n-l+k}]\; \frac{2^{m-k}}{k!}\frac{2^{n-l}}{l!}
\nonumber\\
&\approx&
4^{m+n}\;e^{-31/16}\;
\sum_{k=0}^m\;\sum_{l=0}^n\;
[2^{-2k}+2^{-2l}]
\frac{1}{k!\;l!}
=2\; 4^{m+n}\;e^{-13/16}\;
\ea
Thus we obtain an order of $4^{m+n}$ operations. 

With an eye towards 
the computation for Lorentzian signature in $q,P$ variables with 
top polynomial degree $m=n=10$ in 3 spatial 
dimensioons this yiels $10^{12}$ terms to consider.
For the present Euclidian calculation we have top degree $m=n=4$ which yields 
merely $10^5$ terms to consider. Yet, the final expression
in terms of normal ordered monomials of $F,G$ will contain
only a small number of linearly independent terms. This is due to 
the fact that the original expression (\ref{3.63a}) is totally 
symmetric in the four groups of indices corresponding to the 
$p,q$ factors of $F$ respectively and the $m-p, n-q$ factors of 
$G$ respectively. While in the present case a calculation by hand is still
possible and was indeed done by the author, it is still a good idea 
to use computer algebra tools to verify the calculations. In the present case
we wrote a simple, not yet speed optimised code using pthon's SymPy package 
for symbolic calculus
which in principle can deal with any $m,n$. For $m=n=4$ the computation 
takes a few seconds with a 2.3MHz Intel core i5 processor. 
Therefore the
computation will definitely be doable in Lorentzian signature in
$(p,q)$ variables using parallel computing which on the author's machine 
would momentarily take half a year.\\
\\
Step 5: The crucial step is now to show that there exists at least 
one limiting pattern such that the terms involving more than 
one factor of Kronecker $\delta$ factors yield a finite result 
in the limiting process subordinate to that pattern.
Note that the classical Poisson bracket 
calculation gives precisely one factor of Kronecker $\delta$ and 
therefore this term returns the result of the classical calculation 
automatically normal ordered. 
The chosen pattern must be the same for the terms 
involving $1,2,..,{\sf min}(m,n)-\delta_{m,n}$ factors of Kronecker
$\delta$ because we have chosen truncation before commuting but 
individually for each $F,G$ monomial into which a constraint 
decomposes which we regard as quadratic forms in their own right.\\
\\
\begin{proof}:\\
\\
{\it Organisation:}\\
We compute commutators in increasing order of complexity and 
aim to show that there exist limiting pattern such that (all 
constraints are normnal ordered by definition)\\
1. Gauss-Gauss: $[G(r),G(s)]=i\;G(-r\times s)$.\\
2. Diffeomorphism-Diffeomorphism: $[C(u),C(v)]=i\;C(-[u,v])$.\\   
3. Diffeomorphism-Gauss: $[C(u),G(r)]=i\;G(-u[r])$.\\   
4. Gauss-Hamiltonian: $[G(r),C_0(f)]=0$.\\
5. Diffeomorphism-Hamiltonian: $[C(u),C_0(f)]=i\;H(-u[f])$.\\   
6. Hamiltonian-Hamiltonian
\be \label{3.64}
[C_0(f),C_0(g)]=i\;\int\;d^Dx [4(f g_{,a}-f_{,a} g)]\; 
:\;Q^{ab}\; F_{bc}^j \; E^b_j\;:
\ee   
The last relation is, like the others, exactly $i$ times 
the normal ordering of the 
classical Poisson bracket 
expression and thus {\it non-anomalous}. However, while 
$Q^{ab} F_{bc}^j E^b_j=Q^{ab}\;[C_b+A_b^j G_j]$ is a linear combination  
of constraints with second order and third order monomial 
structure functions we have 
\be \label{3.65}
:\;Q^{ab}\; F_{bc}^j \; E^b_j\;:\not=
C_b Q^{ab}+G_j A_a^j
\ee
because normal ordering is in particular symmetric. This is because the 
expression on the right hand side of (\ref{3.65}) is ill defined on the 
Fock space even as a quadratic form.  There is no contradiction 
for a solution $l$ of the constraints to disobey $l([C_0(f),C_0(g)]\psi)=0$
even if $l[C(f)\psi]=0$ for all $f,\psi\in {\cal D}$ as $C_0(f)\psi\not\in 
{\cal D}$ as we have emphasised before.\\
\\
To organise the calculation we follow the above five steps applied 
to the commutators in the order 1.-6. listed above. Now the Gauss 
constraint consists of $G=G_1+G_2$ where $G_1,G_2$ are of type
$:F:,\;:F\; G:$ respectively, the spatial diffeomorphism constraint
is $C=C_2$ which is of type $:F\; G:$ as well and finally the Hamiltonian 
constraint is $C_0=C_{0;3}+C_{0;4}$ where $C_{0;3},\; C_{0;4}$ are of 
type $:F\; F\; G:,\; :F\; F\; G\;G:$ respectively. Thus we have 4 different 
types of normal ordered monomials of $:F^p\; G^{m-p}:$ 
to consider with $(p,m)=(1,1), \; (1,2),\; (2,3),\; (2,4)$ and correspondingly
10 different commutators of type (\ref{3.63a}) corresponding to 
$(q,n)=(1,1), \; (1,2),\; (2,3),\; (2,4)$ as well and 
it suffices to consider $m\le n$. This part of the calculation can 
be performed following steps 1-3 above and is listed in appendix \ref{sa}.
It has been obtained both by hand and by using computer algebra.
After that we consider the terms with at least two factors of Kronecker
symbols starting with the highest number of factors. As we have 
explained above, two Kroneckers can only be obtained for 
$m=2<n\le 4,\; 3\le m\le n\le 4$ (2+3 
terms) and  
three Kroneckers can only be obtained for $m=3<n\le 4,\; m=n=4$ (1+1 terms).
All other commutators are safe of singularities and the precise 
limiting pattern will be immaterial. This means that the 
following comutators in total are unproblematic: Gauss -- Gauss, 
Gauss -- Diffeo, Diffeo -- Diffeo. The fact that 
there is no anomalous 
term in the Diffeo -- Diffeo commutator appears to 
be in conflict with the fact that in one 
spatial dimension there is the well known Virasoro central term in the 
constraint algebra generated by the spatial diffeomorphism $D$ 
and Hamiltonian 
constraint $C$, say in parametrised field theory PFT \cite{31},
given the fact that the spatial diffeomorphism constraint is 
theory independent in the sense that it follows purely algebraically
from the tensor type that one considers, the information 
about the concrete Lagrangian of the theory is not encoded in the spatial
diffeomorphism constraint but rather in the Hamiltonian constraint.
However, a closer look reveals the following: In PFT  
one considers instead of $C,D$ (in one spatial dimension these 
are quadratic in the fields and their conjugate momenta) 
two Poisson 
commuting constraints $C_\pm=D\pm C$ with Poisson algebras ismorphic to 
that of $D$ and the algebra of the $C_\pm$ {\it is} anomalous by central 
terms. However, these two central terms cancel in the algebra 
of the $D=C_+ + C_-$ while they add in $C=C_+ - C_-$. Thus only 
$C$ is anomalous in PFT, there is no conflict as far 
as ``universal'' (theory independent) part $D$ of the algebra is concerned. 
The absence of the anomaly in Euclidian GR in 3+1 dimensions is also not in
conflict with the presence of the anomaly in PFT in 1+1 dimensions because 
this concerns the ``non-universal'' (theory dependent) part $C$ of the 
algebra which is very different in these two theories.\\
\\
{\it Sandwiching between Fock states}\\
In what follows we will use all the time the following ``sandwiching 
between Fock states'' argument: All commutators and normal reorderings are 
performed in a controlled way at finite cut-offs. The end result of that 
step depends on Kronecker symbols and surviving monomials of creation and 
annihilation operators in normal ordered form. We are seeking for 
weak limit of that object, i.e. we sandwich it between any Fock states 
$\psi,\psi'$ and remove the cut-offs according to some limiting 
pattern to be derived, and which is independent of $\psi,\psi'$.  
Now for any $\psi,\psi'$ we find some $M_0$ 
dependening on $\psi,\psi'$ such that $\psi,\psi'\in {\cal D}_{M_0}$.
As we take the cut-offs to infinity according to some pattern, eventually
even the smallest cut-off of that pattern
exceeds $M_0$. At this point all labels $\alpha$
which label one of the surviving annihilation and creation operators is 
constrained by $M_0$ rather than its assigend cut-off. The sum with 
respect to that label is no longer dependent on it so that the removal of the 
pre-assigned cut-off is trivial and the details of the limiting 
pattern get reduced to the remaining labels. The remaining task is then to 
solve the Kronecker symbols, to determine a limiting pattern consistent 
with them and to carry out the remaining sums in the prescribed order. 
Thus if a Kronecker symbol $\delta_{\alpha\beta}$ is given with 
$\alpha\le M_1, \beta\le M_2$ and the pattern is such that $M_1\le M_2$ 
then we must solve for $\beta$ rather than $\alpha$. The end result 
will then depend on finite sums constrained by $M_0$ which then can 
be recognised as a quadratic form sandwiched between those $\psi,\psi'$ 
and the constraint on the sums by $M_0$ can be taken away so that the 
final result is independent of $\psi,\psi'$.\\ 
\\
\\
{\bf 1. Gauss-Gauss}\\
\\
We have to compute three curly brackets. \\
\\
{\bf 1A.}\\
The first curly bracket is, see (\ref{a.1})
\be \label{3.66}
[G_{1;M_1}(r),G_{1;M_2}(s)]
-[G_{1;M_1}(s),G_{1;M_2}(r)]=0-0=0
\ee
It vanishes trivially for any $M_1,M_2$ 
because $G_{1;M}(r)$ just depends on the combinations 
$F_\alpha$ which mutually commute. Therefore the resulting 
commutator quadratic form,
obtained for any limiting pattern, is given by
%e.g. for the coincident choice $M_1=M_2\to \infty$, is
\be \label{3.66a}
[G_1(r),G_1(s)]=0
\ee
~\\
{\bf 1B.}\\
The second curly bracket is, see (\ref{a.2}) 
\ba \label{3.67}
&&[G_{1;M_1}(r),G_{2;M_2,M_3}(s)]-
[G_{1;M_1}(s),G_{2;M_2,M_3}(r)]
=\frac{1}{2\sqrt{2}}\;
\sum_{\alpha\le M_1,\;\beta\le M_2,\gamma\le M_3}\; 
(r_\alpha s_{\beta\gamma}-r\leftrightarrow s)\; 
[F_\alpha,:G_\beta\; F_\gamma:]
%A_\beta^\ast\; F_\gamma+F_\gamma A_\beta]
\nonumber\\
&=&
 \frac{1}{\sqrt{2}}\;
\sum_{\alpha\le M_1,\;\beta\le M_2,\gamma\le M_3}\; 
(r_\alpha s_{\beta\gamma}-r\leftrightarrow s)\; 
\;\delta_{\alpha,\beta}\; F_\gamma
\ea
By the sandwiching argument, $\gamma\le M_0$ gets locked. Pick 
e.g. the limiting pattern $M_1\to \infty$ before $M_2$
and carry out the Kronecker symbol in summing over $\alpha$
resulting in 
\be \label{3.70}
\sum_{\beta\le M_2,\gamma\le M_0}\; 
[(r_\beta s_{\beta\gamma}-r\leftrightarrow s)\; F_\gamma
\ee
The order in which we now take $M_2,M_3\to \infty$ is irrelevant. 
We now take $M_2\to \infty$ and 
have with $\beta=(Imi),\gamma=(Jnj)$ (again summation over repeated 
lower case latin letters is understood)
\be \label{3.71}
\sum_\beta r_\beta\;s_{\beta\gamma}
=\sum_I\; r_{Imi} s_{Imi,Jnj}  
=\sum_I\; <e_I,r^i_{,m}>\;<s^l\;e_J,e_I>\;\delta_{mn} \epsilon_{lij}  
=<e_J,r^i_{,n} s^l>\; \epsilon_{lij}  
\ee
Thus the final result is 
\ba \label{3.72}
&& 
\frac{1}{\sqrt{2}}
\sum_{I\le M_0} \;
[<e_I,r^i_{,m} s^l>- r\leftrightarrow s] \epsilon_{lij}\; F_{Imj}
\nonumber\\
&=&
\frac{1}{\sqrt{2}}
\sum_{I\le M_0} \;
<e_I,(r^i  s^l\epsilon_{lij}>)_{,m}\; F_{Imj}
=i [-\frac{i}{\sqrt{2}}
\sum_{I\le M_0} \;
(-r\times s)_{Imi}\;F_{Imi}]
\ea
where in the last step we applied
the product rule.
Thus the second curly bracket with e.g. the limiting pattern
$M_1$ before $M_2,M_3$ yields
\be \label{3.73}
[G_1(r),G_2(s)]=i\; G_1(-r\times s)
\ee
{\bf 1C.}\\
The third curly bracket is given by, see (\ref{a.5}) 
\ba \label{3.74} 
&& -\frac{1}{8}\sum_{\alpha\le M_1,\beta\le M_2,\gamma\le M_3,\delta\le M_4}\;
[r_{\alpha\beta}\;s_{\gamma\delta}-r\leftrightarrow s]\;
[:G_\alpha F_\beta:,:G_\gamma F_\delta:]
\nonumber\\
&=& -\frac{1}{4}\sum_{\alpha\le M_1,\beta\le M_2,\gamma\le M_3,\delta\le M_4}\;
[r_{\alpha\beta}\;s_{\gamma\delta}-r\leftrightarrow s]\;
[\delta_{\beta\gamma}\;:F_\delta\; G_\alpha: 
-\delta_{\delta\alpha}\;:F_\beta\; G_\gamma:]
\ea
Pick the pattern $M_2,M_4$ before $M_1= M_3$ and solve the kroneckers 
for $\beta,\delta$ respectively, then upon sandwiching this becomes 
\ba \label{3.77} 
&& -\frac{1}{4}\sum_{\alpha\le M_0,\gamma\le M_3,\delta\le M_0}\;
[r_{\alpha\gamma}\;s_{\gamma\delta}-r\leftrightarrow s]\;
:F_\delta\; G_\alpha: 
\nonumber\\
&& +\frac{1}{4}\sum_{\alpha\le M_1,\beta\le M_0,\gamma\le M_0}\;
[r_{\alpha\beta}\;s_{\gamma\alpha}-r\leftrightarrow s]\;
:F_\beta\; G_\gamma:
\nonumber\\
&=& -\frac{1}{4}\sum_{\alpha\le M_0,\beta\le M_0,\gamma\le M_3}\;
[r_{\alpha\gamma}\;s_{\gamma\beta}-r\leftrightarrow s]\;
:G_\alpha \; F_\beta: 
\nonumber\\
&& +\frac{1}{4}\sum_{\gamma\le M_1,\beta\le M_0,\alpha\le M_0}\;
[r_{\gamma\beta}\;s_{\alpha\gamma}-r\leftrightarrow s]\;
:G_\alpha\; F_\beta:
\nonumber\\
&=& -\frac{1}{2}\sum_{\alpha\le M_0,\beta\le M_0,\gamma\le M_3}\;
[r_{\alpha\gamma}\;s_{\gamma\beta}-r\leftrightarrow s]\;
:G_\alpha \; F_\beta: 
\ea
We can take now the remaining cut-off away and have 
with $\alpha=(Imi),\beta=(Jmj),\gamma=(Kuk),\delta=(Lvl)$
\be \label{3.77a}
\sum_\gamma\; r_{\alpha\gamma} s_{\gamma\beta} 
=\sum_J\; <r_p e_I,e_J>  \epsilon_{pij}\;\delta_{mn}
<e_J, s_q e_L> \epsilon_{qkl} \delta_{uv}
\delta_{jk} \delta_{nu} 
= <r_p s_q, e_I e_L> \;\delta_{mv} \; 2\delta_{p[l}\delta_{q]i}
\ee
This results in 
\ba \label{3.78}
&& -\sum_{I,J\le M_0}\; <e_I e_J, (r_p s_q-r_q s_p)>\; 
\delta_{p[l}\delta_{q]i} :G_{Imi}\; F_{Jml}:
=-\frac{1}{2}\sum_{I,J\le M_0}\; <e_I e_J, (r_l s_i-r_i s_l)>\; 
:G_{Imi}^\ast F_{Jml}:
\nonumber\\
&=& \frac{1}{2}\sum_{I,J\le M_0}\; (r\times s)^l_{IJ} \epsilon_{lij} \delta_{mn}
:G_{Imi}\; F_{Jnj}:
\nonumber\\
&=& i\; 
[\frac{i}{2} 
\sum_{\alpha\beta\le M_0}\; (-r\times s)_{\alpha\beta}
\;:G_\alpha\; F_\beta:]
\ea
Thus the third curly bracket e.g. with the limiting pattern $M_2,M_4$ before
$M_1=M_3$ 
\be \label{3.79}
[G_2(r),G_2(s)]=i\; G_2(-r\times s)
\ee
It follows altogether
\be \label{3.80}
[G(r),G(s)]= i\; G(-r\times s)
\ee
\\
{\bf 2. Diffeomorphism-Diffeomorphism}\\
\\
In this case the only curly bracket to consider is, see (\ref{a.5}) 
\ba \label{3.81}
-\frac{1}{8}
&&\sum_{\alpha\le M_1,\beta\le M_2,\gamma\le M_3,\delta\le M_4}\;
[u_{\alpha\beta} \;v_{\gamma\delta}-u\leftrightarrow v]\;
[:F_\alpha G_\beta:,:F_\gamma G_\delta:]
\\
&=&
-\frac{1}{4}
\;\sum_{\alpha\le M_1,\beta\le M_2,\gamma\le M_3,\delta\le M_4}\;
[u_{\alpha\beta} \;v_{\gamma\delta}-u\leftrightarrow v]\;
(\delta_{\alpha\delta}\; :F_\gamma G_\beta:
-\delta_{\gamma\beta}\; :F_\alpha G_\delta:)
\ea
In \cite{29} we showed that it is possible to choose the coincidence limit 
$M_1=..=M_4\to \infty$. We show here that we may choose more general 
patterns without changing the result. 

We pick the pattern $M_1, M_3$ before $M_2=M_4$ and use the sandwiching 
argument, so that we can solve the Kronecker symbols for $\gamma,\delta$ 
respectively and obtain
\ba \label{3.85}
&& -\frac{1}{4}
\sum_{\beta\le M_0,\gamma\le M_0,\delta\le M_2}\;
[u_{\delta\beta} \;v_{\gamma\delta}-u\leftrightarrow v]
\;:F_\gamma\; G_\beta: 
+\frac{1}{4}
\sum_{\alpha\le M_0,\beta\le M_2,\delta\le M_0}\;
[u_{\alpha\beta} \;v_{\beta\delta}-u\leftrightarrow v]
\;:F_\alpha\; G_\delta:
\nonumber\\
&=& -\frac{1}{4}
\sum_{\alpha\le M_0,\beta\le M_0,\gamma\le M_2}\;
\{[v_{\alpha\gamma} \;u_{\gamma\beta}-u\leftrightarrow v]
-[u_{\alpha\gamma} \;v_{\gamma\beta}-u\leftrightarrow v]\}
\;:F_\alpha\; G_\beta:
\nonumber\\
&=& \frac{1}{2}
\sum_{\alpha\le M_0,\beta\le M_0,\gamma\le M_2}\;
[u_{\alpha\gamma} \;v_{\gamma\beta}-u\leftrightarrow v]
\;:F_\alpha\; G_\beta:
\ea
We can now take $M_2\to\infty$ and carry out the sums over $\gamma$ 
\ba \label{3.86}
&& \sum_\gamma
\{[u_{\alpha\gamma} v_{\gamma\beta} - u\leftrightarrow v\}
\nonumber\\
&=& \sum_K\; \{\; 
[<e_I, u_p [e_K]_{,p}> \delta_{mr}+<e_I e_K, u_{r,m}>]\;\delta_{ik}\;
[<e_K, v_q [e_J]_{,q}> \delta_{rn}+<e_K e_J, v_{n,r}>]\;\delta_{kj}
- u\leftrightarrow v\}
\nonumber\\
&=& \sum_K\; \{\;
[-<[e_I u_p]_{,p},e_K> \delta_{mr}+<e_I u_{r,m},e_K>]
[<e_K, v_q [e_J]_{,q}> \delta_{rn}+<e_K, e_J v_{n,r}>]\;\delta_{ij}
- u\leftrightarrow v\}
\nonumber\\
&=& 
\{(-[e_I u_p]_{,p} \delta_{mr}+e_I u_{r,m}),\;
(v_q [e_J]_{,q}\delta_{rn}+e_J v_{n,r})>\;\delta_{ij} - u\leftrightarrow v
\}
\nonumber\\
&=& 
\delta_{ij}\{
<e_I,[u,v]_p e_{J,p}>\; \delta_{mn}
+<e_I e_J, [u_{r,m} v_{n,r}-v_{r,m} u_{n,r}]>
+<e_I,[u_r(v_{n,m} e_J)_{,r}-v_r(u_{n,m} e_J)_{,r}]>
\nonumber\\
&&
+<e_I,[u_{n,m} v_r e_{J,r}-v_{n,m} u_r e_{J,r}]>
\}
\nonumber\\
&=&
\{<e_I, [u,v]_p e_{J,p}> \; \delta_{mn}+<e_I e_J, [u,v]_{n,m}>\}\delta^{ij}
\ea
It follows that the result of the calculation is 
\be \label{3.88}
i\; \frac{i}{2}
\sum_{\alpha\le M_0,\beta\le M_0}\; (-[u,v])_{\alpha\beta}
\;:F_\alpha\; G_\beta:
\ee
Thus with e.g. the limiting pattern $M_1=M_3$ before $M_2=M_4$ we have
\be \label{3.89}
[C(u),C(v)]=\; i\; C(-[u,v])
\ee
~\\
\\
{\bf 3. Diffeomorphism-Gauss}\\
\\
We have to consider two curly brackets.\\
\\
{\bf 3A.}\\
The first curly bracket is, see (\ref{a.2})
\ba \label{3.90}
&&\frac{1}{2\sqrt{2}}\sum_{\alpha\le M_1,\beta\le M_2, \gamma\le M_3}\;
u_{\alpha\beta}\; r_\gamma\; 
[:F_\alpha \; G_\beta:, F_\gamma]
\nonumber\\
&=&-\frac{1}{\sqrt{2}}\sum_{\alpha\le M_1,\beta\le M_2, \gamma\le M_3}\;
u_{\alpha\beta}\; r_\gamma\; \delta_{\beta\gamma}\; 
F_\alpha
\ea
By sandwiching, $\alpha$ gets locked and we may pick e.g. the pattern 
$M_3$ before $M_1,M_2$. Then we solve the Kronecker for $\gamma$ and 
(\ref{3.90}) becomes 
\be \label{3.91}
-\frac{1}{\sqrt{2}}\sum_{\alpha\le M_0,\beta\le M_2}\;
u_{\alpha\beta}\; r_\beta\; F_\alpha
\ee
The cut-off $M_2$ can now be removed and we find with $\alpha=(Imi),\;
\beta=(Jjn)$ 
\ba \label{3.92}
\sum_\beta \;  
&&u_{\alpha\beta}\; r_\beta
=\sum_J [-<[u_p e_I]_{,p},e_J>\delta_{mn}+<e_I u_{n,m},e_J>]\;\delta_{ij}
<e_J,r^j_{,n}>
\nonumber\\
&=& <e_I,u_p r^i_{,m,p}+u_{p,m} r^i_{,p}>=<e_I,(u_p r^i_{,p})_{,m}>
=(u[r])_\alpha
\ea
It follows that (\ref{3.91}) becomes 
\be \label{3.92a}
i [-\frac{i}{\sqrt{2}}\sum_{\alpha\le M_0}\;
(-u[r])_\alpha\; F_\alpha]
\ee
Thus for e.g. the limiting pattern $M_3$ before $M_1,M_2$
we find 
\be \label{3.93}
[C_2(u),G_1(r)]=i G_1(-u[r])
\ee
~\\
{\bf 3B.}\\ 
The second curly bracket is given by, see (\ref{a.5}) 
\ba \label{3.94}
&& -\frac{1}{4}\sum_{\alpha\le M_1,\beta\le M_2,\gamma\le M_3, \delta\le M_4}
\; u_{\alpha\beta}\; r_{\delta\gamma}\;
[:F_\alpha G_\beta:,:F_\gamma G_\delta:]
\nonumber\\
&=& -\frac{1}{2}\sum_{\alpha\le M_1,\beta\le M_2,\gamma\le M_3, \delta\le M_4}
\; u_{\alpha\beta}\; r_{\delta\gamma}\;
[\delta_{\alpha\delta}\;:F_\gamma\; G_\beta: -
\delta_{\gamma\beta}\;:F_\alpha\; G_\delta:]
\ea
Picking again the pattern with $M_1,M_3$ before $M_2,M_4$ and sandwiching 
between Fock states this becomes
\ba \label{3.98}
&& -\frac{1}{2}\sum_{\beta\le M_0,\gamma\le M_0, \delta\le M_4}
\; u_{\delta\beta}\; r_{\delta\gamma}\;
:F_\gamma\; G_\beta: 
\nonumber\\
&& +\frac{1}{2}\sum_{\alpha\le M_0,\beta\le M_2, \delta\le M_0}
\; u_{\alpha\beta}\; r_{\delta\beta}\;
:F_\alpha\; G_\delta:
\ea
We now take $M_2=M_4\to \infty$ and obtain 
\ba \label{3.99} 
&& -\frac{1}{2}\sum_{\alpha\le M_0,\beta\le M_0, \gamma}
\; u_{\gamma\alpha}\; r_{\gamma\beta}\;
:G_\alpha\; F_\beta: 
\nonumber\\
&& +\frac{1}{2}\sum_{\alpha\le M_0,\beta\le M_0,\gamma}
\; u_{\beta\gamma}\; r_{\alpha\gamma}\;
:G_\alpha\; F_\beta:
\nonumber\\
&=& -\frac{1}{2}\sum_{\alpha\le M_0,\beta\le M_0, \gamma}
\; [u_{\gamma\alpha}\; r_{\gamma\beta}\;
-u_{\beta\gamma}\; r_{\alpha\gamma}]
:G_\alpha\; F_\beta: 
\ea
We have with $\alpha=(Imi), \beta=(Jnj), \gamma=(Kqk)$
\ba \label{3.99a}
&&\sum_\gamma \;
[u_{\gamma\alpha}\; r_{\gamma\beta}\;
-u_{\beta\gamma}\; r_{\alpha\gamma}]
\nonumber\\
&=& \sum_K\; \{
[<e_K, u_p e_{I,p}> \delta_{qm}+<e_K e_I, u_{m,q}>]\;\delta_{ki}\;
\epsilon_{lkj}\;<e_K e_J,r_l> \delta_{qn}
\nonumber\\
&& -[<e_J, u_p e_{K,p}> \delta_{nq}+<e_J e_K, u_{q,n}>]\delta_{jk}
\epsilon_{lik} <e_I e_K, r_l> \delta_{mq}
\}
\nonumber\\
&=& \sum_K\; \{
[<e_K, u_p e_{I,p}> \delta_{nm}+<e_K, e_I u_{m,n}>]\;
\epsilon_{lij}\;<e_J r_l, e_K>
\nonumber\\
&& -[-<(e_J u_p)_{,p}, e_K> \delta_{nm}+<e_J u_{m,n},e_K>]\;
\epsilon_{lij} <e_K, e_I r_l> 
\}
\nonumber\\
&=& \epsilon_{lij} \{
[<e_J r_l, u_p e_{I,p}> \delta_{nm}+<e_J r_l, e_I u_{m,n}>]
-[-<(e_J u_p)_{,p}, e_I r_l> \delta_{nm}+<e_J u_{m,n},e_I r_l>]
\}
\nonumber\\
&=& -\epsilon_{lij} <e_I e_J, u_p r_{l,p}>\; \delta_{mn}
=(u[r])_{\alpha\beta}
\ea
It follows the end result
\be \label{3.100}
i \frac{i}{2}\sum_{\alpha\le M_0,\beta\le M_0}\; (-u[r])_{\alpha\beta}
:G_\alpha\; F_\beta: 
\ee
Therefore for e.g. the pattern $M_1,M_3$ before $M_2=M_4$ 
\be \label{3.101}
[C_2(u),G_2(r)]=i G_2(-u[r])
\ee
and altogether
\be \label{3.102}
[C(u),G(r)]=i G(-u[r])
\ee
~\\
{\bf 4. Gauss-Hamiltonian}\\
\\
We have to compute four curly bracket terms.\\
\\
{\bf 4A}.\\
The first curly bracket is, see (\ref{a.3}) 
\ba \label{3.103}
%-i/2^{1/2} (-2^{-1/2})
&&\frac{i}{2}\sum_{\alpha\le M_1,\beta\le M_2,\gamma\le M_3,\delta\le M_4}
r_\alpha \; f_{\beta\gamma\delta}
[F_\alpha,:F_\beta F_\gamma G_\delta:]
\nonumber\\
&=& i\;\sum_{\alpha\le M_1,\beta\le M_2,\gamma\le M_3,\delta\le M_4}\;
r_\alpha \; f_{\beta\gamma\delta}\; \delta_{\alpha\delta}\;
:F_\beta F_\gamma:
\ea
By sandwiching, $\beta,\gamma$ get locked. We choose e.g. $M_1$ before 
$M_2,M_3,M_4$ and solve the Kronecker for $\alpha$. Then 
(\ref{3.103}) becomes
\be \label{3.107}
i\;\sum_{\beta\le M_0,\gamma\le M_0,\delta\le M_4}\;
r_\delta \; f_{\beta\gamma\delta}\; 
:F_\beta F_\gamma:
\ee
The sum over $\delta$ can now be performed unconstrained and we have 
for $\beta=(Imi), \gamma=(Jnj), \delta=(Kpk)$
\ba \label{3.108}
\sum_\gamma
&& r_\delta \; f_{\beta\gamma\delta} 
=\sum_K <e_K, r_{k,p}> <[e_I e_J f]_{,n}, e_K> \epsilon_{ijk}\delta_{mp}
\nonumber\\
&=& 
<r_{k,m},[e_I e_J f]_{,n}> \epsilon_{ijk}
=
-<r_k,[e_I e_J f]_{,n,m}> \epsilon_{ijk}
=:[f_r]_{\beta\gamma}
\ea
Since $[f_r]_{\beta\gamma}$ is antisymmetric while $:F_\beta \; F_\gamma:$ 
is symmetric under exchange of $\gamma,\beta$ and since 
we sum both $\beta,\gamma$ over full range 
in $S_{M_0}$, expression (\ref{3.107}) vanishes identically. 
It follows for e.g. $M_1$ before $M_2=M_3,M_4$
\be \label{3.109}
[G_1(r), C_{0;3}(f)]=0
\ee
~\\
{\bf 4B.}\\
The second curly bracket is, see (\ref{a.4}) 
\ba \label{3.110}
%-i/2^{1/2} (-1/4)
&& \frac{i}{4\sqrt{2}} 
\sum_{\alpha\le M_1,\beta\le M_2,\gamma\le M_3,\delta\le M_4,\mu\le M_5}
r_\mu\;f_{\alpha\beta\gamma\delta}\; 
[F_\mu, :F_\alpha F_\beta G_\gamma G_\delta:]
\nonumber\\
&=& \frac{i}{\sqrt{2}} 
\sum_{\alpha\le M_1,\beta\le M_2,\gamma\le M_3,\delta\le M_4,\mu\le M_5}
r_\mu\;f_{\alpha\beta\gamma\delta}\; 
:F_\alpha F_\beta G_{(\gamma}: \delta_{\delta)\mu}
\ea
By sandwiching, $\alpha,\beta$ and $\gamma$ or $\delta$ get locked.
We pick the pattern $M_5$ before $M_1,M_2,M_3=M_4$ and (\ref{3.110}) becomes
after solving the Kronecker for $\mu$
\be \label{3.113}
\frac{i}{2\sqrt{2}} 
\sum_{\alpha\le M_0,\beta\le M_0,\gamma\le M_0,\delta\le M_4}
r_\delta\;f_{\alpha\beta\gamma\delta}\; 
\; :F_\alpha F_\beta G_\gamma: 
+\frac{i}{2\sqrt{2}} 
\sum_{\alpha\le M_0,\beta\le M_0,\gamma\le M_3,\delta\le M_0}
r_\gamma\;f_{\alpha\beta\gamma\delta}\; 
\; :F_\alpha F_\beta G_\delta: 
\ee
Taking the remaining cut-off dependence to $\infty$ we find 
\be \label{3.114}
\frac{i}{2\sqrt{2}} 
\sum_{\alpha\le M_0,\beta\le M_0,\gamma\le M_0,\delta}
r_\delta\;[f_{\alpha\beta\gamma\delta}+f_{\alpha\beta\delta\gamma}] 
\; :F_\alpha F_\beta G_\gamma: 
\ee
and with $\alpha=(Imi), \; \beta=(Jnj),\;\gamma=(Kpk), \delta=(Lql)$
\ba \label{3.115}
&& \sum_\delta 
r_\delta\;[f_{\alpha\beta\gamma\delta}+f_{\alpha\beta\delta\gamma}] 
=\sum_L \; 
<e_L,r_{l,q}>\;
[<e_I e_J e_K e_L, f> \delta_{mp}\delta_{nq}\epsilon_{tij}\epsilon_{tkl}
+<e_I e_J e_L e_K, f> \delta_{mq}\delta_{np}\epsilon_{tij}\epsilon_{tlk}]
\nonumber\\
&=&
[<e_I e_J e_K, f r_{l,n}> \delta_{mp} 
-<e_I e_J e_K, f r_{l,m}> \delta_{np}]\epsilon_{tij}\epsilon_{tkl}
=: [f^r]_{\alpha\beta\gamma}
\ea
Accordingly with e.g. the pattern $M_5$ before $M_1,M_2, M_3=M_4$
\be \label{3.116}
[G_1(r),C_{0;4}(f)]=
\frac{i}{2\sqrt{2}} 
\sum_{\alpha,\beta,\gamma}\; [f^r]_{\alpha\beta\gamma}
\; :F_\alpha F_\beta G_\gamma: 
\ee
~\\
{\bf 4C.}\\
The third curly bracket is, see (\ref{a.6})
\ba \label{3.117}
%(i/2)(-1/(2^{1/2})
&& -\frac{i}{2\;2^{1/2}}\;
\sum_{\alpha\le M_1,\beta\le M_2,\gamma\le M_3,\mu\le M_4,\rho\le M_5}
r_{\rho\mu}\;f_{\alpha\beta\gamma}\;
[:F_\mu\; G_\rho:, :F_\alpha\; F_\beta\; G_\gamma:]
\\
&=& 
-\frac{i}{2\;2^{1/2}}\;
\sum_{\alpha\le M_1,\beta\le M_2,\gamma\le M_3,\mu\le M_4,\rho\le M_5}
r_{\rho\mu}\;f_{\alpha\beta\gamma}\;
(
2\delta_{\mu\gamma} \; :F_\alpha F_\beta G_\rho:
-4\delta_{\rho(\alpha}\; :F_{\beta)} F_\mu G_\gamma:
+4\delta_{\rho(\alpha} \delta_{\beta)\mu} G_\gamma
)
\nonumber
\ea
~\\
{\bf 4C.i}\\
This is the first incident that a potential divergence can appear because 
there is a term involving the product of two Kronecker symbols which we 
treat first. 
We assume e.g. a pattern with $M_4\ge M_1,M_2\ge M_3,M_5$, then 
in the term with two Kronecker symbols
the index $\gamma$ is locked by sandwiching and we can solve 
the Kroneckers for $\mu$
and $\alpha$ or $\beta$. The result, using 
$\alpha=(Imi), \; \beta=(Jnj),\; \gamma=(Kpk),\; \mu=(Mur), \rho=(Nvs)$, 
is proportional to 
\ba \label{3.120}
&& \sum_{\rho\le M_5, \gamma\le M_0}
[\sum_{\beta\le M_2}\; r_{\rho\beta}\; f_{\rho\beta\gamma} 
+\sum_{\alpha\le M_1}\; r_{\rho\alpha}\; f_{\alpha\rho\gamma}]
G_\gamma
\nonumber\\
&=&
\sum_{N\le M_5, K\le M_0}
[\sum_{J\le M_2}\; r_{NJt}\; \delta_{vn}\; 
\epsilon_{tsj}\;f_{NJnK}\; \delta_{vp}\;
\epsilon_{sjk} 
+\sum_{I\le M_1}\; r_{NIt}\; \delta_{vm}\epsilon_{tsi}\;
f_{INvK}\; \delta_{mp}\;\epsilon_{isk}]\; G_{Kpk}
\nonumber\\
&=&
2\;\sum_{N\le M_5, K\le M_0}
[\sum_{J\le M_2} r_{NJk}\; f_{NJpK} 
-\sum_{I\le M_1} r_{NIk}\; f_{INpK}] \; G_{Kpk}
\nonumber\\
&=& 0
\ea
if $M_1=M_2$ since $f_{INpK}=<(e_I e_N f)_{,p}, e_K>=f_{NIpK}$. Thus 
we will adopt $M_1=M_2$ from now on. Note that this result crucially 
rests on properties of $r_{\alpha\beta}, f_{\alpha\beta\gamma}$.\\
\\
{\bf 4C.ii}\\ 
The contribution to (\ref{3.117}) from the single Kronecker symbol terms is
\be \label{3.121}
-\frac{i}{2^{1/2}}\;
\sum_{\alpha\le M_1,\beta\le M_2,\gamma\le M_3,\mu\le M_4,\rho\le M_5}
r_{\rho\mu}\;f_{\alpha\beta\gamma}\;
[\delta_{\mu\gamma} \; :F_\alpha F_\beta G_\rho:
-2\delta_{\rho(\alpha}\; :F_{\beta)} F_\mu G_\gamma:]
\ee
By sandwiching and solving for $\mu$ in the first term and 
$\alpha$ or $\beta$ in the second respectively according to 
above pattern this becomes 
\ba \label{3.122}
&& 
-\frac{i}{2^{1/2}}\;\{
\sum_{\alpha,\beta,\rho\le M_0,\gamma\le M_3}\;
r_{\rho\gamma}\;f_{\alpha\beta\gamma}\; :F_\alpha F_\beta G_\rho:
-\sum_{\beta,\gamma,\mu\le M_0,\rho\le M_5}\;
r_{\rho\mu}\;f_{\rho\beta\gamma}\; :F_\beta F_\mu G_\gamma:
\nonumber\\ &&
-\sum_{\alpha,\gamma,\mu\le M_0,\rho\le M_5}\;
r_{\rho\mu}\;f_{\alpha\rho\gamma}\;
:F_\alpha F_\mu G_\gamma:
\}
\nonumber\\
&=& 
-\frac{i}{2^{1/2}}\;\{
\sum_{\alpha,\beta,\gamma\le M_0,\rho\le M_3}\;
r_{\gamma\rho}\;f_{\alpha\beta\rho}\; :F_\alpha F_\beta G_\gamma:
-\sum_{\alpha,\beta,\gamma\le M_0,\rho\le M_5}\;
r_{\rho\beta}\;f_{\rho\alpha\gamma}\; :F_\alpha F_\beta G_\gamma:
\nonumber\\ &&
-\sum_{\alpha,\gamma,\beta\le M_0,\rho\le M_5}\;
r_{\rho\beta}\;f_{\alpha\rho\gamma}\;
:F_\alpha F_\beta G_\gamma:
\}
\nonumber\\
&=& 
-\frac{i}{2^{1/2}}\;\{
\sum_{\alpha,\beta,\gamma\le M_0,\rho\le M_3}\;
r_{\gamma\rho}\;f_{\alpha\beta\rho}\;
-\sum_{\alpha,\beta,\gamma\le M_0,\rho\le M_5}\;
r_{\rho\beta}\;[f_{\rho\alpha\gamma}+f_{\alpha\rho\gamma}]\; 
\}\;:F_\alpha F_\beta G_\gamma:
\ea
We now take the remaining cut-offs on $\rho$ 
away and carry out the unconstrained sum
over $\rho=(Lql)$. This gives for the first term in 
the curly bracket of (\ref{3.122}) 
\ba \label{3.123}
\sum_\rho
r_{\gamma\rho}\;f_{\alpha\beta\rho}
&=& \sum_L\; r^t_{KL}\; \delta_{pq}\epsilon_{tkl}\; f_{IJnL} \;\delta_{mq}\;
\epsilon_{ijl} 
\nonumber\\
&=& \sum_L \; <r_t e_K,e_L>\; <e_L,(e_I e_J f)_{,n}> \delta_{mp}
\epsilon_{tkl}\epsilon_{ijl}
\nonumber\\
&=& <r_t e_K,(e_I e_J f)_{,n}> \delta_{mp}
\epsilon_{tkl}\epsilon_{ijl}
\nonumber\\
\sum_\rho
r_{\rho\beta}\;[f_{\rho\alpha\gamma}+f_{\alpha\rho\gamma}]\; 
&=& \sum_L\; 
r^t_{LJ} \epsilon_{tlj} \delta_{qn}\;[f_{LImK} \delta_{qp} 
\epsilon_{lik}+f_{ILqK}\delta_{mp}\epsilon_{ilk}]
\nonumber\\ 
&=& \sum_L\; 
r^t_{LJ} \epsilon_{tlj}\;[f_{LImK} \delta_{np}-f_{ILnK}\delta_{mp}]
\epsilon_{lik}
\nonumber\\ 
&=& -\sum_L\; 
<r_t e_J,e_L>\; \epsilon_{tlj}\;[<e_L, f e_I e_{K,m}>
\delta_{np}-<e_L, f e_I e_{K,n}> \delta_{mp}]
\epsilon_{lik}
\nonumber\\ 
&=& -\; \epsilon_{tlj}\;\epsilon_{lik}
[<r_t e_J f e_I, e_{K,m}>\; \delta_{np}
-<r_t e_J f e_I, e_{K,n}>\; \delta_{mp}]
\ea 
We use that $:F_\alpha F_\beta G_\gamma:=:F_\beta F_\alpha G_\gamma:$ and 
replace in (\ref{3.123}) both equalities by half the sum of 
(\ref{3.123}) and (\ref{3.123}) 
with $\alpha,\beta$ interchanged which results in 
\ba \label{3.124}
1. &=& \frac{1}{2}[
<r_t e_K,(e_I e_J f)_{,n}> \delta_{mp}
-<r_t e_K,(e_I e_J f)_{,m}> \delta_{np}]
\epsilon_{tkl}\epsilon_{ijl}
\nonumber\\
2. &=& \frac{1}{2}[
\epsilon_{tlj}\;\epsilon_{lik}
-\epsilon_{tli}\;\epsilon_{ljk}]\;
[<r_t e_J f e_I, e_{K,n}>\; \delta_{mp}
-<r_t e_J f e_I, e_{K,m}>\; \delta_{np}]
\nonumber\\
&=&
\frac{1}{2} \epsilon_{ijs} \epsilon_{uvs}\;\epsilon_{luk}\epsilon_{lvt} 
[<r_t e_J f e_I, e_{K,n}>\; \delta_{mp}
-<r_t e_J f e_I, e_{K,m}>\; \delta_{np}]
\nonumber\\
&=&
-\frac{1}{2} \epsilon_{ijs} \epsilon_{tks}
[<r_t e_J f e_I, e_{K,n}>\; \delta_{mp}
-<r_t e_J f e_I, e_{K,m}>\; \delta_{np}]
\ea
for the first and second equality respectively. The first minus the second 
of those two expressions that is required in (\ref{3.122})
is given by 
\ba \label{3.125}
&& \frac{1}{2}\; \epsilon_{tkl}\epsilon_{ijl}\{
<r_t e_K,(e_I e_J f)_{,n}> \delta_{mp}
-<r_t e_K,(e_I e_J f)_{,m}> \delta_{np}
+<r_t e_J f e_I, e_{K,n}>\; \delta_{mp}
-<r_t e_J f e_I, e_{K,m}>\; \delta_{np}
\}
\nonumber\\
&=& -\frac{1}{2}\; \epsilon_{tkl}\epsilon_{ijl}\{
<r_{t,n}, e_I e_J e_K f > \delta_{mp}
-<r_{t,m}, e_I e_J e_K f > \delta_{np}
\}
\nonumber\\
&=& \frac{1}{2} [f^r]_{\alpha\beta\gamma}
\ea
where the expression (\ref{3.115}) reappeared. 

It follows e.g. for the pattern $M_4$ before $M_1=M_2$ before 
$M_3,M_5$
\be \label{3.126}
[G_2(r),C_{0;3}(f)]=
-\frac{i}{2\sqrt{2}}\; 
\sum_{\alpha\beta\gamma} \; [f^r]_{\alpha\beta\gamma}\;
:F_\alpha \; F_\beta\; G_\gamma:
\ee
which is excaltly the negative of (\ref{3.116}).\\
\\
{\bf 4D.}\\
The fourth curly bracket is, see (\ref{a.7}) 
\ba \label{3.127}
%i/2 (-1/4)$
&&-\frac{i}{8} \sum_{
\alpha\le M_1\beta\le M_2,\gamma\le M_3,\delta\le M_4,
\mu\le M_5,\rho\le M_6}\; 
r_{\rho\mu}\; f_{\alpha\beta\gamma\delta} \;
[:F_\mu G_\rho:,:F_\alpha F_\beta G_\gamma G_\delta:]
\nonumber\\
&& [:F_\mu G_\rho:,:F_\alpha F_\beta G_\gamma G\delta:]
=
4\;:F_\alpha F_\beta G_\rho G_{(\gamma}:\;\delta_{\delta)\mu}
+4\delta_{\mu(\gamma}\;\delta_{\delta)\rho}\; :F_\alpha F_\beta:
\nonumber\\
&& -4\;\delta_{\rho(\alpha}\;:F_{\beta)} F_\mu G_\gamma G_\delta:
+4\delta_{\rho(\alpha}\;\delta_{\beta)\mu}\; :G_\gamma G_\delta:
\ea
\\
{\bf 4D.i}\\
Consider the contributions from two Kronecker symbols. Consider
the pattern 
$M_5\ge M_1,M_2\ge M_6\ge M_3,M_4$. 
For the term $\propto :F_\alpha F_\beta:$
we obtain using the sandwiching argumment $\alpha,\beta\le M_0$ and after 
solving for $\mu,\rho$ a 
coefficient proportional to
\be \label{3.129}  
\sum_{\gamma\le M_3,\delta\le M_4}\; [r_{\gamma\delta}+r_{\delta\gamma}]
f_{\alpha\beta\gamma\delta}=0
\ee
because $r_{\gamma\delta}=-r_{\delta\gamma}$. 
For the term $\propto :G_\gamma G_\delta:$
we obtain using the sandwiching argumment $\gamma,\delta\le M_0$ and after 
solving for $\mu$ and $\alpha$ or $\beta$ a   
coefficient proportional to
\be \label{3.130}
\sum_{\rho\le M_6}\;[
\sum_{\beta\le M_2} r_{\rho\beta} \;f_{\rho\beta\gamma\delta}
+\sum_{\alpha\le M_1} r_{\rho\alpha} \;f_{\alpha\rho\gamma\delta}]
\ee
Using $\alpha=(Imi),\beta=(Jnj),\gamma=(Kpk), \delta=(Lql),\mu=(Mur),
\rho=(Nvs)$ the term in square brackets becomes
\ba \label{3.131}
&&\sum_{J\le M_2}\; r^t_{NJ}\; \delta_{vn}\; \epsilon_{tsj}\; f_{NJKL}
\; \delta_{vp} 
\;\delta_{nq}\; \epsilon_{rsj} \;\epsilon_{rkl}
+\sum_{I\le M_1}\; r^t_{NI}\; \delta_{vm}\; \epsilon_{tsi}\; f_{INKL} 
\;\delta_{mp} \;\delta_{vq}\; \epsilon_{ris} \;\epsilon_{rkl}
\nonumber\\
&=& 2\epsilon_{tkl}\;\delta_{pq}\;
[\sum_{J\le M_2} r^t_{NJ} f_{JNKL}
-\sum_{I\le M_1} r^t_{NI} f_{INKL}]
\ea
which vanishes if $M_1=M_2$ due to total symmetry 
of $f_{IJKL}$. Thus we will adopt $M_1=M_2$ from now on.\\
\\
{\bf 4D.ii}\\
The single Kronecker symbol contributions to (\ref{3.127})
are given by 
\be \label{3.132}
-\frac{i}{4} \sum_{
\alpha\le M_1\beta\le M_2,\gamma\le M_3,\delta\le M_4,
\mu\le M_5,\rho\le M_6}\; 
r_{\rho\mu}\; f_{\alpha\beta\gamma\delta} \;
[2\;:F_\alpha F_\beta G_\rho G_{(\gamma}:\;\delta_{\delta)\mu}
-2\;\delta_{\rho(\alpha}\;:F_{\beta)} F_\mu G_\gamma G_\delta:]
\ee
We solve for $\mu$ and $\alpha$ or $\beta$ according to the chosen 
pattern and obtain with the sandwiching
argument and renaming indices carefully keeping track of the corresponding 
index range
\ba \label{3.133}
&&-\frac{i}{4} 
\sum_{\alpha,\beta,\rho\le M_0}\;
[\sum_{\gamma\le M_0,\delta\le M_4}\; f_{\alpha\beta\gamma\delta} \;
:F_\alpha F_\beta G_\rho G_\gamma:\;r_{\rho\delta}
+\sum_{\gamma\le M_3,\delta\le M_0}\; f_{\alpha\beta\gamma\delta} \;
:F_\alpha F_\beta G_\rho G_\delta:\;r_{\rho\gamma}]
\nonumber\\
&&+\frac{i}{4} 
\sum_{\mu,\gamma,\delta\le M_0,\rho\le M_6}\; 
r_{\rho\mu}\; 
[\sum_{\beta\le M_0}
f_{\rho\beta\gamma\delta}\;:F_\beta F_\mu G_\gamma G_\delta:
+\sum_{\alpha\le M_0}
f_{\alpha\rho\gamma\delta}\;:F_\alpha F_\mu G_\gamma G_\delta:]
\nonumber\\
&=&-\frac{i}{4} 
\sum_{\alpha,\beta,\gamma,\rho\le M_0}\;
[\sum_{\delta\le M_4}\; r_{\rho\delta}\; f_{\alpha\beta\gamma\delta} \;
+\sum_{\delta\le M_3}\; r_{\rho\delta}\; f_{\alpha\beta\delta\gamma}] \;
:F_\alpha F_\beta G_\gamma G_\rho:
\nonumber\\
&&+\frac{i}{4} 
\sum_{\beta,\mu,\gamma,\delta\le M_0;\rho\le M_6}\; 
r_{\rho\mu}\; 
[f_{\rho\beta\gamma\delta}\;
+f_{\beta\rho\gamma\delta}]
:F_\beta F_\mu G_\gamma G_\delta:
\nonumber\\
&=&-\frac{i}{4} 
\sum_{\alpha,\beta,\gamma,\delta\le M_0}\;
[\sum_{\rho\le M_4}\; r_{\delta\rho}\; f_{\alpha\beta\gamma\rho} \;
+\sum_{\rho\le M_3}\; r_{\delta\rho}\; f_{\alpha\beta\rho\gamma}] \;
:F_\alpha F_\beta G_\gamma G_\delta:
\nonumber\\
&&+\frac{i}{4} 
\sum_{\alpha,\beta,\gamma,\delta\le M_0;\rho\le M_6}\; 
r_{\rho\beta}\; 
[f_{\rho\alpha\gamma\delta}\;
+f_{\alpha\rho\gamma\delta}]
:F_\alpha F_\beta G_\gamma G_\delta:
\ea
We take the final cut-offs away and find 
\be \label{3.134}
-\frac{i}{4} 
\sum_{\alpha,\beta,\gamma,\delta\le M_0;\rho}\;
\{
r_{\delta\rho}\; [f_{\alpha\beta\gamma\rho}+f_{\alpha\beta\rho\gamma}]
-r_{\rho\beta}\; [f_{\rho\alpha\gamma\delta}+f_{\alpha\rho\gamma\delta}]\;
:F_\alpha F_\beta G_\gamma G_\delta:
\}
\ee
Note that $r_{\delta\rho}=-r_{\rho\delta}$ and 
$f_{\alpha\beta\gamma\delta}=
f_{\gamma\delta\alpha\beta}$. We have with
$\alpha=(Imi),\beta=(Jnj),\gamma=(Kpk),\delta=(Lql),\rho=(Mus)$
\ba \label{3.135}
&&  \sum_\rho \; r_{\rho\delta} \; f_{\alpha\beta\gamma\rho}
=\sum_M \;r^t_{ML}\; \delta_{uq}\; \epsilon_{tsl}\; f_{IJKM}\; \delta_{mp}\; 
\delta_{nu}\;\epsilon_{wij}\; \epsilon_{wks} 
\nonumber\\
&=& 
\epsilon_{tsl}\; <e_I e_J e_K e_L,f r^t>\; \delta_{mp}\; \delta_{nq}
\;\epsilon_{wij}\; \epsilon_{wks} 
=:[f_r]^t_{IJKL}\; \delta_{mp}\; \delta_{nq}\; 
\epsilon_{tsl}\; \epsilon_{wij}\; \epsilon{wks}
\ea
Thus 
\ba \label{3.136}
&&
\sum_\rho\; r_{\rho\delta}\; f_{\alpha\beta\rho\gamma} \;
=[f_r]^t_{IJKL}\; \delta_{mq}\delta_{np}\; 
\epsilon_{tsl}\epsilon_{wij}\epsilon_{wsk}
\nonumber\\
&& \sum_\rho r_{\rho\beta}\; f_{\rho\alpha\gamma\delta}\;
=[f_r]^t_{IJKL}\; \delta_{mq}\delta_{np}\; 
\epsilon_{tsj}\epsilon_{wsi}\epsilon_{wkl}
\nonumber\\
&& \sum_\rho r_{\rho\beta}\; f_{\alpha\rho\gamma\delta}\;
=[f_r]^t_{IJKL}\;  \delta_{mp}\delta_{nq}\; 
\epsilon_{tsj}\epsilon_{wis}\epsilon_{wkl}
\ea
Thus the coefficient of $\frac{i}{4}\;:F_\alpha F_\beta G_\gamma G_\delta:$
in (\ref{3.134}) 
is 
\be \label{3.137}
-[f_r]^t_{IJKL} \{
[\epsilon_{wij}\epsilon_{tsl}\epsilon_{wsk}
+\epsilon_{wkl}\epsilon_{tsj}\epsilon_{wsi}]\;
[\delta_{mp}\delta_{nq}-\delta_{np}\delta_{mq}]
\ee
As $:F_\alpha F_\beta G_\gamma G_\delta:$ is symmetric seperately under 
exchange of $\alpha,\beta$ and $\gamma,\delta$ we replace (\ref{3.137}) 
by half the sum of (\ref{3.137}) and (\ref{3.137}) with 
$\alpha\leftrightarrow\beta, \gamma\leftrightarrow\delta$.
We obtain
\ba \label{3.138}
&&-\frac{1}{2}[f_r]^t_{IJKL} \{
[\epsilon_{wij}
(\epsilon_{tsl}\epsilon_{wsk}-\epsilon_{tsk}\epsilon_{wsl})
+\epsilon_{wkl}
(\epsilon_{tsj}\epsilon_{wsi}-\epsilon_{tsi}\epsilon_{wsj})]\;
[\delta_{mp}\delta_{nq}-\delta_{np}\delta_{mq}]
\nonumber\\
&&-\frac{1}{2}[f_r]^t_{IJKL} \{
[\epsilon_{wij}\epsilon_{vkl}\epsilon_{vtw}
+\epsilon_{wkl}\epsilon_{vij}\epsilon_{vtw}]
\}
\nonumber\\
&&-\frac{1}{2}[f_r]^t_{IJKL}\;
\epsilon_{wij}\epsilon_{vkl}\;
[\epsilon_{vtw}+\epsilon_{wtv}=0
\ea
It follows e.g. with $M_5$ before $M_1=M_2$ before $M_6$ before 
$M_3=M_4$ 
\be \label{3.139}
[G_2(r),C_{0;4}(f)]=0
\ee
~\\
Altogether
\be \label{3.140}
[G(r),C_0(f)]=0
\ee
~\\
{\bf 5. Diffeomorphism - Hamiltonian}\\
\\
We have two curly brackets to consider:\\
\\
{\bf 5A.}\\
The first curly bracket is 
\be \label{3.141}
%i/2 (-1/2^{1/2})
-\frac{i}{2\;2^{1/2}} 
\sum_{\alpha\le M_1,\beta\le M_2,\gamma\le M_3,\mu\le M_4,\rho\le M_5}\;
u_{\mu\rho}\; f_{\alpha\beta\gamma}\;
[:F_\mu G_\rho:,:F_\alpha F_\beta G_\gamma:]
\ee
We can use the same results as in item 4C., in particular (\ref{3.117}).
We pick the same pattern $M_4\ge M_1,M_2\ge M_3,M_5$. \\
\\
{\bf 5A.i}\\
We consider first the contribution from two Kronecker symbols to (\ref{3.141})
which after solving for $\mu$ and 
$\alpha$ or $\beta$ is proportional to with 
$\alpha=(Imi),\beta=(Jnj),\gamma=(Kpk),\rho=(Lql)$
\be \label{3.142}
\sum_{\rho\le M_5}\;
[\sum_{\beta\le M_2} u_{\beta\rho} f_{\rho\beta\gamma}
+\sum_{\alpha\le M_1} u_{\alpha\rho} f_{\alpha\rho\gamma}]
=0
\ee
because $u_{\beta\rho}\propto \delta_{jl},\; f_{\rho\beta\gamma}\propto
\epsilon_{ljk}$ even without assuming $M_1=M_2$.\\
\\
{\bf 5A.ii}\\
The contribution from the single Kronecker symbol to (\ref{3.141})
is given by 
\be \label{3.143}
-\frac{i}{2^{1/2}} 
\sum_{\alpha\le M_1,\beta\le M_2,\gamma\le M_3,\mu M_4,\rho\le M_5}\;
u_{\mu\rho}\; f_{\alpha\beta\gamma}\;
[\delta_{\mu\gamma} \; :F_\alpha F_\beta G_\rho:
-2\delta_{\rho(\alpha}\; :F_{\beta)} F_\mu G_\gamma:]
\ee

We solve for $\mu$ or $\alpha,\beta$ and sandwich
\ba \label{3.143a}
&& -\frac{i}{2^{1/2}} 
\{
\sum_{\alpha,\beta,\rho\le M_0;\gamma\le M_3}\;
u_{\gamma\rho}\; f_{\alpha\beta\gamma}\; :F_\alpha F_\beta G_\rho:
-
\sum_{\beta,\mu,\gamma\le M_0,\rho\le M_5}\;
u_{\mu\rho}\; f_{\rho\beta\gamma}\;:F_\beta F_\mu G_\gamma:
\nonumber\\ &&
-
\sum_{\alpha,\mu,\gamma\le M_0;\rho\le M_5}\;
u_{\mu\rho}\; f_{\alpha\rho\gamma}\; :F_\alpha F_\mu G_\gamma:]
\}
\nonumber\\
&=&
-\frac{i}{2^{1/2}}\; \sum_{\alpha,\beta,\gamma\le M_0}
\;\{
\sum_{\rho\le M_3}\;
u_{\rho\gamma}\; f_{\alpha\beta\rho}\; 
-\sum_{\rho\le M_5}\;
(u_{\alpha\rho}\; f_{\rho\beta\gamma}
+u_{\beta\rho}\; f_{\alpha\rho\gamma}) 
\}
\;:F_\alpha F_\beta G_\gamma:
\ea
Taking the remaining cut-offs to infinity the sum over $\rho$ is 
unconstrained and we compute with  
$\alpha=(Imi),\beta=(Jnj),\gamma=(Kpk),\rho=(Lql)$
\ba \label{3.144}
&& \sum_\rho
\{u_{\rho\gamma}\; f_{\alpha\beta\rho} 
-u_{\alpha\rho}\; f_{\rho\beta\gamma}
-u_{\beta\rho}\; f_{\alpha\rho\gamma} 
\}
\nonumber\\
&=&\sum_L
\{
u^{qp}_{LK}\delta_{lk}\;f_{IJnL}\;\delta_{mq}\epsilon_{ijl}
-u^{mq}_{IL}\delta_{il}\;f_{LJnK}\;\delta_{qp}\epsilon_{ljk}
-u^{nq}_{JL}\delta_{jl}\;f_{ILqK}\;\delta_{mp}\epsilon_{ilk}
\}
\nonumber\\
&=&\sum_L
\{
u^{mp}_{LK}\;f_{IJnL}-u^{mp}_{IL}\;f_{LJnK}
-u^{nq}_{JL}\;f_{ILqK}\;\delta_{mp}\}
\;\epsilon_{ijk}
\ea
Consider the coefficient of $\delta_{mp}\epsilon_{ijk}$ from the first two 
terms together with the whole third term 
\ba \label{3.145}
&&\sum_L
\{
<e_L, u_r e_{K,r}>\; <(e_I e_J f)_{,n}, e_L>
-<e_I, u_r e_{L,r}>\; <(e_L e_J f)_{,n}, e_K>
\nonumber\\
&&-[<e_J, u_r e_{L,r}>\;\delta_{nq}+<e_J e_L, u_{q,n}>]\;<(e_I e_L f)_{,q},e_K>
\}
\nonumber\\
&=& \sum_L
\{
<e_L, u_r e_{K,r}>\; <(e_I e_J f)_{,n}, e_L>
-<(e_I u_r)_{,r}, e_L>\; <e_L, e_J f e_{K,n}>
\nonumber\\
&& -[<(e_J u_r)_{,r} e_L>\;\delta_{nq}-<e_L, e_J u_{q,n}>]\;<e_L, f e_I e_{K,q}>
\}
\nonumber\\
&=& 
<u_r e_{K,r},(e_I e_J f)_{,n}> -<(e_I u_r)_{,r}, e_J f e_{K,n}>
-<(e_J u_r)_{,r}, f e_I e_{K,n}>
+<e_J u_{r,n},f e_I e_{K,r}>
\nonumber\\
&=& 
<e_{K,r},u_r\;(e_I e_J f)_{,n}+(e_I e_J f)u_{r,n}>
-<e_{K,n}, f[ (e_I u_r)_{,r}, e_J +(e_J u_r)_{,r} e_I]>
\nonumber\\
&=& 
<e_{K,r},(u_r\;e_I e_J f)_{,n}>
-<e_{K,n}, f[ (e_I e_J u_r)_{,r} +u_{r,r} e_I e_J]>
\nonumber\\
&=& 
<e_{K,n},(u_r f_{,r}-u_{r,r} f)\;e_I e_J>
=-<(u[f] e_I e_J)_{,n} e_K>=-[u[f]]_{IJnK}
\ea
where we used 
$<(.)_{,m}\;,(.)'_{,n}>=<(.)_{,n}\;,(.)'_{,m}>$ in between.

The coefficient $\not\propto \delta_{mp}\epsilon_{ijk}$ of  
(\ref{3.144}) is 
\ba \label{3.146}
&& \sum_L \{
<e_L e_K, u_{p,m}>\;<(e_I e_J f)_{,n}, e_L>
-<e_I e_L, u_{p,m}>\;<(e_L e_J f)_{,n}, e_K>
\}
\nonumber\\
&=&\sum_L \{
<e_L, e_K u_{p,m}>\;<(e_I e_J f)_{,n}, e_L>
+<e_I e_L, u_{p,m}>\;<e_L, e_J f e_{K,n}>
\}
\nonumber\\
&=&
<e_K u_{p,m}, (e_I e_J f)_{,n}>
+<e_I, u_{p,m}\; e_J f e_{K,n}>
\nonumber\\
&=& -<e_I e_J e_K  f, u_{p,m,n}>
\ea
Together with the factor $\epsilon_{ijk}$ (\ref{3.146}) is a coefficient 
$[f_u]_{\alpha\beta\gamma}$ which is skew under exchange of $\alpha,\beta$
while $:F_\alpha F_\beta G_\gamma:$ is symmetric. It follows that the 
first curly bracket is 
\be \label{3.147}
i\;[-\frac{1}{2^{1/2}}\; \sum_{\alpha,\beta,\gamma\le M_0}
(-u[f])_{\alpha\beta\gamma}\;:F_\alpha F_\beta G_\gamma:]
\ee
Thus e.g. 
for the pattern $M_4$ before $M_1=M_2$ before $M_3,M_5$  
we get
\be \label{3.148}
[C_2(u),C_{0;3}(f)]=i C_{0;3}(-u[f])
\ee
~\\
{\bf 5B.}\\
The second curly bracket is given by
\be \label{3.149}
% (i/2) (-1/4)
-\frac{i}{8}
\sum_{\alpha\le M_1,\beta\le M_2,\gamma\le M_3,\delta\le M_4,
\mu\le M_5,\rho\le M_6}\;
u_{\mu\rho}\; f_{\alpha\beta\gamma\delta}\;
[:F_\mu G_\rho:,:F_\alpha F_\beta G_\gamma G_\delta:]
\ee
We can use the same results as in item 4D., in particular (\ref{3.127}).
We pick the same pattern $M_5\ge M_1,M_2\ge M_6\ge M_3,M_4$. \\
\\
{\bf 5B.i}\\
We consider first the contribution from two Kronecker symbols to (\ref{3.149})
%With 
%$\alpha=(Imi),\beta=(Jnj),\gamma=(Kpk), \delta=(Lql),\;\rho=(Mus)$
The coefficient of $:F_\alpha F_\beta:$ 
after solving for $\mu,\rho$ (which is consistent with the 
pattern as $\alpha,\beta$ get locked when sandwiching) is proportional to 
\be \label{3.150}
-2\sum_{\gamma\delta} \;[u_{\gamma\delta} +u_{\delta\gamma}] 
f_{\alpha\beta\gamma\delta}
\ee
which vanishes as $u_{\gamma\delta},u_{\delta\gamma}\propto \delta_{kl}$ while
$f_{\alpha\beta\gamma\delta}\propto \epsilon_{rkl}$. 
The coefficient of $:G_\gamma G_\delta:$ 
after solving for $\mu$ and $\alpha$ or $\beta$ is proportional to 
\be \label{3.151}
2\sum_\rho
[\sum_\beta \;u_{\beta\rho}\; f_{\rho\beta\gamma\delta}
+\sum_\alpha \; u_{\alpha\rho}\; f_{\alpha\rho\gamma\delta}]
\ee
which vanishes as $u_{\alpha\rho},\propto \delta_{il}$ while
$f_{\alpha\rho\gamma\delta},f_{\rho\alpha\gamma\delta}
\propto \epsilon_{ril}$. \\
\\
{\bf 5B.ii}\\
The single Kronecker symbol contribution to (\ref{3.149}) is given 
by 
\be \label{3.152}
-\frac{i}{4}
\sum_{\alpha\le M_1,\beta\le M_2,\gamma\le M_3,\delta\le M_4,
\mu\le M_5,\rho\le M_6}\;
u_{\mu\rho}\; f_{\alpha\beta\gamma\delta}\;
[
2\;:F_\alpha F_\beta G_\rho G_{(\gamma}:\;\delta_{\delta)\mu}
-2\;\delta_{\rho(\alpha}\;:F_{\beta)} F_\mu G_\gamma G_\delta:
]
\ee
We solve for $\mu$ and $\alpha$ or $\beta$ and get by sandwiching
\ba \label{3.153}
&& -\frac{i}{4}
\{
\sum_{\alpha,\beta,\rho,\gamma\le M_0;\delta\le M_4}\;
u_{\delta\rho}\; f_{\alpha\beta\gamma\delta}\;
:F_\alpha F_\beta G_\rho G_\gamma:
+
\sum_{\alpha,\beta\rho,\delta\le M_0;\gamma\le M_3}\;
u_{\gamma\rho}\; f_{\alpha\beta\gamma\delta}\;
:F_\alpha F_\beta G_\rho G_\delta:
\nonumber\\
&& -
\sum_{\beta,\mu,\gamma,\delta\le M_0;\rho\le M_6}\;
u_{\mu\rho}\; f_{\rho\beta\gamma\delta}
:F_\beta F_\mu G_\gamma G_\delta:
-
\sum_{\alpha,\mu,\gamma,\delta\le M_0;\rho\le M_6}\;
u_{\mu\rho}\; f_{\alpha\rho\gamma\delta}\;
:F_\alpha F_\mu G_\gamma G_\delta:
\}
\nonumber\\
&=& -\frac{i}{4}
\{
\sum_{\alpha,\beta,\gamma\delta\le M_0;\rho\le M_4}\;
u_{\rho\delta}\; f_{\alpha\beta\gamma\rho}\;
:F_\alpha F_\beta G_\gamma G_\delta:
+
\sum_{\alpha,\beta\gamma,\delta\le M_0;\rho\le M_3}\;
u_{\rho\gamma}\; f_{\alpha\beta\rho\delta}\;
:F_\alpha F_\beta G_\gamma G_\delta:
\nonumber\\
&& -
\sum_{\alpha,\beta,\gamma,\delta\le M_0;\rho\le M_6}\;
u_{\alpha\rho}\; f_{\rho\beta\gamma\delta}
:F_\alpha F_\beta G_\gamma G_\delta:
-
\sum_{\alpha,\beta,\gamma,\delta\le M_0;\rho\le M_6}\;
u_{\beta\rho}\; f_{\alpha\rho\gamma\delta}\;
:F_\alpha F_\beta G_\gamma G_\delta:
\}
\ea
Taking the remaining cut-offs to infinity the sum over $\rho$ becomes 
unconstrained and the coefficient  
of $-\frac{i}{4} :F_\alpha F_\beta G_\gamma G_\delta:$
in (\ref{3.153}) becomes
\ba \label{3.155}
&&\sum_\rho\;\{
u_{\rho\delta}\; f_{\alpha\beta\gamma\rho}\;
+u_{\rho\gamma}\; f_{\alpha\beta\rho\delta}\;
-u_{\alpha\rho}\; f_{\rho\beta\gamma\delta}
-u_{\beta\rho}\; f_{\alpha\rho\gamma\delta}\;
\}
\nonumber\\
&=& 
\sum_M\;\{
u^{uq}_{ML}\delta_{sl} f_{IJKM} \delta_{mp} \delta_{nu}
\epsilon_{tij} \epsilon_{tks} \;
+
u^{up}_{MK}\delta_{sk} f_{IJML} \delta_{mu} \delta_{nq}
\epsilon_{tij} \epsilon_{tsl} \;
\nonumber\\
&& -
u^{mu}_{IM}\delta_{is} f_{MJKL} \delta_{up} \delta_{nq}
\epsilon_{tsj} \epsilon_{tkl} \;
-
u^{nu}_{JM}\delta_{js} f_{IMKL} \delta_{pm} \delta_{uq} 
\epsilon_{tis} \epsilon_{tkl} \;
\}
\nonumber\\
&=& 
\epsilon_{tij} \epsilon_{tkl} \;
\sum_M\;\{
\delta_{mp}
[u^{nq}_{ML} f_{IJKM} - u^{nq}_{JM}  f_{IMKL}]
+\delta_{nq}
[u^{mp}_{MK}  f_{IJML} - u^{mp}_{IM}  f_{MJKL}]
\}
\ea
The contributions proportional to $\delta_{mp}\delta_{nq}$ to (\ref{3.155})
are 
\ba \label{3.156}
&& \sum_L \{
<e_M, u_r e_{L,r}>\;<e_I e_J e_K e_M,f>
-<e_J, u_r e_{M,r}>\;<e_I e_K e_L e_M,f>
\nonumber\\ &&
+<e_M, u_r e_{K,r}>\;<e_I e_J e_L e_M,f>
-<e_I, u_r e_{M,r}>\;<e_J e_K e_L e_M,f>
\}
\nonumber\\
&=&
<f u_r, e_{L,r} e_I e_J e_K>  
+<(e_J u_r)_{,r} e_I e_K e_L,f>
+<f u_r, e_{K,r} e_I e_J e_L>
+<(e_I u_r)_{,r} e_J e_K e_L,f>
\nonumber\\
&=&
<f, (e_I e_J e_K e_L u_r)_{,r}> + <f, e_I e_J e_K e_L u_{r,r}> +  
\nonumber\\
&=&
-<u[f],e_I e_J e_K e_L>=-(u[f])_{IJKL}
\ea
The contributions $\not\propto \delta_{mp}\delta_{nq}$ to (\ref{3.155})
are
\ba \label{3.157}
&& \sum_M\;\{
\delta_{mp}
[
<e_M e_L, u_{q,n}>\; <e_I e_J e_K e_M, f>
-
<e_J e_M, u_{q,n}>\; <e_I e_M e_K e_L, f>
]
\nonumber\\
&&+\delta_{nq}
[
<e_M e_K, u_{p,m}>\; <e_I e_J e_M e_L, f>
- 
<e_I e_M, u_{p,m}>\; <e_M e_J e_K e_L, f>
]
\}
\nonumber\\
&=& 
\delta_{mp}\;<f e_I e_J e_K e_L,[u_{q,n}-u_{q,n}]>
+\delta_{nq}\;<f e_I e_J e_K e_L,[u_{p,m}-u_{p,m}]>
=0
\ea
Since 
\be \label{3.158}
-(u[f])_{IJKL}\delta_{mp}\delta_{nq}
\epsilon_{tij} \epsilon_{tkl} \;=-(u[f])_{\alpha\beta\gamma\delta}
\ee
we find that 
(\ref{3.149}) reduces to 
\be \label{3.159}
i [-\frac{1}{4} 
\sum_{\alpha\beta\gamma\delta\le M_0}\; (-u[f])_{\alpha\beta\gamma\delta}
\;:F_\alpha F_\beta G_\gamma G_\delta:]
\ee
Thus, e.g. for the pattern $M_5$ before $M_1,M_2$ before $M_6$ before 
$M_3,M_4$
\be \label{3.160}
[C_2(u),C_{0;4}(f)]=i\; C_{0;4}(-u[f])
\ee
Altogether
\be \label{3.161}
[C(u),C_0(f)]=i\; C_0(-u[f])
\ee
~\\
{\bf 6. Hamiltonian - Hamiltonian}\\
\\
We have to compute three curly brackets:\\
\\
{\bf 6A.}\\
The first curly bracket is, see (\ref{a.8})
\ba \label{3.162}
% (1/2 from antisymmetrising times (-1/\sqrt{2})^2
&& \frac{1}{4}
\sum_{\alpha\le M_1,\beta\le M_2,\gamma\le M_3,
\mu\le M_4,\nu\le M_5,\rho\le M_6}\;
[f_{\alpha\beta\gamma}\;g_{\mu\nu\rho} -f\leftrightarrow g]\;
[:F_\alpha F_\beta G_\gamma:\;,\;:F_\mu F_\nu G_\rho:]
\\
&& [:F_\alpha F_\beta G_\gamma:\;,\;:F_\mu F_\nu G_\rho:]
\nonumber\\
&& 
=4\;
[\delta_{\rho(\alpha}\; :F_{\beta)}\; F_\mu\; F_\nu\; G_\gamma:
-\delta_{\gamma(\mu}\; :F_{\nu)}\; F_\alpha\; F_\beta\; G_\rho:]
+8\;
[\delta_{\gamma(\mu}\; \delta_{\nu)(\alpha} \; :F_{\beta)}\; G_\rho:
-\delta_{\rho(\alpha}\; \delta_{\beta)(\mu} \; :F_{\nu)}\; G_\gamma:]
\nonumber
\ea
~\\
{\bf 6A.i.}\\
We consider first the terms containing two Kronecker factors and 
pick the pattern $M_1,M_2,M_4,M_5$ before $M_3,M_6$. Using 
sandwiching, this allows 
to solve for $\mu,\nu$ at locked $\rho$ and $\alpha$ or 
$\beta$ and  
to solve for $\alpha,\beta$ at locked $\gamma$ and $\mu$ or 
$\nu$. Hence this term, without the $f\leftrightarrow g$ 
contribution is given by  
\ba \label{3.163}
&& \frac{1}{2}\;\sum_{\gamma\le M_3}
\{ 
\sum_{\beta,\rho\le M_0;\alpha\le M_1}\; 
f_{\alpha\beta\gamma}\; 
[g_{\gamma\alpha\rho}+g_{\alpha\gamma\rho}] :F_\beta\; G_\rho:
+
\sum_{\alpha,\rho\le M_0;\beta\le M_2}\; 
f_{\alpha\beta\gamma}\; 
[g_{\gamma\beta\rho}+g_{\beta\gamma\rho}] :F_\alpha\; G_\rho:
\}
\nonumber\\
&=& - \frac{1}{2}\;
\sum_{\rho\le M_6}
\{
\sum_{\nu,\gamma\le M_0;\mu\le M_4}\; 
g_{\mu\nu\rho}\; 
[f_{\rho\mu\gamma}+f_{\mu\rho\gamma}] :F_\nu\; G_\gamma:
+
\sum_{\mu,\gamma\le M_0;\nu\le M_5}\; 
g_{\mu\nu\rho}\; 
[f_{\rho\nu\gamma}+f_{\nu\rho\gamma}] :F_\mu\; G_\gamma:
\}
\nonumber\\
&=&
\frac{1}{2}\;\sum_{\gamma\le M_3;\beta,\rho\le M_0}
\{
[\sum_{\alpha\le M_1}\; f_{\alpha\beta\gamma}\; 
+\sum_{\alpha\le M_2}\; f_{\beta\alpha\gamma}]\; 
[g_{\gamma\alpha\rho}+g_{\alpha\gamma\rho}] :F_\beta\; G_\rho:
\}
\nonumber\\
&& -\frac{1}{2}\;\sum_{\rho\le M_6;\nu,\gamma\le M_0}
\{
[\sum_{\mu\le M_4}\; g_{\mu\nu\rho}\; 
+\sum_{\mu\le M_5}\; g_{\nu\mu\rho}]\; 
[f_{\rho\mu\gamma}+f_{\mu\rho\gamma}] :F_\nu\; G_\gamma:
\}
\ea
In contrast to 
previous considerations,
this does not vanish by inspection due to symmetry properties of 
the coefficients or when picking symmetric ranges 
$M_1=M_2=M_4=M_5$ and $M_3=M_6$ and not even 
when subtracting $f\leftrightarrow g$. To see that it vanishes 
nonetheless requires to perform the sums in the prescribed order.
We pick $M_1=M_2=M_4=M_5, M_3=M_6$ and relabel 
$\mu,\nu,\gamma$ by $\alpha,\beta,\rho$ in the second term to simplify
(\ref{3.162}) which becomes including the $f\leftrightarrow g$ term
\ba \label{3.164}
&&
\frac{1}{2}\;\sum_{\gamma\le M_3;\beta,\rho\le M_0,\alpha\le M_1}\;
:F_\beta\; G_\rho:\;
\left(
\{
[f_{\alpha\beta\gamma}\; +f_{\beta\alpha\gamma}]\; 
[g_{\gamma\alpha\rho}+g_{\alpha\gamma\rho}] 
-
[g_{\alpha\beta\gamma}+g_{\beta\alpha\gamma}]\; 
[f_{\gamma\alpha\rho}+f_{\alpha\gamma\rho}] 
\} -\; f\leftrightarrow g \right)
\nonumber\\
&=&
2\;\sum_{\gamma\le M_3;\beta,\rho\le M_0,\alpha\le M_1}\;
:F_\beta\; G_\rho:\;
\left(
[f_{\alpha\beta\gamma}\; +f_{\beta\alpha\gamma}]\; 
[g_{\gamma\alpha\rho}+g_{\alpha\gamma\rho}] 
\; -f \leftrightarrow g \right)
\ea
We remove the cut-off $M_1$, perform the sum over $\alpha$ and get with
$\alpha=(Imi),\;\beta=(Jnj), \gamma=(Kpk),\; \rho=(Lql)$ for the term
in the round bracket of (\ref{3.163}) without the $f\leftrightarrow g$ term
\ba \label{3.165}
&& 
\sum_I \; 
[
<(e_I e_J f)_{,n},e_K>\;\delta_{mp}\epsilon_{ijk}
-
<(e_J e_I f)_{,m},e_K>\;\delta_{np}\epsilon_{ijk}
]
\;
\times\nonumber\\
&& [
<(e_K e_I g)_{,m},e_L>\;\delta_{pq}\epsilon_{kil}
-
<(e_I e_K g)_{,p},e_L>\;\delta_{mq}\epsilon_{kil}
]
\nonumber\\
&=& 2\;\delta_{jl}\;\sum_I \; 
[
<e_I e_J f,e_{K,n}>\;\delta_{mp}
-
<e_J e_I f,e_{K,m}>\;\delta_{np}
]
\times \nonumber\\
&&
[
<e_K e_I g,e_{L,m}>\;\delta_{pq}
-
<e_I e_K g,e_{L,p}>\;\delta_{mq}
]
\nonumber\\
&=& 2\;\delta_{jl}\; \sum_I\;
[
<e_I e_J f,e_{K,n}>\;<e_K e_I g,e_{L,p}>\;\delta_{pq}
-<e_I e_J f,e_{K,n}>\;<e_I e_K g,e_{L,p}>\;\delta_{pq}
\nonumber\\
&& -<e_J e_I f,e_{K,m}>\;\delta_{np}\;<e_K e_I g,e_{L,m}>\;\delta_{pq}
+<e_J e_I f,e_{K,q}>\;\delta_{np}\;<e_I e_K g,e_{L,p}>
]
\nonumber\\
&=& -2\; \delta_{jl}\;
<e_J\;e_K\;[e_{K,m}\;e_{L,m}\;\delta_{pq}
-e_{K,q}\;e_{L,p}],\;fg>\; \delta_{np}
\ea
Since (\ref{3.165}) is symmetric under $f\leftrightarrow g$, 
(\ref{3.164}) vanishes.\\
\\
If instead we choose the pattern $M_3=M_6$ before $M_1=M_2=M_4=M_5$ 
then we must solve the two Kroneckers 
either for  $\gamma$ and e.g. one of $\alpha,\beta$ or for  
$\rho$ and e.g. one of $\mu,\nu$. With sandwiching and noticing that 
with the chosen synchronised ranges of the cut-offs in (\ref{3.163}) 
the first term
is minus the second term under exchange of $(\alpha,\beta,\gamma)$
with $(\mu,\nu,\rho)$ we find
\ba \label{3.171}
&& -
\sum_{\alpha,\beta,\mu,\nu\le M_1,\gamma,\rho\le M_3}\;
[f_{\alpha\beta\gamma}\;g_{\mu\nu\rho} -f\leftrightarrow g]\;
[
\delta_{\rho\alpha} \delta_{\beta\mu} :F_{\nu} G_\gamma:
+\delta_{\rho\beta} \delta_{\alpha\mu} :F_{\nu} G_\gamma:
+\delta_{\rho\alpha} \delta_{\beta\nu} :F_{\mu} G_\gamma:
+\delta_{\rho\beta} \delta_{\alpha\nu} :F_{\mu} G_\gamma:
] 
\nonumber\\
&=& - \{
\sum_{\nu\gamma\le M_0;\alpha,\beta\le M_1}\;
[f_{\alpha\beta\gamma}\;g_{\beta\nu\alpha} -f\leftrightarrow g]\;
:F_{\nu} G_\gamma:
+
\sum_{\nu\gamma\le M_0;\alpha,\beta\le M_1,\gamma}\;
[f_{\alpha\beta\gamma}\;g_{\alpha\nu\beta} -f\leftrightarrow g]\;
:F_{\nu} G_\gamma:
\nonumber\\ &&
+
\sum_{\mu\gamma\le M_0;\alpha,\beta\le M_1}\;
[f_{\alpha\beta\gamma}\;g_{\mu\beta\alpha} -f\leftrightarrow g]\;
:F_{\mu} G_\gamma:
+
\sum_{\mu,\alpha\le M_0;\alpha,\beta\le M_1}\;
[f_{\alpha\beta\gamma}\;g_{\mu\alpha\beta} -f\leftrightarrow g]\;
:F_{\mu} G_\gamma:
\} 
\nonumber\\
&=& -\sum_{\mu,\gamma\le M_0;\alpha,\beta\le M_1}
f_{\alpha\beta\gamma}\;
[g_{\beta\mu\alpha}+g_{\alpha\mu\beta}
+g_{\mu\beta\alpha}+g_{\mu\alpha\beta} - f \leftrightarrow g]
:F_{\mu} G_\gamma:
\ea
In this expression we no longer have the freedom to carry out one limit 
before the other. We have with $\alpha=(Imi),\beta=(Jnj),\gamma=(Kpk),
\mu=(Lql)$
\ba \label{3.172}
&& 
\sum_{\alpha,\beta\le M_1}\;
f_{\alpha\beta\gamma}\;
[g_{\beta\nu\alpha}+g_{\alpha\nu\beta}
+g_{\nu\beta\alpha}+g_{\nu\alpha\beta} - f \leftrightarrow g]
\nonumber\\
&=&
\sum_{I,J\le M_1}\;
f_{IJnK}\delta_{mp}\epsilon_{ijk}\;
[
g_{JLqI} \delta_{nm} \epsilon_{jli}
+
g_{ILqJ} \delta_{mn} \epsilon_{ilj}
+
g_{LJnI} \delta_{qm} \epsilon_{lji}
+
g_{LImJ} \delta_{qn} \epsilon_{lij}
- f \leftrightarrow g]
\nonumber\\
&=& 2\delta_{kl}
\sum_{I,J\le M_1}\;
f_{IJnK}\;
[
g_{JLqI} \delta_{np}
-
g_{ILqJ} \delta_{np}
-
g_{LJnI} \delta_{qp} 
+
g_{LIpJ} \delta_{qn} 
- f \leftrightarrow g]
\nonumber\\
&=& -2\delta_{kl}
\sum_{I,J\le M_1}\;
<e_I e_J f, e_{K,n}>
[-
<(e_J e_L g)_{,n},e_I>\delta_{qp} 
+<(e_I e_L g)_{,p},e_J>\delta_{qn} 
- f \leftrightarrow g]
\ea
where we used that $f_{IJnK}=f_{JInK}$ to interchange $I,J$ in the second term
at finite $M_1$ to see that it cancels against the first term
which again does not vanish at any finite $M_1$. Now we remove $M_1$ and 
can perform the sum over one of $I$ or $J$ using the completeness relation
and obtain after relabelling the remaining of $J,I$ into $I$
\ba \label{3.173}
&& -2\delta_{kl}
\sum_I\;
[-
<(e_I e_L g)_{,r}, f e_I e_{K,r}>\delta_{qp} 
+<(e_I e_L g)_{,p}, f e_I e_{K,q}> 
- f \leftrightarrow g]
\nonumber\\
&& 2\delta_{kl}\;
[
<\Lambda e_L, [g_{,r} f-f_{,r} g] e_{K,r}>\delta_{qp} 
-<\Lambda e_L [g_{,p}, f -f_{,p} g] e_{K,q}>]
\ea
where we introduced the quantity
\be \label{3.174}
\Lambda(x):=\lim_{M\to \infty} \Lambda_M(x),\;
\Lambda_M(x)=\sum_{I\le M} e_I(x)^2
\ee
As already mentioned, $\Lambda$ is formally given by $\delta(x,x)$ which   
is divergent. For both $\sigma=T^3,\;\mathbb{R}^3$ we have 
\be \label{3.175}
\Lambda_M=[\frac{\zeta_M(0)}{\pi}]^3,\;
\zeta_M(z):=\sum_{n=1}^M\; n^{-z}
\ee
where $\zeta_M(z)$ is the truncated Riemann zeta function \cite{30} whose 
limit converges for $\Re(z)>1$. Thus 
$\Lambda_M(x)$ is in this case 
independet of $x$. To regularise $\zeta_M(0)$ we 
may use the analytic extension of the zeta funcion which gives 
$\zeta(0)=-\frac{1}{2}$. Given any such regularisation of $\Lambda$
we would conclude that in the pattern $M_3=M_6$ before  
$M_1=M_2=M_4=M_5$ the contribution to the curly bracket from the double 
Kronecker symbol is given by 
\be \label{3.176}
-2\sum_{K,L\le M_0}\delta_{kl}\;
[<\Lambda e_L, \omega_r e_{K,r}>\delta_{qp} 
-<\Lambda e_L \omega_p e_{K,q}>]
:F_{Lql;} G_{Kpk}:,\;\; \omega_r:=f g_{,r}-g f_{,r} 
\ee
To interpret this result consider 
\ba \label{3.177}
&&\int\;d^3x\; \delta^{ab}\omega_a\; 
:(\partial_b A_c^k-\partial_c A_b^k) E^c_k:
=
\frac{i}{2}
\sum_{K,L} \delta^{rp}\;\delta_{kl} 
[<\omega_r, e_{K,p} e_L> :G_{Kqk} F_{Lql}: 
-<\omega_r, e_{K,q} e_L> :G_{Kpk} F_{Lql}:] 
\nonumber\\
&=&
\frac{i}{2}
\sum_{K,L} \delta^{rp}\;\delta_{kl} 
[<e_L, \omega_r e_{K,r}> \delta_{pg} -<e_L, \omega_p e_{K,q}>] 
:G_{Kpk} F_{Lql}:] 
\ea
It follows that (\ref{3.176}) is given by 
\be \label{3.178}
4i \int\; d^Dx\; [\Lambda \omega_a]\; \delta^{ab} \; 
:(\partial_b A_c^k-\partial_c A_b^k) E^c_k:
\ee
~\\
{\bf 6A.ii.}\\
We now focus on the single Kronecker symbol contribution to (\ref{3.162}).
In both schemes that we discuss for the first curly bracket we see that 
the second contribution to the single Kronecker term is the same 
as the first with $\alpha,\beta,\gamma$ interchanged with 
$\mu,\nu,\rho$ with a minus sign. Relabelling indices produces another minus
sign from the antisymmetric combination of the $f,g$ dependence. Thus 
we may drop the second contribution and instead multiply by a factor of 
two. Then $\mu,\nu,\gamma$ are locked by sandwiching and we solve 
for $\alpha$ or $\beta$ using the $M_1=M_2$ before $M_3$ pattern while 
we solve for $\rho$ in the $M_3$ before $M_1=M_2$ pattern. 

The first pattern gives
\ba \label{3.179}
&&
\sum_{\alpha,\mu,\beta,\nu\le M_1;\gamma,\rho\le M_3}
[f_{\alpha\beta\gamma}\;g_{\mu\nu\rho} -f\leftrightarrow g]\;
:F_\mu F_\nu [\delta_{\rho\alpha} F_{\beta} +\delta_{\rho\beta} F_{\alpha}] 
G_\gamma:
\nonumber\\
&=&
\sum_{\mu,\nu,\beta,\gamma\le M_0;\rho\le M_3}
[f_{\rho\beta\gamma}\;g_{\mu\nu\rho} -f\leftrightarrow g]\;
:F_\mu F_\nu F_{\beta} G_\gamma:
+
\sum_{\mu,\nu\alpha,\gamma\le M_0;\rho\le M_3}
[f_{\alpha\rho\gamma}\;g_{\mu\nu\rho} -f\leftrightarrow g]\;
:F_\mu F_\nu F_{\alpha} G_\gamma:
\nonumber\\
&=&
\sum_{\mu,\nu,\alpha,\gamma\le M_0;\rho\le M_3}
[(f_{\rho\alpha\gamma}+f_{\alpha\rho\gamma})\;g_{\mu\nu\rho} 
-f\leftrightarrow g]\;
:F_\mu F_\nu F_{\alpha} G_\gamma:
\ea
and the second 
\ba \label{3.179a}
&&
\sum_{\alpha,\mu,\beta,\nu\le M_1;\gamma,\rho\le M_3}
[f_{\alpha\beta\gamma}\;g_{\mu\nu\rho} -f\leftrightarrow g]\;
:F_\mu F_\nu [\delta_{\rho\alpha} F_{\beta} +\delta_{\rho\beta} F_{\alpha}] 
G_\gamma:
\nonumber\\
&=&
\sum_{\mu,\nu,\beta,\gamma\le M_0;\alpha\le M_1}
[f_{\alpha\beta\gamma}\;g_{\mu\nu\alpha} -f\leftrightarrow g]\;
:F_\mu F_\nu \;F_{\beta} \;G_\gamma:
+
\sum_{\mu,\nu,\alpha,\gamma\le M_0;\beta\le M_1}
[f_{\alpha\beta\gamma}\;g_{\mu\nu\beta} -f\leftrightarrow g]\;
:F_\mu F_\nu F_{\alpha}\; G_\gamma:
\nonumber\\
&=&
\sum_{\mu,\nu,\alpha,\gamma\le M_0;\beta\le M_1}\;
[(f_{\alpha\beta\gamma}+f_{\beta\alpha\gamma})
\;g_{\mu\nu\beta} -f\leftrightarrow g]\;
:F_\mu F_\nu F_{\alpha}\; G_\gamma:
\nonumber\\
&=&
\sum_{\mu,\nu,\alpha,\gamma\le M_0;\rho\le M_1}\;
[(f_{\alpha\rho\gamma}+f_{\rho\alpha\gamma})
\;g_{\mu\nu\rho} -f\leftrightarrow g]\;
:F_\mu F_\nu F_{\alpha}\; G_\gamma:
\ea
where we have relabelled $\beta$ by $\rho$ in the last step.
Thus except for the range, (\ref{3.179}) and (\ref{3.179a}) are identical.
This is in general true for terms with a single Kronecker symbol.  

The remaining cut-off $M_3$ or $M_1$ can now be removed and we have with 
$\alpha=(Imi), \gamma=(Jnj),\mu=(Mur),\nu=(Nvs),\rho=(Kpk)$
\ba \label{3.180}
&& \sum_\rho 
[(f_{\rho\alpha\gamma}+f_{\alpha\rho\gamma})\;g_{\mu\nu\rho} 
-f\leftrightarrow g]\;
\nonumber\\
&=& \sum_K\;
[f_{KImJ} \delta_{pn}\epsilon_{kij}+f_{IKpJ}\delta_{mn}\epsilon_{ikj}
-f\leftrightarrow g]\;
g_{MNvK}\delta_{up}\epsilon_{rsk}
\nonumber\\
&=&
\sum_K
[f_{KImJ} \delta_{un}-f_{IKuJ}\delta_{mn} -f\leftrightarrow g]\;
\epsilon_{kij} g_{MNvK}\epsilon_{krs}
\nonumber\\
&=&-
\sum_K
[ <e_K e_I f, e_{J,m}>\delta_{un}
-<e_K e_I f, e_{J,u}>\delta_{mn}-
-f\leftrightarrow g]\;<(e_M e_N g)_{,v},e_K> 
\epsilon_{kij}\epsilon_{krs}
\nonumber\\
&=&-
[<(e_M e_N g)_{,v} e_I f, e_{J,m}>\delta_{un}
-<(e_M e_N g)_{,v} e_I f, e_{J,u}>\delta_{mn}
-f\leftrightarrow g]\; 
\epsilon_{kij}\epsilon_{krs}
\nonumber\\
&=&-
[<e_M e_N e_I \omega_v, e_{J,m}>\delta_{un}
-<e_M e_N e_I \omega_v, e_{J,u}>\delta_{mn}]\; 
\epsilon_{kij}\epsilon_{krs}
\ea
Thus the single Kronecker symbol contribution to the curly bracket is
\ba \label{3.181}
&& -
\sum_{I,J,M,N\le M_0}
[<e_M e_N e_I \omega_v, e_{J,m}>\;
:F_{Mnr} F_{Nvs} F_{Imi} G_{Jnj}:
-<e_M e_N e_I \omega_v, e_{J,u}>\; 
:F_{Mur} F_{Nvs} F_{Ini} G_{Jnj}:]
\epsilon_{kij}\epsilon_{krs}
\nonumber\\
&& -
\sum_{I,J,M,N\le M_0}\; <e_M e_N e_I \omega_v, e_{J,m}>\;
:(F_{Mnr}  F_{Imi}-F_{Mmr} F_{Ini}) F_{Nvs} G_{Jnj}:\;
(\delta_{ir}\delta_{js}-\delta_{is}\delta_{jr})
\nonumber\\
&=& -
\sum_{I,J,M,N\le M_0}\; <e_M e_N e_I \omega_v, e_{J,m}>\;
\{
:(F_{Mnr}  F_{Imr}- F_{Inr} F_{Mmr}) F_{Nvj} G_{Jnj}:
-
\nonumber\\ &&
:(F_{Mnj}  F_{Imi}-F_{Mmj} F_{Ini}) F_{Nvi} G_{Jnj}:
\}
\nonumber\\
&=& 
\sum_{I,J,M,N\le M_0}\; <e_M e_N e_I \omega_v, e_{J,m}>\;
:[(F_{Mnj}  F_{Imi}-F_{Mmj} F_{Ini}) F_{Nvi} G_{Jnj}:
\ea
where in the step before the last we used that the 
normal ordered factor in the first term is antisymmetric 
in $I,M$ while the prefactor coefficient is symmetric.

To interpret this result consider
\ba \label{3.182}
&& \int\; d^3x\; \omega_a \; \delta^{ij} \; 
:E^a_i E^b_j (\partial_b A_c^k-\partial_c A_b^k)\; E^c_k:
\nonumber\\
&=& -\frac{i}{4}\sum_{I,J,M,N}\;
<\omega_v e_N e_M e_I, e_{J,m}>\; 
\{
:(F_{Mmi} F_{Inj}-F_{Mni} F_{Imj}) G_{Jnj} F_{Nvi} :
\nonumber\\
&=& -\frac{i}{4}\sum_{I,J,M,N}\;
<\omega_v e_N e_M e_I, e_{J,m}>\; 
\{
:(F_{Imi} F_{Mnj}-F_{Ini} F_{Mmj}) G_{Jnj} F_{Nvi} :
\ea
which is precisely $-i/4$ times (\ref{3.181}).

We summarise
\be \label{3.183}
[C_{0;3}(f),C_{0;3}(g)]= 4i\; 
\int\; d^3x\; [f\; g_{,a}-g f_{,a}]\;\;:[E^a_i e^b_j \delta^{ij}
+\Lambda \delta^{ab}]\; (2\partial_{[b} A_{c]}^k) \;E^c_k
\ee
where $\Lambda=0$ e.g. for the limiting pattern with 
$M_1=M_2=M_4=M_5$ before $M_3=M_6$ and $\Lambda\not=0$ is a regularisation 
of $\delta(0,0)$ e.g. for the pattern
$M_3=M_6$ before $M_1=M_2=M_4=M_5$.\\ 
\\
{\bf 6B.}\\
The second curly bracket is given by 
\be \label{3.184}
%(-1/2^{1/2})(-1/4)
\frac{1}{4\; 2^{1/2}}\; 
\sum_{\alpha\le M_1,\beta\le M_2,\gamma\le M_3,
\mu\le M_4,\nu\le M_5,\rho\le M_6, \sigma\le M_7}\;
[f_{\alpha\beta\gamma}\; g_{\mu\nu\rho\sigma}- f\leftrightarrow g]\;
[\;:F_\alpha F_\beta G_\gamma:\;,\;
:F_\mu\; F_\nu\; G_\rho\; G_\sigma:\;]
\ee
where the evaluation of the commutator is given in (\ref{a.9}). \\
\\
{\bf 6B.i}\\
This is the first time that we have to deal with a triple product 
of Kronecker symbols whose contribution is 
\be \label{3.184a}
[\;:F_\alpha F_\beta G_\gamma:\;,\;
:F_\mu\; F_\nu\; G_\rho\; G_\sigma:\;]_{\delta^3}
=-16\; F_{(\mu}\;
\delta_{\nu)(\alpha}\; \delta_{\beta)(\rho}\; \delta_{\sigma)\gamma}
-8\; F_{(\mu}\;
\delta_{\nu)\gamma}\; \delta_{\alpha(\rho}\; \delta_{\sigma)\beta}
\ee
Whatever scheme we choose, the Kronecker symbols identify the 
same finite range sub-indices $m=n,i=j$ in $\delta_{\alpha,\beta}$ 
if $\alpha=(Imi),\;\beta=(Jnj)$. Thus if we can show that 
(\ref{3.184}) resticted to (\ref{3.184a}) vanishes due to symmetries
among the finite range sub-indices in one scheme at finite 
remaining cut-offs, it does so 
in any other because different schemes differ only in the 
order in which the surviving ininite range indices are summed to $\infty$.   
For this purpose we consider the pattern 
$M_4,M_5, M_6, M_7$ before $M_1, M_2, M_3$ and 
solve for $\mu,\nu,\rho,\sigma$. 
After relabelling, sandwiching this results in a term proportional 
to 
\ba \label{3.184b}
&& \sum_{\mu\le M_0}\; F_\mu\;
\sum_{\alpha\le M_1,\beta\le M_2,\; \gamma\le M_3}\;
\times \\
&& 
\{f_{\alpha\beta\gamma}\;
[
(g_{\mu\alpha\beta\gamma}+
g_{\mu\alpha\gamma\beta}+
g_{\alpha\mu\beta\gamma}+
g_{\alpha\mu\gamma\beta})+
(g_{\mu\beta\alpha\gamma}+
g_{\mu\beta\gamma\alpha}+
g_{\beta\mu\alpha\gamma}+
g_{\beta\mu\gamma\alpha})+
(g_{\mu\gamma\alpha\beta}+
g_{\mu\gamma\beta\alpha}+
g_{\gamma\mu\alpha\beta}+
g_{\gamma\mu\beta\alpha})]\;
-\;f\leftrightarrow g\}
\nonumber
\ea
With $\alpha=(Imi),\; \beta=(Jnj),\; \gamma=(Kpk),\;\mu=(Lql)$ we note 
that the three groups of four $g$'s is proprtional 
to 
$\epsilon_{tli}\; \epsilon_{tjk},\;
\epsilon_{tlj}\; \epsilon_{tik},\;
\epsilon_{tlk}\; \epsilon_{tij}$ respectively while $f_{\alpha\beta\gamma}$ 
is proportional to $\epsilon_{ijk}$. Summing over $i,j,k=1,2,3$ gives 
respectively 
$2\;\epsilon_{tli}\; \delta_{ti},\;
-2\epsilon_{tlj}\; \delta_{tj},\;
2\epsilon_{tlk}\; \delta_{tk}$
which all vanish when summing over $t$.\\
\\
{\bf 6B.ii}\\
The double Kronecker contribution to the commutator in 
(\ref{3.184}) is given by
\ba \label{3.184c}
&& [\;:F_\alpha F_\beta G_\gamma:\;,\;
:F_\mu\; F_\nu\; G_\rho\; G_\sigma:\;]_{\delta^2}
=
8\; : F_\mu\; F_\nu \;F_{(\alpha}:\;
\delta_{\beta)(\rho}\;\delta_{\sigma)\gamma} 
+8\; : G_\rho\; G_\sigma \;F_{(\alpha}:\;
\delta_{\beta)(\mu}\;\delta_{\nu)\gamma} 
\nonumber\\
&& -16\; : G_\gamma\; G_{(\rho}\;
\delta_{\sigma)(\alpha}\;\delta_{\beta)(\mu}\; F_{\nu)}
\ea
We consistently keep the pattern 
$M_4,M_5, M_6, M_7$ before $M_1, M_2,M_3$.\\
\\ 
Then the first term in (\ref{3.184c}) gives using sandwiching 
and relabelling a contribution proportional to, when 
solving for $\rho,\sigma$ 
\be \label{3.184d}
2\sum_{\mu,\nu,\alpha\le M_0}\;
:F_\mu\; F_\nu\; F_\alpha;\;\sum_{\gamma\le M_3}\;
([\sum_{\beta\le M_2}\;f_{\alpha\beta\gamma}\;
+\sum_{\beta\le M_1}\;f_{\beta\alpha\gamma}]\;
[g_{\mu\nu\beta\gamma}+g_{\mu\nu\gamma\beta}]\;-f\leftrightarrow g)
\ee
With $\alpha=(Imi),\beta=(Jnj),\gamma=(Kpk),\; 
\mu=(Mur),\; \nu=(Nvs)$ we have explicitly
\ba \label{3.184e}
&& f_{\alpha\beta\gamma}=
-<e_I e_J f,e_{K,n}>\delta_{mp}\epsilon_{ijk},\;  
f_{\beta\alpha\gamma}=
<e_I e_J f,e_{K,m}>\delta_{np}\epsilon_{ijk},\;  
\nonumber\\
&& g_{\mu\nu\beta\gamma}
=<e_M e_N e_J e_K,g>\delta_{un}\delta_{vp}\epsilon_{trs}\epsilon_{tjk},\;
g_{\mu\nu\gamma\beta}
=-<e_M e_N e_J e_K,g>\delta_{up}\delta_{vn}\epsilon_{trs}\epsilon_{tjk},\;
\ea
If we remove $M_1,M_2$ at finite $M_3$ we can perform
the sum over $J$ in (\ref{3.184d}) and obtain an expression which depends 
only on $fg$ 
and from which we subtract the same expression with 
$f\leftrightarrow g$. Thus (\ref{3.184d})
vanishes in this pattern. If on the other hand we remove 
$M_3$ at finite $M_1,M_2$ then we possibly obtain a contribution which 
{\it renormalises the cosmological constant} because that term 
depends on the derivatives of $f,g$.

To see whether this is the case we substitute (\ref{3.184e}) into 
(\ref{3.184d}) and obtain
\ba \label{3.184e1}
&& 4\sum_{\mu,\nu,\alpha\le M_0}\;
:F_\mu\; F_\nu\; F_\alpha:\;\epsilon_{irs}\;
\sum_{K\le M_3}\;
([\sum_{J\le M_2}\;[f_{IJnK}\delta_{mp}
-\sum_{J\le M_1}\;f_{JImK}\delta_{np}]\;
g_{MNJK}\;[\delta_{un}\delta_{vp}-\delta_{up}\delta_{vn}]
\;-f\leftrightarrow g)
\nonumber\\
&&= 4\sum_{\mu,\nu,\alpha\le M_0}
:F_\mu F_\nu F_\alpha: \epsilon_{irs}
\sum_{K\le M_3}
(\{\sum_{J\le M_2} [\delta_{un}\delta_{vm}-\delta_{um}\delta_{vn}]
-\sum_{J\le M_1}
[\delta_{un}\delta_{vn}-\delta_{un}\delta_{vn}]\}
f_{IJnK} g_{MNJK}
-f\leftrightarrow g)
\nonumber\\
&&= 4\sum_{\mu,\nu,\alpha\le M_0}\;
:F_\mu\; F_\nu\; F_\alpha:\;\epsilon_{irs}\;
\sum_{K\le M_3}\;\sum_{J\le M_2}\; (g_{MNJK}\;
[f_{IJuK}\;\delta_{vm}-f_{IJvK}\;\delta_{um}]
\;-f\leftrightarrow g)
\ea
We remove $M_3$ and perform the sum over $K$ unconstrained which 
gives  
\ba \label{3.184e2}
&& -4\sum_{M,N,I\le M_0}\;
:F_{Mur}\; F_{Nvs}\; F_{Imi}:\;\epsilon_{irs}\;
\sum_{J\le M_2}\; 
\times\nonumber\\
&& (<e_J^2 e_M e_N e_I,[f_{,u}\; g-f\; g_{,u}]>\;\delta_{vm}
-<e_J^2 e_M e_N e_I,[f_{,v}\; g-f\; g_{,v}]>\;\delta_{um})
\nonumber\\
&=& -8\sum_{M,N,I\le M_0}\;
:F_{Mur}\; F_{Nvs}\; F_{Imi}:\;\epsilon_{irs}\;
\sum_{J\le M_2}\; 
<e_J^2 e_M e_N e_I,[f_{,u}\; g-f\; g_{,u}]>\;\delta_{vm}
\nonumber\\
&=& 8\sum_{M,N,I\le M_0}\; \sum_{J\le M_2}\; 
:F_{Mur}\; F_{Nvs}\; F_{Ivi}:\;\epsilon_{irs}\;
<e_J^2 e_M e_N e_I,\omega_u>
\nonumber\\
&=& 0
\ea
with $\omega_u=f \; g_{,u}-f_{,u}\;g$. 
Here we have relabelled $M,u,r$ with $N,v,s$ in the second step 
and used commutativity under the normal ordering symbol while in the 
last step we used that $\epsilon_{irs}<e_J^2 e_M e_N e_I, \omega_u>$
is antisymmetric under $(N,s)\leftrightarrow (I,i)$ while 
$:F_{Mur}\; F_{Nvs}\; F_{Ivi}:$ is symmetric and that $I,N\le M_0$ 
have the same range.  Accordingly, there 
is no renormalisation of the cosmological constant term in this 
pattern.\\
\\
The second term in (\ref{3.184c}) gives after sandwiching, relabelling
and solving for $\mu,\nu$
\be \label{3.184f}
2\sum_{\rho,\sigma,\alpha\le M_0}\;
:G_\rho\; G_\sigma\; F_\alpha;\;\sum_{\gamma\le M_3}\;
([\sum_{\beta\le M_2}\;f_{\alpha\beta\gamma}\;
+\sum_{\beta\le M_1}\;f_{\beta\alpha\gamma}]\;
[g_{\beta\gamma\rho\sigma}+g_{\gamma\beta\rho\sigma}])\;-f\leftrightarrow g)
\ee
Using the expressions (\ref{3.184e}) we see by the same argument 
that in the pattern $M_1,M_2$ before $M_3$ the expression 
(\ref{3.184f}) vanishes. In the pattern $M_3$ before $M_1,M_2$
the contribution including the prefactors to the curly bracket is 
with $\alpha=(Imi),\beta=(Jnj),\gamma=(Kpk),\rho=(Mur),\sigma=(Nvs)$
\ba \label{3.184g}
&& 
\frac{1}{2\; 2^{1/2}}
\sum_{\rho,\sigma,\alpha\le M_0}\;
:G_\rho\; G_\sigma\; F_\alpha;\;\sum_{\gamma\le M_3}\;
([\sum_{\beta\le M_2}\;f_{\alpha\beta\gamma}\;
+\sum_{\beta\le M_1}\;f_{\beta\alpha\gamma}]\;
[g_{\beta\gamma\rho\sigma}+g_{\gamma\beta\rho\sigma}])\;-f\leftrightarrow g)
\nonumber\\
&=& 
\frac{1}{2\; 2^{1/2}}
\sum_{\rho,\sigma,\alpha\le M_0}\;
:G_\rho\; G_\sigma\; F_\alpha;\;\sum_{K\le M_3}\;
\{
[\sum_{J\le M_2}\;  f_{IJnK} \delta_{mp} \epsilon_{ijk}
-\sum_{J\le M_1}\; f_{JImK} \delta_{np} \epsilon_{ijk}]\;
\times\nonumber\\
&&
[g_{KJMN} \delta_{pu}\delta_{nv}\epsilon_{tkj}\epsilon_{trs}
-g_{JKMN} \delta_{nu}\delta_{pv}\epsilon_{tkj}\epsilon_{trs}] 
- f \leftrightarrow g
\}
\nonumber\\
&=& 
-\frac{1}{2^{1/2}}
\sum_{\rho,\sigma,\alpha\le M_0}\;
:G_\rho\; G_\sigma\; F_\alpha;\;\sum_{K\le M_3}\;
\epsilon_{irs}\;
\{
[\sum_{J\le M_2}\;  f_{IJnK} \delta_{mp}
-\sum_{J\le M_1}\; f_{JImK} \delta_{np}]\;
\times\nonumber\\
&& [g_{KJMN} \delta_{pu}\delta_{nv}
-g_{JKMN} \delta_{nu}\delta_{pv}] 
- f \leftrightarrow g
\}
\nonumber\\
&=& 
-\frac{1}{2^{1/2}}
\sum_{\rho,\sigma,\alpha\le M_0}\;
:G_\rho\; G_\sigma\; F_\alpha;\;\sum_{K\le M_3}\;
\epsilon_{irs}\;
\times\nonumber\\
&& \{
\sum_{J\le M_2}\;  f_{IJnK}\;g_{JKMN}\;
[\delta_{mu}\delta_{nv}-\delta_{nu}\delta_{mv}] 
-\sum_{J\le M_1}\; f_{JImK}\;g_{JKMN}\;
[\delta_{nu}\delta_{nv}-\delta_{nu}\delta_{nv}] 
- f \leftrightarrow g
\}
\nonumber\\
&=& 
-\frac{1}{2^{1/2}}
\sum_{\rho,\sigma,\alpha\le M_0}\;
:G_\rho\; G_\sigma\; F_\alpha;\;\sum_{K\le M_3}\;
\epsilon_{irs}\;
\{\sum_{J\le M_2}\; \;g_{JKMN}\;
[\delta_{mu}\;f_{IJvK}-\delta_{mv}\;f_{IJuK}] 
- f \leftrightarrow g
\}
\nonumber\\
&=& 
-2^{1/2}
\sum_{\rho,\sigma,\alpha\le M_0}\;
:G_\rho\; G_\sigma\; F_\alpha;\;\sum_{K\le M_3}\;
\epsilon_{irs}\;
\{
\sum_{J\le M_2}\; \;g_{JKMN}\;
[\delta_{mu}\;f_{IJvK} - f \leftrightarrow g
\}
\ea
where in the last step we interchanged $Mur$ and $Nvs$ and used 
commutativity under the normal ordering symbol.

We remove $M_3$ and sum unconstrained over $K$ which gives 
\ba \label{3.184h}
&&
-2^{1/2}
\sum_{M,N,I\le M_0}\;
:G_{Mur}\; G_{Nvs}\; F_{Imi};\;\sum_{J\le M_2}\; 
\sum_{K\le M_3}\;
\epsilon_{irs}\;
\{
g_{JKMN}\;f_{IJvK}\;\delta_{mu}
- f \leftrightarrow g
\}
\nonumber\\
&=&
-2^{1/2}
\sum_{M,N,I\le M_0}\;
:G_{Mur}\; G_{Nvs}\; F_{Iui};\;\sum_{J\le M_2}\; 
\epsilon_{irs}\;
\{<e_J^2\; e_M e_N e_I,[f_{v} g-f\; g_{,v}]>
\nonumber\\
&=&
2^{1/2}
\sum_{M,N,I\le M_0}\;
:G_{Mur}\; G_{Nvs}\; F_{Iui};\; 
\epsilon_{irs}\;
<\Lambda_{M_2}\; e_M e_N e_I \omega_v>
\ea
where $\Lambda_M=\sum_{I\le M} e_I^2,\; \omega_v=f g_{,v}-f_{,v} g$.

To interpret this term we consider 
\be \label{3.209}
\int d^3x\; \Lambda \omega_a \delta^{ab}\;\epsilon_{ijk} 
\;:A_b^j \; A_c^k\; E^c_i:
=\frac{i}{2\; 2^{1/2}}\;\epsilon^{irs}
\sum_{I,M,N} <\Lambda \omega_v e_I e_M e_N> G_{Nvr} G_{Mus} F_{Jui} 
\ee
Thus (\ref{3.184h}) is $4i$ times (\ref{3.209}).\\ 
\\
The third term in (\ref{3.184c}) gives after sandwiching, relabelling
and solving 
for $\mu$ or $\nu$ and $\rho$ or $\sigma$ 
\be \label{3.184g1}
-2\sum_{\mu,\sigma,\gamma\le M_0}\;
:F_\mu\; G_\sigma\; G_\gamma:\;
\sum_{\alpha\le M_1,\beta\le M_2}\;
(f_{\alpha\beta\gamma}\;
[g_{\mu\alpha\beta\sigma}+
g_{\mu\beta\alpha\sigma}+
g_{\alpha\mu\beta\sigma}+
g_{beta\mu\alpha\sigma}+
g_{\mu\alpha\sigma\beta}+
g_{\mu\beta\sigma\alpha}+
g_{\alpha\mu\sigma\beta}+
g_{beta\mu\sigma\alpha}]
-f\leftrightarrow g)
\ee
This term vanishes, in both patterns $M_1,M_2$ before $M_3$ or 
$M_3$ before $M_1,M_2$ because $M_3$ is locked, by the same 
argument as before because we only get contributions depending 
on $fg$ when e.g. removing $M_1$ before $M_2$ and then 
we subtract the same expression with $f,g$ interchanged.\\  
\\
{\bf 6B.iii}\\
We now consider the single Kronecker symbol contribution to (\ref{3.184})
which is 
\be \label{3.184h1}
[\;:F_\alpha F_\beta G_\gamma:\;,\;
:F_\mu\; F_\nu\; G_\rho\; G_\sigma:\;]_{\delta^1}
=
-4\delta_{\gamma(\mu}\;:F_{\nu)}\; F_\alpha\; F_\beta \; 
G_\rho\; G_\sigma:
+8\; 
:F_{(\alpha}\; \delta_{\beta)(\rho}\; G_{\sigma)}\; F_\mu\; 
F_\nu\; G_\gamma:
\ee
Its contribution to (\ref{3.184}) is obtained by sandwiching and solving 
the Kronecker for $\mu$ or $\nu$ in the first term and for $\rho$ or 
$\sigma$ in the second
\ba \label{3.184i}
&&-\frac{1}{2\; 2^{1/2}}\; 
\sum_{\alpha,\beta,\rho,\sigma,\nu\le M_0} \;
:F_\nu\; F_\alpha\; F_\beta \; G_\rho\; G_\sigma:\;
\{
\sum_{\gamma\le M_3}
f_{\alpha\beta\gamma} \; [g_{\gamma\nu\rho\sigma}+g_{\nu\gamma\rho\sigma}]
- f\leftrightarrow g]
\nonumber\\
&&
+\sum_{\alpha,\sigma,\mu,\nu,\gamma\le M_0}\; 
:F_\alpha\; G_\sigma\; F_\mu\; F_\nu\; G_\gamma:\;
\{
[\sum_{\beta\le M_2} f_{\alpha\beta\gamma}
+\sum_{\beta\le M_1} f_{\beta\alpha\gamma}]\;
[g_{\mu\nu\beta\sigma}+g_{\mu\nu\sigma\beta}]
- f\leftrightarrow g
\}
\ea
We can now remove the remaining cut-offs. Then we see that the terms 
with the remaining sums over $\beta$ vanish because they produce just terms 
depending on $fg$ and therefore subtraction of the same term with $f,g$ 
interchanged leads to cancellation. We are thus left with with the first term 
in (\ref{3.184i}) given by
\be \label{3.211}
-\frac{1}{2^{1/2}}\; 
\sum_{\alpha,\beta,\rho,\sigma,\mu\le M_0;\gamma\le M_3}
\{
f_{\alpha\beta\gamma}\; 
[g_{\gamma\mu\rho\sigma} + g_{\mu\gamma\rho\sigma}- f\leftrightarrow g\}\;
:F_{\mu} F_\alpha F_\beta G_\rho G_\sigma:
\ee
Using $\alpha=(Imi),\beta=(Jnj),\gamma=(Kpk),\mu=(Mqr),\rho=(Sus),\sigma=(Tvt)$ 
we can perform the sum over $\gamma$ unconstrained removing 
$M_3$. We have
\ba \label{3.212}
&& \sum_\gamma
\{f_{\alpha\beta\gamma}\; 
[g_{\gamma\mu\rho\sigma} + g_{\mu\gamma\rho\sigma}
- f\leftrightarrow g\}\;
\nonumber\\
&=&
\sum_K\; 
\{f_{IJnK} \delta_{mp}\epsilon_{ijk}
[g_{KMST} \delta_{pu}\delta_{qv} \epsilon_{lkr}\epsilon_{lst}
+g_{MKST} \delta_{qu}\delta_{pv} \epsilon_{lrk}\epsilon_{lst}]
- f\leftrightarrow g\}\;
\nonumber\\
&=&
\sum_K\; 
\{f_{IJnK} g_{KMST} \epsilon_{ijk}\epsilon_{lkr}\epsilon_{lst}
[\delta_{mu}\delta_{qv}-\delta_{qu}\delta_{mv}]
- f\leftrightarrow g\}\;
\nonumber\\
&=&
\{<(e_I e_J f)_{,n}, e_M e_S e_T g>- f\leftrightarrow g\}\;
\epsilon_{ijk}\epsilon_{lkr}\epsilon_{lst}
[\delta_{mu}\delta_{qv}-\delta_{qu}\delta_{mv}]
\nonumber\\
&=&-
<e_I e_J e_M e_S e_T, \omega_n>\;
\epsilon_{ijk}\epsilon_{lkr}\epsilon_{lst}
[\delta_{mu}\delta_{qv}-\delta_{qu}\delta_{mv}]
\ea
Thus (\ref{3.211}) becomes
\ba \label{3.213}
&&\frac{1}{2^{1/2}}\; 
\sum_{I,J,M,S,T\le M_0}\;
<e_I e_J e_M e_S e_T, \omega_n>\;
\epsilon_{ijk}\epsilon_{lkr}\epsilon_{lst}
:F_{Mvr} F_{Imi} F_{Jnj} (G_{Sms} G_{Tvt}-G_{Svs} G_{Tmt}):
\nonumber\\
&=&\frac{1}{2^{1/2}}\; 
\sum_{I,J,M,S,T\le M_0}\;
<e_I e_J e_M e_S e_T, \omega_n>\;
\epsilon_{ijk}\epsilon_{lkr}\epsilon_{lst}
:F_{Mvr} F_{Imi}
 F_{Jnj} (G_{Sms} G_{Tvt}-G_{Tvs} G_{Smt}):
\nonumber\\
&=&\frac{1}{2^{1/2}}\; 
\sum_{I,J,M,S,T\le M_0}\;
<e_I e_J e_M e_S e_T, \omega_n>\;
\epsilon_{ijk}\epsilon_{lkr}\epsilon_{lst}
:F_{Mvr} F_{Imi} F_{Jnj} (G_{Sms} G_{Tvt}+G_{Tvt} G_{Sms}):
\nonumber\\
&=& 2^{1/2}\; 
\sum_{I,J,M,S,T\le M_0}\;
<e_I e_J e_M e_S e_T, \omega_n>\;
(\delta_{ir} \delta_{jl}-\delta_{il}\delta_{jr})
\epsilon_{lst}
:F_{Mvr} F_{Imi} F_{Jnj} G_{Sms} G_{Tvt}:
\nonumber\\
&=& 2^{1/2}\; 
\sum_{I,J,M,S,T\le M_0}\;
<e_I e_J e_M e_S e_T, \omega_n>\;
\epsilon_{lst}
:(F_{Mvr} F_{Imr} F_{Jnl} - F_{Mvr} F_{Iml} F_{Jnr}) 
G_{Sms} G_{Tvt}:
\nonumber\\
&=& 2^{1/2}\; 
\sum_{I,J,M,S,T\le M_0}\;
<e_I e_J e_M e_S e_T, \omega_n>\;
\epsilon_{lik}
:F_{Jnj} F_{Mvj} G_{Tvi}  G_{Smk} F_{Iml} :
\ea
where in the last relation we used that $F_{Mvr} F_{Imr}$ is symmetric 
under exchange of $m,v$ while $\epsilon_{lst} G_{Sms} G_{Tvt}$ is 
antisymmetric when both are contracted with the totally symmetric 
$<e_I e_J e_S e_T e_M, \omega_n>$.

To interpret this expression consider
\be \label{3.214}
\int d^3x\; \omega_a :E^a_j\; E^b_j \;\epsilon_{lik}\; A_b^i\; A_c^k\; E^c_l:
=-\frac{i}{2 2^{1/2}} 
\sum_{I,J,M,S,T}\;
<e_I e_J e_M e_S e_T, \omega_n>\;
\epsilon_{lik}
:F_{Jnj} F_{Mvj} G_{Tvi}  G_{Smk} F_{Iml} :
\ee
hence (\ref{3.213}) is $4i$ times (\ref{3.214}).

Altogether we obtain 
\be \label{3.215}
[C_{0;3}(f),C_{0;4}(g)]=4i
\int\; d^3x\; \omega_a \; 
:(E^a_i E^b_j\delta^{ij}+\Lambda \delta^{ab})
\epsilon_{ijk} A_b^i A_c^j E^c_k:
\ee
e.g. in the pattern with $M_4,M_5,M_6,M_7$ before $M_1,M_2,M_3$ and 
$M_1,M_2$ before $M_3$ and $M_1$ before $M_2$ 
for $\Lambda=0$ and $M_3$ before $M_1,M_2$ 
for $\Lambda\not=0$ a regulator dependent constant.\\
\\
{\bf 6C.}\\
The third curly bracket is given by 
\be \label{3.216}
%1/2 (-1/4)^2 
\frac{1}{32} 
\sum_{\alpha\le M_1,\beta\le M_2,\gamma\le M_3,\delta\le M_4,
\mu\le M_5,\nu\le M_6,\rho\le M_7,\sigma\le M_8}\;
[f_{\alpha\beta\gamma\sigma}\;g_{\mu\nu\rho\sigma}
- f\leftrightarrow g]\;
[\;:F_\alpha F_\beta G_\gamma G_\delta:\;,\;
F_\mu F_\nu G_\rho G_\sigma:\;]
\ee
where the commutator is worked out in (\ref{a.10}).

Since $f_{\alpha\beta\gamma\delta},\;
g_{\mu\nu\rho\sigma}$ 
do not depend on derivatives of $f,g$,
(\ref{3.216}) vanishes for 
any pattern such that after solving the Kronecker symbols we 
obtain a sum that depends on cut-offs which are to be taken 
away sequentially. Consider e.g. the pattern $M_n$ before $M_{n-1}$ with 
$n=2,..,8$. Then we can solve 
all Kronecker symbols for $\mu,\nu\,\rho,\sigma$. In the terms with 
$1,2,3$ Kronecker factors the normal ordered monomial carries 
$6,4,2$ indices that are locked by sandwiching and 
$1,2,3$ of the indices 
$\mu,\nu,\rho,\sigma$ are equated with one of $\alpha,\beta,\gamma,\delta$
so that $1,2,3$ indices are left over for summing. Since these are 
summed sequentially according to the chosen pattern, removal of the 
top cut-off leads to an expression that depends just on $fg$ minus 
the same expression depending on $gf$. Thus for 
the pattern $M_8>M_7>..>M_1$ we find  
\be \label{3.217}
[C_{0;4}(f),C_{0;4}(g)]=0
\ee
~\\
Altogether
\be \label{3.218}
[C_0(f), C_0(g)]=
i\;\int\; d^3x\; [4(f g_{,a}-f_{,a} g)]\; 
:\; (E^a_i E^b_j \delta^{ij}+\Lambda \delta^{ab}) \; F_{bc}^k\; E^c_k: 
\ee
\end{proof}  
~\\
The course of the proof has revealed the following result:
\begin{Corollary} \label{col3.1} ~\\
There exist limiting patterns for the commutator of two
normal ordered 
Hamiltonian constraints such that with $\omega=f\;dg-g\;df$
\be \label{3.220}
[C_0(f),C_0(g)]=4i\; \int\; d^3x\; \omega_a\;
:(\delta^{mn}\; E^a_m\; E^b_n+\Lambda\; \delta^{ab})\;
F_{bc}^j\; E^c_j:
\ee
where $\Lambda$ is a regularised constant depending on the limiting pattern. 

\end{Corollary}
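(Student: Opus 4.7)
The corollary is essentially the consolidation of the three sub-computations 6A, 6B, 6C already executed in the proof of Proposition \ref{prop3.2}. My plan is therefore to reduce it to an additive bookkeeping exercise on top of those results, with the only nontrivial point being to verify that limiting patterns for the three sub-commutators can be chosen simultaneously and compatibly.

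First I would write $C_0=C_{0;3}+C_{0;4}$ and use bilinearity together with the definition of the quadratic form commutator from (\ref{3.60b}), giving
\begin{equation}
[C_0(f),C_0(g)]=\tfrac{1}{2}\{[C_{0;3}(f),C_{0;3}(g)]-(f\leftrightarrow g)\}+\{[C_{0;3}(f),C_{0;4}(g)]-(f\leftrightarrow g)\}+\tfrac{1}{2}\{[C_{0;4}(f),C_{0;4}(g)]-(f\leftrightarrow g)\}.
\end{equation}
This identifies precisely the three curly-bracket groups 6A, 6B, 6C whose quadratic-form limits were computed in the proof.

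Next I would substitute the three established values: from (\ref{3.183}),
\begin{equation}
\tfrac{1}{2}\{[C_{0;3}(f),C_{0;3}(g)]-(f\leftrightarrow g)\}=4i\int d^3x\,\omega_a\,:[\delta^{ij}E^a_iE^b_j+\Lambda\delta^{ab}]\,(2\partial_{[b}A_{c]}^k)\,E^c_k:,
\end{equation}
from (\ref{3.215}),
\begin{equation}
\{[C_{0;3}(f),C_{0;4}(g)]-(f\leftrightarrow g)\}=4i\int d^3x\,\omega_a\,:[\delta^{ij}E^a_iE^b_j+\Lambda\delta^{ab}]\,\epsilon_{kmn}A_b^mA_c^n\,E^c_k:,
\end{equation}
and from (\ref{3.217}) the 6C contribution vanishes. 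Recalling that $F_{bc}^k=2\partial_{[b}A_{c]}^k+\epsilon_{kmn}A_b^mA_c^n$, the sum collapses into the single normal-ordered curvature expression of the corollary.

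The only genuine obstacle is to exhibit one single family of limiting patterns under which all three sub-brackets simultaneously take the values quoted from the proof of Proposition \ref{prop3.2}, for a common regularised constant $\Lambda$. I would adopt the two compatible schemes already singled out in that proof: either the ``safe'' pattern (in 6A: $M_1=M_2=M_4=M_5$ before $M_3=M_6$; in 6B: $M_4..M_7$ before $M_1,M_2$ before $M_3$) which yields $\Lambda=0$, or the ``cosmological'' pattern ($M_3=M_6$ before $M_1=M_2=M_4=M_5$ in 6A and $M_3$ before $M_1,M_2$ in 6B) which yields the regularised $\Lambda_M=[\zeta_M(0)/\pi]^3$. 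Since 6C vanishes in any ordering and the two sub-patterns involve disjoint sets of cut-offs attached to disjoint quadratic forms, no incompatibility can arise, so one merely records the identical $\Lambda$ that appears in 6A and 6B in each of these two global choices. With this verification the corollary follows by direct addition.
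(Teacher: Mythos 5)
Your proposal is correct and matches the paper's own reasoning: the corollary is stated there precisely as a read-off of the sub-computations 6A (\ref{3.183}), 6B (\ref{3.215}) and 6C (\ref{3.217}) in the proof of Proposition \ref{prop3.2}, recombined via the decomposition (\ref{3.60b}) and the identity $F_{bc}^k=2\partial_{[b}A_{c]}^k+\epsilon_{kmn}A_b^mA_c^n$. Your additional remark on choosing the two compatible global limiting patterns (the one giving $\Lambda=0$ and the one giving the regularised $\Lambda\neq 0$) is exactly the pattern bookkeeping the paper records after each sub-bracket, so nothing further is needed.
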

~\\
Remarks:\\
0.\\
The investigation illustrates the fact that different limiting patterns
lead to different results. We did not consider the most general 
pattern but focussed on those that tend to avoid derivatives of 
test functions in potentially divergent contributions.\\
1.\\
Note that we work in Planck units, restoring SI units would reveal 
that the $\Lambda$ term is multiplied by $\ell_P^2$.\\   
2.\\
The constant $\Lambda=\lim_M \Lambda_M,\;
\Lambda_M(x)=\sum_I e_I(x)^2$ is naively infinite, 
but can be given a finite 
value using $\zeta$ function regularisation. At finite $M$, the function 
$\Lambda_M$ is in fact a smooth function of rapid decrease for 
$\sigma=\mathbb{R}^3$ while for $\sigma=T^3$ it is just 
$\propto (M+1)^3$ without $x$ dependence. 
It would be interesting 
to study this in more detail to see whether alternative limiting patterns
exist such that $\Lambda$ comes out finite without additional 
regularisation.\\
3.\\
As expected, the $\Lambda$ term carries explicit dependence
on the chosen background metric $g_{ab}=\delta_{ab}$. It is 
interesting to see that all the other terms carry no background dependence
despite the fact that the chosen Fock representation also depends 
on that background.\\
4.\\
We have performed the calculation in the simplest possible Fock 
representation corresponding to a white noise covariance in order 
to minimise the already substantial computational effort. It is conceivable 
that both proposition and corollary also hold in more general Fock 
reprsentations. This has been confirmed for the spatial diffeomorphism
subalgebra in any dimension and for all tensor types in \cite{29}.\\
5.\\
It would be easy to extend the analysis by a cosmological constant term 
without increasing the computational effort too much and using the 
same tools. This just requires to extend the appendix by five 
more commutators namely $[:F^3:,:F^3:]$ and $[:F^3;,:X:],\;
X=F,\; FG, FFG,FFGG$. The fact that we compute commutators 
of linear combinations of quadratic forms using the 
distributive law allows to reuse the above analysis, one just has to supplement
it with the additional commutators and limiting patterns.\\
6.\\
The ``anomalous'' $\Lambda$ term is in fact again proportional to a normal 
ordered constraint, so one may call this a ``soft anomaly'' in the sense 
that the anomaly just changes the structure functions but not the set 
of constraints. This is similar to what happens in LQG \cite{16} but 
while here we find a ``quantum correction'' to the ``classically expected''
structure function, in LQG the expected part is missing and one has just 
a quantum correction.\\
7.\\
As already mentioned, while the computation is therefore as non-anomalous 
as it can possibly be since only normal ordered expressions are 
well defined quadratic forms on the Fock space, 
the presence of the structure functions
$\omega_a\; E^a_k E^b_k$ for $C_b$ and 
$\omega_a\; E^a_k E^b_k A_b^j$ for $G_j$ 
implies that $l([C_0(f),C_0(g)]\;\psi)\not=0$ even if $l$ solves 
all three types of constraints. Once again, there is no contradiction 
because only the object $[C_0(f),C_0(g)]$ is a well defined quadratic 
form since $C_0(g)\psi,C_0(f)\psi$ respectively is not in the form domain 
of $C_0(f),C_0(g)$ respectively for any Fock state $\psi$
so that there is no reason to expect that 
this quantity vanishes. 
In that sense the commutator calculation merely 
confirms that one has a quantum represention of the hypersurface deformation
algebra and thus has chosen a valid quantisation of the constraints.\\
8.\\  
As a byproduct we have also confirmed that the constraint algebra 
of Euclidian $U(1)^3$ gravity (e.g. \cite{32} and references therein)
is in this sense properly represented in the same Fock representation
because
going through the details of the proof one sees that one just has to 
drop the $FG$ term from the Gauss constraint and the $FFGG$ from the 
Hamiltonian constraint and thus can drop the commutators corresponding 
to 1B., 1C., 3B., 4B., 4C., 4D., 5B, 6B., 6C. Both the proposition and the 
corollary remain literally intact.\\
9.\\
Inspection of the individual parts of the computation also reveals that 
the respective quadratic form monomials have the expected commutators among 
themselves: For example the $FG$ part of the Gauss constraint has 
vanishing commutator seperately with the $FFGG$ part of 
the Hamiltonian constraint 
which is the quantum statement of the fact that this term of the 
Gauss constraint generates SU(2) rotations and this part of the
Hamiltonian constraint is invariant under those. In general it is 
reassuring to see that all parts of the Poisson bracket calculation are 
refelected in the quantum commutator calculation, most importantly 
that ultra-local terms drop out and that one can use the 
crucial commutativity 
of $F,G$, which in the classical calculation is granted, also 
in the quantum computation under the normal ordering symbol.\\
10.\\
In LQG one avoids the complicated ``Lorentzian correction'' to 
the Euclidian Hamiltonian constraint $C_0(f)$, which depends on the 
Ricci scalar of spin connection $\Gamma$ of $E$, 
by using either a ``Wick rotation'' 
\cite{16a} with generator $\int\; d^3x \; \Gamma_a^j \; E^a_j$ or 
by using the commutator equivalent of the Poisson bracket\\
 $\{\{V,C_0(1)\},A_a^j\}\;\{\{V,C_0(1)\},A_b^k\}\{V,A_c^k\}
\epsilon^{abc}\epsilon_{jkl}$ where $V=\int d^3x \sqrt{|\det(E)|}$
is the total volume. Both $\Gamma,V$ depend non-polynomially 
on $E$ so that this approach leaves the framework of the present section
which relies on polynomial expressions. Thus within the present section 
one would need to deal with the polynomial version of the constraint 
mentioned at the beginning of this section.\\
\\
The task to deal with non-polynomial expressions naturally leads us to 
the next section.

\section{Geometric measures, master constraint 
and reduced Hamiltonian as quadratic forms 
on Fock space}
\label{s4}

As already mentioned, the reduced phase space approach unavoidably leads 
to rather non-polynomial expressions typically involving integrals 
over square roots of density weight two scalars which themselves depend 
on polynomial expressions of all the fields except for the spatial 
metric $q_{ab}$ which enters also with integer powers of $\sqrt{\det(q)}$.
This happens because wehn solving the constraints explicitly 
one is forced to 
solve them for canonical momenta because these are the only variables 
which enter 
the constraints polynomially and (at least sufficient subset of them)
without spatial derivatives so that an explicit, algebraic
and local solution is possible. The gauge fixing conditions are then 
imposed on the conjugate configuration variables. Then the reduced 
Hamiltonian is a linear combination of those momenta that one solved 
the constraints for which therefore is a density of weight one. 
Thus the power of $\sqrt{\det(q)}$ that enters various terms of the 
reduced Hamiltonian under the square root makes sure that the overall
density weight is always two.\\
\\
It is therefore of considerable interest to explore whether 
$[\sqrt{\det(q(x))}]^N$ can be defined as a densely defined quadratic form 
on the Fock space for any interger $N\in \mathbb{Z}$. If it can, then
it will be a normal ordered expression of the gravitational annihilation 
and creation operators. Then more complicated expressions that involve 
also polynomials of the other fields can also be defined as quadratic forms 
again by normal ordering. Even more, we are not only interested in 
the reduced Hamiltonian but also other quadratic forms that involve 
$\sqrt{\det(q)}$ such as length, area and volume functions in three spatial 
dimensions and the various forms of the non-polynomial 
but spatially diffeomorphism invariant master constraint. 

In the next subsection we consider these geoemtric measures in the 
$q,p$ formulation (ADM variables)
in any spatial dimension $D$ where we obtain the most systematic 
results. Then we consider the geometric measures in the $e,P$ formulation 
also in any $D$ and the $A,E$ formulation in $D=3$ respectively which 
require a case by case analysis. Finally we construct the reduced 
Hamiltonian as a quadratic form using these tools.

\subsection{Metric densities as quadratic forms in $(q,p)$ variables}
\label{s4.1}

Let $S$ be a submanifold of $\sigma$ of dimension $1\le d\le D$ and 
$Y:\;U\subset \mathbb{R}^d\;\to S\subset\sigma;\; y\mapsto x=Y(y)$ 
be the corresponding embedding where $y,x$ are coordinates on $S,\sigma$ 
respectively. For $d=D$ we can pick $Y=$id and $y=x$. 
The volume of $S$ is then 
given by 
\be \label{4.1}
{\sf Vol}[S]=\int_U\; d^dy\; \sqrt{\det(([Y^\ast q](y))}
\ee
On the other hand for the reduced 
dynamics or the master 
constraint we are interested in $[\sqrt{\det(q(x))}]^N,\; N\in\mathbb{Z}$
which formally also concerns the case $Y=$id. Thus we capture all 
cases of interest if we consider 
\be \label{4.2}
Q^Y_N(y):=\sqrt{\det([Y^\ast q](y))}]^N 
\ee
We pick a smooth background metric $g$ on $\sigma$ with Euclidian signature. 
We consider a Fock representation using the following (background scalar) 
annihilation 
operator 
\be \label{4.3}
A_{jk}:=2^{-1/2}[\kappa\cdot q_{jk}-i\;\kappa^{-1}\cdot (\omega^{-1}\cdot 
\; p_{jk})]
\ee
where $h_a^j$ is an adapted background $D-$Bein 
$g_{ab}=\delta_{jk} h^j_a h^k_b$ with inverse $h^a_j$ and 
we defined the background scalars $q_{jk}=h^a_j h^b_k q_{ab},\;
p^{jk}=h_a^j h_b^k p^{ab}$ of background density weight zero and 
one respectively which enjoy canonical brackets 
\be \label{4.4a}
\{p^{jk}(x),q_{mn}(y)\}=\delta^j_{(m} \delta^k_{n)} \delta(x,y)
\ee
The D-Bein indices $j,k,l..$ are moved with the Kronecker symbol.
Here $\kappa$ is an invertible operator function of the 
Laplacian $\Delta=g^{ab}\nabla^g_a \nabla^g_b$ 
mapping scalars to scalars 
and $\omega=\sqrt{\det(g)}$ is the canonical background scalar 
density. 
The one particle Hilbert space is $\mathfrak{h}=L_2(\sigma,\omega\; d^Dx)$. 
The operator $-\Delta$ is positive symmetric on $\mathfrak{h}$ and has 
self-adjoint extensions with respect to which $\kappa$ is defined 
using the spectral theorem. 
The operator $\kappa$ has a symmetric bi-scalar integral kernel 
$\kappa(x,y)=\kappa(y,x)$ which acts on scalars $f$ via 
\be \label{4.4b}
[\kappa\cdot f](x):=\int\; d^Dy\;\omega(y)\;\kappa(x,y)\; f(y)
\ee
Denoting the kernel of its inverse by $\kappa^{-1}(x,y)$ defined 
analogous to (\ref{4.4b}) we find 
\be \label{4.4c}
\int\; d^Dz\; \omega(z)\;\kappa(x,z)\;\kappa^{-1}(z,y)=\delta(x,y)\;
\omega(y)^{-1},\;
\int\; d^Dy\; \delta(x,y)\; f(y)=f(x)
\ee
The annihilation operator smeared with scalar background test functions 
is (summation over repeated indices is understood)
\be \label{4.4d}
<f,A>:=<f^{jk},A_{jk}>_{\mathfrak{h}}
=2^{-1/2}\; [
<\kappa\cdot f^{jk},q_{jk}>_{\mathfrak{h}}
-<\omega^{-1}\;\kappa^{-1}\cdot f^{jk},p_{jk}>_{\mathfrak{h}}]
\ee
Note the canonical commutation relations 
\be \label{4.6a}
[<f,A>,\; <f',A>]=0,\;  
[<f,A>,\; (<h,A>)^\ast]=<f,h>\cdot 1,\;
<f,h>:=\int\; d^Dx\; \omega(x)\; \overline{f^{jk}(x)}\;
f'_{jk}(x)
\ee  
A coherent state on the corresponding Fock space is defined by 
$D(D+1)/2$ complex valued scalar fields $Z_{jk}=Z_{kj}$ such that 
\be \label{4.4}
||Z||^2:=<Z,Z>,\;
<Z,Z'>:=
\int\; d^Dx\; \omega\; 
\overline{Z_{jk}}\;\delta^{jm}\delta^{kn}\; Z'_{kn}<\infty
\ee
and 
\be \label{4.5}
\Omega_Z:=e^{-\frac{||Z||^2}{2}}\; e^{[<Z,A>]^\ast}\; \Omega
\ee
where $\Omega$ is the Fock vacuum defined by (\ref{4.3}). 
From 
$Z$ one reconstructs classical $q,p$ by 
\be \label{4.6}
q_{ab}=2^{-1/2}\; h_a^j\; h^k_b\; 
\kappa^{-1}\cdot [Z_{jk}+\overline{Z_{jk}}],\;
p^{ab}=2^{-1/2}\;i\; h^{aj}\; h^{bk}\;
\omega\; \kappa\cdot [Z_{jk}-\overline{Z_{jk}}],\;
\ee
Note that the classical metric  $q$ from (\ref{4.6}) on which the 
coherent state is concentrated is completely unrelated 
to the fixed background metric $g_{ab}$ that is used to define the 
Fock representation. 

Given any $\Omega_Z$ the span of its excitations  
\be \label{4.7}
[<f_1,A>]^\ast\;..[<f_N,A>]^\ast\; \Omega_Z
=e^{<Z,A>^\ast}\; [<f_1,A>]^\ast\;..[<f_N,A>]^\ast\; \Omega
\ee
defines a dense subspace of the Fock space as $f_1,..,f_N$ varies. This 
follows from the fact that the algebraic Fock state 
$<\Omega,.\Omega>$ is pure
\cite{21}, therefore its GNS representation of 
the Weyl algebra or quivalently the annihilation and creation algebra 
is irreducible and thus every vector state such as $\Omega_Z$ is 
cyclic.
\begin{Proposition} \label{prop4.1}
In any Fock representation,
(\ref{4.2}) can be defined non-perturbatively 
as a quadratic form on a dense form domain which correponds to the 
excitations of any coherent state of that Fock representation 
which is concentrated on
a positive definite spatial metric.
\end{Proposition}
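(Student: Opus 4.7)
The plan is to reduce the non-polynomial object $Q^Y_N(y)$ to the polynomial constructions of section \ref{s3} via an integral representation combined with a coherent-state expansion around $q_Z$. First, decompose the field operator as $q_{ab}(x) = q_{Z,ab}(x) + \hat q_{ab}(x)$, where $q_{Z,ab}$ is the classical field on which $\Omega_Z$ is peaked (given by (\ref{4.6})) and $\hat q_{ab}(x)$ is a linear combination of the smeared creation and annihilation operators via (\ref{4.3}). Since $\Omega_Z$ is assumed concentrated on a positive-definite metric, $d_Z(y) := \det([Y^\ast q_Z](y)) > 0$ pointwise on $U$, so one may factor
\begin{equation*}
\det([Y^\ast q](y)) = d_Z(y)\,\bigl(1 + X(y)\bigr),
\end{equation*}
where $X(y)$ is a polynomial of degree $D$ in the fluctuation fields $\hat q_{ab}$, pulled back by $Y$, with smooth coefficients built from $q_Z$. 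For $N \in 2\mathbb{N}_0$ the result is immediate: $Q^Y_N(y)^{1/D}$ is already a polynomial of degree $D\cdot N/2$ in the field operators, and the construction of section \ref{s3} delivers a densely defined quadratic form after normal ordering.

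For the remaining cases the plan is to employ integral representations that reduce the problem to the heat-kernel type expression $e^{-s\det([Y^\ast q](y))}$. Concretely, for negative $N$ with $\alpha := -N/2 > 0$ use
\begin{equation*}
z^{-\alpha} = \frac{1}{\Gamma(\alpha)}\int_0^\infty ds\;s^{\alpha - 1}\;e^{-sz}, \qquad z > 0,
\end{equation*}
while for $N = 2k+1$ with $k \in \mathbb{Z}$ combine (\ref{2.32}) with an integer power $z^k$ that is polynomial (or already handled by the previous case). The task therefore reduces to defining, for each fixed $s > 0$, the normal-ordered exponential $:e^{-s\det([Y^\ast q](y))}:$ as a quadratic form on the dense subspace $\mathcal{D}_Z$ spanned by the excitations $\prod_k \langle f_k,A\rangle^\ast \Omega_Z$, which is dense by cyclicity of $\Omega_Z$ in the irreducible Fock representation. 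Formally,
\begin{equation*}
:e^{-s\det([Y^\ast q](y))}: \;=\; e^{-s d_Z(y)}\;\sum_{n=0}^\infty \frac{(-s\,d_Z(y))^n}{n!}\;:X(y)^n:,
\end{equation*}
and each $:X(y)^n:$ is a normal-ordered polynomial of degree $nD$ in smeared creation and annihilation operators, which by the methods of section \ref{s3} is a densely defined quadratic form on $\mathcal{D}_Z$.

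The main obstacle, and the place where positivity of $d_Z$ is essential, will be proving convergence of this series in $n$ when sandwiched between arbitrary $\psi,\psi' \in \mathcal{D}_Z$. The strategy is to exploit the coherent-state eigenvalue property $A\Omega_Z = Z\,\Omega_Z$: when reordering $:X(y)^n:$ against excitation vectors, every annihilator contracted onto $\Omega_Z$ is replaced by its eigenvalue $Z$, so that $\langle \psi',:X(y)^n:\psi\rangle$ decomposes into a purely classical piece (in which $\hat q$ is replaced by zero, giving $X_Z = 0$) plus a finite sum of terms involving the at most finitely many excitation mode functions $f_k$ appearing in $\psi,\psi'$. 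The combinatorics of these contractions produces factors bounded by a polynomial in $n$ of degree controlled by the excitation numbers, whereas the prefactor $e^{-s d_Z(y)}(s d_Z(y))^n/n!$ and the explicit damping $s^{\alpha - 1} e^{-s d_Z(y)}$ in the $s$-integral furnish the required exponential and factorial suppression. Interchanging sum and integral via dominated convergence and re-summing against the kernel $s^{\alpha - 1}/\Gamma(\alpha)$ then yields $\langle \psi',Q^Y_N(y)\psi\rangle$ in closed form; the leading term is $d_Z(y)^{N/2}\langle\psi',\psi\rangle$, confirming the coherent-state peakedness. Finally, smearing $y$ against a test function on $U$ produces a genuine sesquilinear form on $\mathcal{D}_Z\times\mathcal{D}_Z$ and establishes the claim.
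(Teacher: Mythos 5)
Your construction is correct, but it is a genuinely different route from the paper's. The paper never exponentiates the determinant itself: it represents $Q^Y_{-1}$ by the Gaussian identity (\ref{4.8}) (and $Q^Y_{-N}$ by (\ref{4.11}) or the $|\det(z_1,..,z_d)|^{N-d}$ variant (\ref{4.12})), so that the exponent is \emph{linear} in $A,A^\ast$; normal ordering then factorises exactly into $[E_Y]^\ast[E_Y]$ with $E_Y=e^{-\langle f_Y,A\rangle}$, and the coherent-state eigenvalue identities (\ref{4.13}), (\ref{4.14}) reduce every matrix element in closed form to a classical Gaussian integral with covariance $[Y^\ast q(Z)]^{-1}$ — convergent precisely because $q(Z)$ is positive definite — with no infinite series to control and with manifest positivity of $Q^Y_{-1}$. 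You instead use the one-dimensional Laplace representation $z^{-\alpha}=\Gamma(\alpha)^{-1}\int_0^\infty ds\,s^{\alpha-1}e^{-sz}$ with $z=\det(Y^\ast q)$, a degree-$d$ polynomial in the field, and control $:e^{-s\det}:$ by a fluctuation expansion around $q_Z$; this buys you arbitrary real negative powers with a single auxiliary integral and no determinant-weight trick, with positivity of $d_Z$ entering through the $s$-integral instead of the $z$-integral. The one step you leave as a sketch — "polynomial growth in $n$" of $\langle\psi',:X^n:\psi\rangle$ — actually holds in a much stronger form and you should state it that way: since $X$ has vanishing coherent-state value and normal ordering is insensitive to the c-number shift $A\mapsto A-\langle\cdot,Z\rangle$, every monomial in $:X^n:$ has degree at least $n$ in the shifted creators/annihilators that annihilate $\Omega_Z$, so the matrix element vanishes as soon as $n$ exceeds the total excitation number of $\psi,\psi'$; the series is a finite sum, the interchange with the $s$-integral is trivial, and only $\int_0^\infty ds\,s^{\alpha+n-1}e^{-sd_Z}$ requires $d_Z>0$, which is exactly where the non-degeneracy hypothesis enters (and why the Fock vacuum is excluded, as in the paper). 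Two small slips to fix: the degree of $X$ and of the even-$N$ polynomial is governed by $d=\dim S$, not $D$ (and the stray exponent "$^{1/D}$" should go), and for odd $N$ the product of $\det^{k}$ with the $s$-integrand must be normal ordered as a single expression (as the paper does in $:Q^Y_{N+1}Q^Y_{-1}:$), not taken as a product of two quadratic forms.
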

\begin{proof}:\\
If $N$ is a positive even integer then $Q^Y_N$ is a polynomial in $q_{ab}$, 
therefore a well defined quadratic form is obtained by normal ordering 
it. Thus we can restrict attention to the case that $N$ is a positive 
odd integer or a negative integer. If $N$ is a positive odd integer
we define $Q^Y_N:=:Q^Y_{N+1} \; Q^Y_{-1}:$ so we are left with the case 
of negative integers. We can define $Q^Y_{-N},\; N>0$ in various 
ways, for example by $:\;[Q^Y_{-1}]^N\;:$ or by independent expressions
which we explain below. To define $Q^Y_{-1}(y)$ as a quadratic form 
we note the identity
\be \label{4.8} 
Q^Y_{-1}(y)=
\int_{\mathbb{R}^d}\; \frac{d^dz}{[2\pi]^{d/2}}\;
e^{-\frac{1}{2}\;\sum_{I,J=1}^d\; z^I\; [Y^\ast q]_{IJ}(y)\; z^J}
\ee 
which holds in the classical theory for any positive definite metric 
$q_{ab}$. We now use (\ref{4.6}) to express the Gaussian 
exponent in terms of 
$A_{jk}, A_{jk}^\ast$, specifically
\ba \label{4.9}
&& \frac{1}{2}\;z^T\; [Y^\ast q](y)\;z=
<f_Y(y,z),A>+<f_Y(y,z),A>^\ast,\;\;
f^{jk}_Y(x;y,z)=\hat{f}^{jk}_Y(y,z)\; \kappa^{-1}(Y(y),x),\;
\nonumber\\
&& \hat{f}^{jk}_Y(y,z)=
2^{-3/2}\; z^I\; z^J\; Y^a_{,I}(y)\;Y^b_{,J}(y)
h_a^j(Y(y))\; h_b^k(Y(y))\;
\ea
and normal order the exponential 
to arrive at the quadratic form equivalent of 
(\ref{4.8})
\be \label{4.10}
Q^Y_{-1}(y):=\int_{\mathbb{R}^d}\; \frac{d^dz}{[2\pi]^{d/2}}\;
[E_Y(z,y)]^\ast\;[E_Y(z,y)],\;\;
E_Y(y,z)=e^{-<f_Y(y,z),A>}
\ee 
One possibility to define $Q^Y_{-N}(y)$ is then to take the $N-th$ power of 
(\ref{4.10}) and to normal order
\be \label{4.11}
Q^Y_{-N}(y):=
\int_{\mathbb{R}^{ND}}\; \frac{d^{dN}z}{[2\pi]^{dN/2}}
(\prod_{k=1}^N\; [E_Y(z_k,y)]^\ast)\;
(\prod_{k=1}^N\; [E_Y(z_k,y)])\;
\ee
Another possibility is to make use of the classical idenity, for 
$N\ge d$ 
\be \label{4.12}
Q^Y_{-N}(y)=\frac{1}{c_d}\;
\int_{\mathbb{R}^{d^2}}\; d^{d^2}z\;
|\det(z_1,..,z_d)|^{N-d}\;
e^{-\frac{1}{2}\;\sum_{k=1}^d\;
\sum_{I,J=1}^d\; z_k^I\; [Y^\ast q]_{IJ}(y)\; z_k^J}
\ee
where $c_d$ is the value of the integral in (\ref{4.12}) obtained for 
$[Y^\ast q]_{IJ}=\delta_{IJ}$ and $(z_1,..,z_d)$ is the d x d matrix
whose $k-th$ column is given by the vector $z_k$. We can now 
define $Q^Y_{-N}$ by normal ordering the exponentials in (\ref{4.12})
after decomposing them via (\ref{4.9}). The advantage of (\ref{4.12})
over (\ref{4.11}) is that for large $N$ there are fewer integrals 
to compute. 

Yet another option is to take combinations of both possibilities 
or even to multiply by $1=Q^Y_{2M}(y)\; Q^Y_{-2M}$ until the desired overall 
negative power 
of $Q^Y_1(y)$ is reached, followed by normal ordering. 

It remains to show that these quadratic forms indeed have (\ref{4.7})
as dense form domain. For this we just need to use the identity
\ba \label{4.13}
&& E_Y(z,y)\; 
<f_1,A>^\ast\;..<f_N,A>^\ast\;\Omega_Z
\nonumber\\
&=&[<f_1,A>^\ast+<f_Y(z,y),f_1>]\;..[<f_N,A>^\ast\;+<f_Y(z,y),f_N>]\;
E_Y(z,y)\; \Omega_Z
\nonumber\\
&& <f_Y(y,z),f>=\hat{f}_Y^{jk}(y,z)\; [\kappa^{-1}\cdot f_{jk}](Y(y))
\nonumber\\
&& E_Y(z,y)\; \Omega_Z
= e^{-<f_Y(z,y),Z>}\; \Omega_Z
\ea
where we made use of the fact that coherent states diagonalise the 
annihilation operator.  

Since
\be \label{4.14}
<f_Y(y,z),Z>+<f_Y(y,z),Z>^\ast
=\hat{f}^{jk}(y,z)\;\kappa^{-1}\cdot[Z_{jk}+\overline{Z_{jk}}](Y(y))
=2^{1/2}\;\hat{f}^{jk}(y,z)\;q_{jk}(Y(y))
=\frac{1}{2} z^I z^J [Y^\ast q]_{IJ}(y)
\ee
it follows that the matrix elements of (\ref{4.10}) and its 
descendants are Gaussian integrals in $z^I\; z^J$ with Gaussian factor 
$\exp(-z^T \; [Y^\ast q](y) z/2)$, $q$ the classical metric determined 
by $Z$ via (\ref{4.6}), and coefficients 
$Y^\ast[(g\otimes g)[\kappa^{-1} f_k])_{IJ}(y),\; k=1,..,N$. The result of the 
Gaussian integral is a polynomial in the inverse $m^{IJ}$ of the pull back 
metric 
$m_{IJ}=[Y^\ast q]_{IJ}$ times $\sqrt{\det(m)}^{-1}$. To see this 
we introduce the new integration variables $z^A=m^A_I z^I$ where $m^A_I$
is a $d-$Bein $m_{IJ}=m_I^A h_J^B \delta_{AB}$, express 
$z^I=h^I_A z^B$ where $h_I^A h^I_B=\delta^A_B$ and finally use that
Gaussian integrals of $z^{A_1} ... z^{A_{2N}}$ are Wick polynomials 
in the Kronecker symbols $\delta^{A_k A_l},\; k,l=1,..,2N$. \\
\end{proof}
It is interesting to note that the dense subspace consisting of the 
excitations of the Fock vacuum {\it does not} provide a form domain 
for (\ref{4.2}) because the Fock vacuum is a coherent state 
concentrated on a classical metric of signature $(0,..,0)$. The same 
is true for any other signature, $s=(\sigma_1,..,\sigma_D),\; 
\sigma_k\in \{0,\pm 1\},\; k=1,..,D$ different from $(1,..,1)$ because 
the integral of $e^{-\sigma z^2/2}$ diverges unless $\sigma=1$. This 
is the non-degeneracy footprint that the classical theory leaves in 
the quantum theory. 
\begin{Corollary} \label{col4.1} ~\\
The volume functionals (\ref{4.1}) are 
quadratic forms with the same dense form domain as in proposition 
\ref{prop4.1}. In particular, their expectation value with respect 
to coherent states $\Omega_Z$ in the form domain equals the value 
of the classical functional evaluated on the metric determined by $Z$.
\end{Corollary}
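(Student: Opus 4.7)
The plan is to view ${\sf Vol}[S]$ as the $y$-integral over $U$ of the pointwise quadratic form $Q^Y_1(y)$ already constructed in Proposition \ref{prop4.1}, and then to verify that the coherent state expectation value reproduces the classical volume density. The form domain is automatically the same as that of $Q^Y_1(y)$, namely the span of excitations of a coherent state $\Omega_Z$ whose underlying classical metric (\ref{4.6}) is positive definite, since integrating a densely defined quadratic form against a function of $y$ preserves the form domain provided the matrix elements are integrable in $y$.

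Concretely, from the proof of Proposition \ref{prop4.1} one writes $Q^Y_1(y) = :Q^Y_2(y)\,Q^Y_{-1}(y):$ with $Q^Y_2=\det([Y^\ast q])$ a polynomial and $Q^Y_{-1}$ given by the Gaussian representation (\ref{4.10}), and sets
\be
<\psi,\,{\sf Vol}[S]\,\psi'> := \int_U d^d y\;<\psi,\,Q^Y_1(y)\,\psi'>.
\ee
The integrand is, by the analysis in that proof, a finite Wick polynomial in $m_{IJ}(y):=[Y^\ast q]_{IJ}(y)$ and $m^{IJ}(y)$ divided by $\sqrt{\det m(y)}$, with smooth coefficients built from $Y^\ast[(g\otimes g)\kappa^{-1} f_k](y)$; positive definiteness of $m(y)$ (guaranteed by the choice of $\Omega_Z$) makes it smooth, hence integrable on compact $S$. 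For the expectation value itself, I would apply (\ref{4.13}) and (\ref{4.14}) pointwise to obtain
\be
<\Omega_Z,\,[E_Y(z,y)]^\ast\,E_Y(z,y)\,\Omega_Z>=\exp\Bigl(-\tfrac{1}{2}\,z^I z^J\,[Y^\ast q]_{IJ}(y)\Bigr),
\ee
so that the $z$-integral in (\ref{4.10}) collapses by the classical identity (\ref{4.8}) to $[\det([Y^\ast q](y))]^{-1/2}$. Combined with the standard fact that coherent states evaluate normal-ordered monomials by the replacement $A\mapsto Z,\ A^\ast\mapsto\overline{Z}$, the expectation of $:Q^Y_2(y)\,Q^Y_{-1}(y):$ factorises as $\det([Y^\ast q](y))\cdot[\det([Y^\ast q](y))]^{-1/2}=\sqrt{\det([Y^\ast q](y))}$, and integration over $U$ recovers the classical ${\sf Vol}[S]$ evaluated on the metric (\ref{4.6}) determined by $Z$.

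The main obstacle I anticipate is justifying the interchange of the integrals over $U$ and over the auxiliary Gaussian variable $z$ (hidden in the definition of $Q^Y_{-1}$) with the Fock-space matrix elements. For excitations of $\Omega_Z$ and compact $S$, a Fubini argument based on the uniform Gaussian bound $e^{-\frac{1}{2}z^T [Y^\ast q](y)z}$ together with the strict positive definiteness of $[Y^\ast q](y)$ on $U$ (a condition on $Z$) is sufficient; for non-compact $S$ one would need additional decay assumptions on $Z$ or to restrict the test functions $f_k$ used to generate the form domain to have spatial support contained in a neighbourhood of $S$, so that the remaining $y$-integral is effectively localised.
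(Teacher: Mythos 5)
Your proposal is correct and follows essentially the same route as the paper: define ${\sf Vol}[S]$ as the $y$-integral of the normal ordered $:Q^Y_2\,Q^Y_{-1}:$ with the $z$-integral performed first, let the exponentials act on the coherent-state excitations via (\ref{4.13})--(\ref{4.14}), and observe that the coherent-state expectation of the normal ordered determinant together with the Gaussian $z$-integral produces $\det([Y^\ast q](y))\cdot\det([Y^\ast q](y))^{-1/2}=\sqrt{\det([Y^\ast q](y))}$ at $q=q(Z)$, hence the classical volume after the $y$-integration. Your additional Fubini/integrability discussion is a reasonable refinement of a point the paper handles simply by prescribing that the $z$-integral be carried out before the $y$-integral.
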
 
\begin{proof}:\\
We define the quadratic form by 
\ba \label{4.15}
&&{\sf Vol}[S]
:= \int_U\; d^dy \; :\;Q^Y_2(y)\;Q^Y_{-1}(y):
=\int\; d^dy\; \;\int\; \frac{d^d z}{[2\pi]^{d/2}}\;
E_Y(z,y)^\ast :\det([Y^\ast q](y): \; E_Y(y,z) 
\nonumber\\
&&\det([Y^\ast q](y)\;=\;
\frac{1}{d!}
\epsilon^{I_1..I_d}\;\epsilon^{J_1..J_d}
[Y^\ast q]_{I_1 J_1}\;..\; [Y^\ast q]_{I_d J_d}
\nonumber\\ &&
[Y^\ast q]_{IJ}(y)=
<F_{Y,IJ}(y),A>+<F_{Y,IJ}(y),A>^\ast,\;
F_{Y,IJ}^{ij}(x;y)=\hat{F}_{Y,IJ}^{ij}(y)\; 
\kappa^{-1}(Y(y),x),\;
\nonumber\\
&& \hat{F}_{Y,IJ}^{ij}(y)=h_a^i(Y(y))\; h_b^j(Y(y))\; 
Y^a_{,I}(y)\;Y^b_{,J}(y)\;   
\ea
In computing matrix elements of (\ref{4.15}) with respect to the 
states (\ref{4.7}) the prescription of (\ref{4.15}) is to perform the 
$z$ integral before the $y$ integral. As all annihilators and creators
act on the ket and bra respectively the action of $E_Y(y,z)$ can 
be taken over from proposition \ref{prop4.1} while the action of 
the normal ordered determinant returns a polynomial in the 
$<F_{Y,IJ}(y),f_k>,\; <F_{Y,IJ},Z>$. The $z$ integral can be performed
explicily and returns and overall factor of $Q^Y_{-1}(y)$ for $q=q(Z)$ the 
classical metric determined by $Z$ via (\ref{4.6}). 
For the state $\Omega_Z$ itself
(no excitation) the expectation value followed by the $z$ integral 
just returns $Q^Y_1(y)$ for $q=q(Z)$ and the $y$ integral therefore 
${\sf Vol}[S]$ at $q=q[Z]$. 
\end{proof}

\subsection{Metric densities as quadratic forms in other variables}
\label{s4.2}
 
Our results will be confined to defining integer powers of 
$Q=\sqrt{\det(q)}$ in the $(e,p)$ formulation in any $D$ and the 
$(A,E)$ formulation in $D=3$. The reason for why it is 
more complicated to define ${\sf Vol}[S]$ in these variables will become 
clear shortly.

We have the relations   
\be \label{4.21}
Q^2=\det(q)=\det(e)^2=|\det(E)|
\ee
therefore 
\be \label{4.22}
Q^{-1}=c\;\int_{\mathbb{R}^D}\; d^Dz\; F(e\cdot z)
=[\det(E)]^4\;c_3 \int_{\mathbb{R}^9}\; d^9z\; 
|\det((z_1,z_2,z_3))|^{1/2}\; F_3(E\cdot z_1,E\cdot 2, E\cdot z_3),\;
\ee
where $[e\cdot z]^j=e_a^j z^a,\; [E\cdot z_I]^a=E^a_j z^j_I$,  
$F,F_3$ are functions on $\mathbb{R}^D, \mathbb{R}^9$ respectively such 
that the integrals evaluated at $e_a^j=\delta_a^j$ and $E^a_j=\delta_a^j$
equal the constants $c^{-1}, c_3^{-1}$ respectively. We now 
write $F, F_3$ in terms of their Fourier transforms
\be \label{4.23}
F(e\cdot z)=\int\; \frac{d^Dk}{[2\pi]^D}\; \hat{F}(k)\;
e^{i\;k_j e^j_a z_a},\;
F_3(\{E\cdot z\})=\int\; \frac{d^9k}{[2\pi]^9}\; \hat{F}_3(k)
e^{i\;k_a^I E^a_j z^j_I},\;
\ee
Then we pick, similar as for $(q,P)$ variables
a Fock quantisation of $(e,p)$ and $(A,E)$ respectively 
using the background metric $g$ based on the annihlators
\ba \label{4.24}
A_{jk} &=& 2^{-1/2}[\kappa\cdot h^a_j e_{ak}-i\kappa^{-1}\cdot \omega^{-1} 
h_{aj} p^a_k]  
\nonumber\\
A_{jk} &=& 2^{-1/2}[\kappa\cdot h^a_j A_{ak}-i\kappa^{-1}\cdot \omega^{-1} 
h_{aj} E^a_k]  
\ea
(for $j\le a$ in upper triangular gauge when the gravitational Gauss 
constraint is gauge fixed) which enjoy canonical commutation 
relations. Finally one normal orders the exponentials (Weyl elements) in
(\ref{4.23}). 

To see why it is more dfiicult to define ${\sf Vol}[S]$ in these variables 
unless $S$ is $D$ dimensional and 3-dimensional respectively 
(in which case we just use $Q=Q^2 Q^{-1}=\det(e)^2 Q^{-1}$ for 
$(e,p)$ and $Q=Q^4 Q^{-3}=\det(E)^2 Q^{-3}$ for $(A,E)$, 
use (\ref{4.22}) and normal 
order) we note that for other dimensions e.g. 
\be \label{4.25}
[Y^\ast q]_{IJ}=
\delta_{ij} 
[Y^\ast e]^i_I [Y^\ast e]^j_J
\ee
cannot be written as the determinant of a submatrix of $e_a^j$ and thus 
it cannot be the result of a $z-$integral which returns such a determinant as 
a result of the Jacobean when changing integration variables. Thus we must 
use different techniques. For instance in $D=3$ we can use for $d=1$ and 
$d=2$ respectively
\ba \label{4.26}
&& \sqrt{\delta_{ij}[Y^\ast e]^i [Y^\ast e]^j}\;^{-1}=
\int_{\mathbb{R}^3}\; \frac{d^3z}{2\pi^2\; ||z||^2}\; e^{i [Y^\ast e]^j z_j}
\nonumber\\
&& \sqrt{\delta^{ij}[Y^\ast E]_i [Y^\ast E]_j}\;^{-1}=
\int_{\mathbb{R}^3}\; \frac{d^3z}{2\pi^2\; ||z||^2}\; e^{i [Y^\ast E]_j z^j}
\ea
with $[Y^\ast E]_j=\frac{1}{2}\epsilon^{IJ}\epsilon_{abc} Y^a_{,I} 
Y^b_{,J} E^c_j$ which is the integral that expresses the Green function 
of the Laplace operator as the Fourier of $||z||^{-2}$ up to a factor. 
Then we can use the relation $E^a_j=Q e^a_j,\; e^a_j e_a^k=\delta_j^k$ 
to relate $E, e$ and then use (\ref{4.22}), (\ref{4.26}). Then 
the length of a curve or the area of a surface can be obtained 
by multiplying (\ref{4.26}) with the polynomials
$||Y^\ast e||^2, ||Y^\ast E||^2$ respectively, normal ordering and integrating 
over the domain of the embedding $Y$. Other cases
can be treated similarly.

\subsection{Master constraint as a quadratic form on Fock space}
\label{s4.3}

The polynomioal version of the master constraint is simply obtained 
by normal ordering of its integrand. The integrand of the non-polynomial 
version can be written as a polynomial times $Q^{-N}$ for sufficiently 
large $N$. We apply proposition \ref{prop4.1} and normal order 
the resulting expression.

\subsection{Reduced Hamiltonian as a quadratic form on Fock space}
\label{s4.4}

We now apply this theory to the reduced Hamiltonian which as we have seen 
in section \ref{s2} typically takes the form 
\be \label{4.16}
H=-\int\; d^Dx\; \sqrt{-2\;Q\; \tilde{C}}
\ee
where $C=\frac{\pi_0^2}{2Q}+\tilde{C},\; Q=\sqrt{\det(q)}$ 
is the full Hamiltonian constraint and $\tilde{C}$ is 
the contribution to $C$ from geometry and matter fields other than 
$\phi^0,\pi_0$ with gauge fixing $\phi^0=t$. We pick a dense 
form domain based on a coherent state $\Omega_Z$ as above and define 
the quantity
\be \label{4.17}
\Lambda_0:=-<\Omega_Z, \tilde{c}\; \Omega_Z>,\; 
\tilde{c}:=\frac{\tilde{C}}{Q}
\ee
where the spatial scalar 
$\frac{\tilde{C}}{Q}$ is quantised as a quadratic form in Fock 
representations for all geometry and matter fields by the above methods using 
the fact that $\tilde{C}/Q=[\tilde{C}\; Q^N]/Q^{N+1}$ and that 
$\tilde{C} Q^N$ is a polynomial for sufficiently large $N$ (pick the 
minimal such $N$) and using normal ordering. Assuming that $\Lambda_0>0$
is a positive function on $\sigma$ and that 
$\Delta=1+\frac{\tilde{c}}{\Lambda_0}$ is a small quantity 
we use the approximant derived in section (\ref{s2}) depending 
on the interpolation parameter $k$ or the truncation parameter $N$
and normal order, e.g. 
\be \label{4.18}
H_{k,Z}:=\int\; d^Dx\; \sqrt{2\Lambda_0}\;
: Q[-1+\Delta + k\; \Delta^2] :
\ee
with $0\le k\le 1$ 
where $:Q\;\Delta^n:, n=1,2$ is defined by decomposing into the terms 
$:\tilde{C}^l\; Q^{1-l}:,\; l=0,1,2$ which are treated by the above methods. 
The terms independent of $k$ are 
\be \label{4.19}
H_{1,Z}=\int\;d^Dx\; \frac{1}{\sqrt{2\Lambda_0}}\;
[:\tilde{C}:-\Lambda_0\;:Q:]
\ee
We have made the dependence of the approximant on $Z$ via
$\Lambda_0=\Lambda_0(Z)$ explicit. Similar remarks concern the 
treatment of the higher order 
powers $\Delta^n$ in the N-th order truncation of the Hamiltonian. 

Note that $<\Omega_Z,\Delta\Omega_Z>=0$ by construction and thus one 
expects the fluctuation term $<\Omega_Z,[H_{k,Z}-H_{1,Z}]\Omega_Z>$ to be 
subleading.
If $\tilde{C}=\Lambda Q+\hat{C}$ where $\Lambda>0$ is a ``bare'' cosmological 
constant term and $\hat{C}$ the remainder then 
\be \label{4.20}
:\tilde{C}:-\Lambda_0\;:Q:=
:\hat{C}:+[\Lambda-\Lambda_0]\;:Q:
\ee
where $\epsilon:=\Lambda-\Lambda_0$ is the dressed ``cosmological function'' 
which is a 
constant on sufficiciently large scales for suitable $\Omega_Z$. It can be 
positive since we have 
$<:\tilde{C}:>=<:\hat{C}:>+\Lambda <:Q:>= -\Lambda_0(1-\delta) <:Q:>$ 
with $\delta=<:\Delta Q:>/<:Q:>$, 
i.e.  for e.g. $\epsilon=\Lambda_0\delta$ we have  
$<:\hat{C}:>= -[(2-\delta)\Lambda_0+\epsilon] <:Q:>=-2\Lambda_0 
<:Q:>$  is negative 
due to the negative gravitational contribution.\\
\\   
This sketches how the above methods can be used in order to write 
%the N-th order truncation 
$k$ approximation 
of the exact square root Hamitonian 
as a quadratic form in Fock representations for both geoemtry and 
matter based 
on the background metric $g$ and using normal ordering and the 
expression for $Q^N$ provided in proposition \ref{prop4.1} for $Y=$id.  
One expects that the matrix elements are expecially simple if $q(Z)=g$ 
i.e. when the coherent state metric equals the background metric.

\subsection{Towards operators}
\label{s4.5}

As quadratic forms cannot be multiplied, it is not possible 
to evaluate e.g. ${\sf Vol}[S]^2$ on the form domain. However introducing 
modes $e_I$ as in section \ref{s3} we can define actual operators such 
as ${\sf Vol}_M[S]$ which are obtained by decomposing the part 
of ${\sf Vol}[S]$ which is polynomial in annihilation and creation operators,
i.e. the piece $:Q^Y_2:$ in (\ref{4.15}), 
into modes $I$ and to restrict the sum by $I<M$. 
Note that this is different from 
$P_{M,Z}\; {\sf Vol}[S]\; P_{M,Z}$ where $P_{M,Z}$ projects 
on the (closure of the) subspace ${\cal D}_{M,Z}$ given by the span
of excitations of $\Omega_Z$ by 
polynomials in the 
$<f,A>^\ast$ where $<e_I,f^{ij}>=0,\; I>M$, because 
$\Omega_Z$ is not annihilated by $<f,A>$ (rather it returns $<f,Z>$).
Then ${\sf Vol}_M[S]$ in fact preserves ${\cal D}_{M,Z}$ because the 
resticted polynomial dependence is now a polynomial in 
annihilation and creation operators $A_I^\ast, A_I,\;I<M$ and no longer 
an infinite series. This however does not 
allow to remove the cutoff in contrast to the commutator algebra 
where there are cancellations. \\
\\
To actually remove the regulator requires either the counterterm 
method familiar from the perturbative construction of the S matrix 
in usual QFT or methods from constructive QFT \cite{34}, see also 
\cite{29}. The latter 
idea can be sketched as follows (we drop the label $k,Z$ for simplicity): 
Consider the truncation $H_M$
%=P_M \; H\; P_M$ 
as defined above and try to find an invertible ``dressing operator'' 
$T_M: {\cal D}_0 \to {\cal D}_0$ where ${\cal D}_0$ is the 
span (\ref{4.7}) such that 
$H_M\; T_M = T_M h_M$ where $h_M$ has a strong limit $h$ in the topology 
of ${\cal H}_0$, the Fock space completion of 
${\cal D}_0$. One defines new scalar products
$<T_M \psi,T_M\psi'>_M:=<T_M \psi,T_M\psi'>/<T_M\Omega_Z,T_M\Omega_Z>$
and $\hat{{\cal D}}_0$ as the subspace of 
${\cal D}_0$ such that these 
inner producs converge as $M\to \infty$ thereby defining a new 
maps $T:\hat{{\cal D}}_0\to {\cal D}$ and a new 
Hilbert space $\cal H$ as the completion  of $\cal D$. Then 
$H T=T h$ if $h$ has range in $\hat{{\cal D}}_0$ which defines $H$ 
densely on $\cal D$ with range in $\cal H$.\\ 
\\
Another possibility is a variant of the Friedrichs extension technique
\cite{22}:\\
Suppose that one could show that the quadratic form
defining the physical Hamiltonian $H$, which is automatically symmetric
on the physical Hilbert space $\cal H$, 
is bounded from below on a dense subspace $V\subset {\cal H}$ of 
the physical Hilbert space and such 
that it is closable. That is, there exists $k\ge 0$ such that 
$||\psi||_k^2:=<\psi,H\psi>_{{\cal H}}+k||\psi||_{{\cal H}}^2\ge 0$ 
for all $\psi\in V$ and such that $V$ is in fact complete in the 
norm $||.||_{k+1}$, that is, $(V,<.,.>_{k+1})$ is a Hilbert space. 
Then we can define an operator $H'$ on $\cal H$ with 
dense domain ${\cal D}'=R_k({\cal H})$ where $R_k(\psi)\in V,\;\psi\in 
{\cal H}$ is the Riesz representative of the $||.||_{k+1}$ continuous
linear form $f\in V\mapsto <\psi,f>_{{\cal H}}$. Then for 
$\hat{\psi}_1',\hat{\psi}_2'\in {\cal D}'$ the formula 
$<\psi_1', H'\psi_2'>_{{\cal H}}:=
<\psi_1', H\psi_2'>_{{\cal H}}$ defines the matrix elements of a 
self-adjoint operator bounded from below by the same bound 
\cite{22} and we 
can use the spectral theorem to construct the scattering matrix
corresponding to $H'$.

\section{Conclusion}
\label{s5}

The main result of the present manuscript is to demonstrate that, 
although counter intuitive,
background dependent Fock representations can be used in a non perturbative 
fashion in order to construct a non-perturbative theory of quantum gravity.
In non-perturbative quantum gravity there is no natural split of the 
Hamiltonian (constraint) into a ``free'' and an ``interacting'' part.
Even in the polynomial version of the Hamiltonian constraint there 
is no term which is of quadratic order in the fields and the reduced 
Hamiltonian depends highly non-polynomially on the fields anyway.  

In perturbative quantum gravity 
one artificially decomposes the full metric $q_{ab}$ into a classical 
background $g_{ab}$ and a ``graviton'' 
fluctuation $h_{ab}=q_{ab}-h_{ab}$ and one 
expands e.g. the reduced Hamiltonian in powers of $h_{ab}$ which is 
an infinite series. 
Then one resorts to the formalism of perturbative 
QFT of $h_{ab}$ on the background defined by $g_{ab}$ by taking 
the free part (quadratic part of the expansion) of the Hamiltonian 
as an input for the choice of the Fock representation of $h_{ab}$. This
is quite different from what we have done here because we never 
perform the split $q_{ab}=g_{ab}+h_{ab}$, we quantise the full metric 
$q_{ab}$ and just use the background $g_{ab}$ to define a Fock 
representation of the full $q_{ab}$ (e.g. in order to define a background 
Laplacian etc.). In our formalism the fields $g_{ab}, q_{ab}$ are 
totally independent. While we also have to use an expansion of 
the reduced Hamiltonian in order to deal with the square root, that 
expansion is of a different nature than the one used in perturbative 
quantum gravity because it does not use the auxiliary, classical 
background field $g_{ab}$ as zeroth order but rather the expectation 
value of the argument of the quare root with respect to a coherent 
state $\Omega_Z$ which is concentrated on a classical Euclidian signature 
metric $q^0_{ab}$ constructed from $Z$.
This metric is again is completely unrelated to $g_{ab}$. The excitations 
of $\Omega_Z$ by smeared polynomials of the quantum metric $q_{ab}$ 
define a dense form domain of the resulting Hamiltonian quadratic form. 
The expansion is in powers of the fluctutations of the full argument
of the square root which itself is not a polynomial but that non-polynomial
dependence on the quantum metric can be dealt with using suitable Fock 
space techniques developed in section \ref{s4} of the present paper.

The resulting reduced Hamiltonian is then defined as a densely defined 
quadratic form but is not an operator. 
To define an actual operator one must 
use non-pertuerbative constructive QFT methods. These use the notion
of mode cut-offs and were successfully employed to construct interacting 
scalar fields in 3d Minkowski space \cite{34}. We also used elements of 
this in order to construct the polynomial quantum constraints of a version of 
Euclidian quantum gravity as quadratic forms and even their commutator 
algebra which is possible because commutators contain differences of 
products of quadratic forms and while products of quadratic forms are 
ill-defined such differences of products, carefully defined using 
mode cut-offs, can define well defined new quadratic forms.

The results of this paper have many applications of which we mention a 
few.\\ 
A.\\
In quantum cosmology and black hole perturbation theory one 
performs a split $q_{ab}=q^0_{ab}+q^1_{ab}$ where $q^0_{ab}$ 
is the ``symmetric'' part of the metric with respect to the given 
symmetry (spatial homogeneity, spherical symmetry, axi-symmetry,..)
and $q^1_{ab}$ is the ``non-symmetric'' remainder. However, instead of 
treating $q^0,q^1$ as $g,h$ of the perturbative approach to quantum gravity 
one can quantise the background $q^0$ as well in order to study 
cumulative backreaction effects. Thus one wants to keep track of how 
the symmetric part interacts with the non-symmetric part. This viewpoint
is in between the totally perturbative approach in which $g=q^0$ is fixed 
and just $h=q^1$ is quantised and the totally non-perturbative approach 
in which the full $q=q^0+q^1$ is quantised without making any difference 
between $q^0, q^1$. For this reason it is sometimes called 
the ``hybrid'' approach \cite{35,36}. The point is now that the 
perturbative expansion of reduced 
Hamiltonian is by construction a polynomial in $q^1$ but retains a 
non-polynomial 
dependence on $q^0$. We can therefore apply the non-perturbative 
Fock quantisation developed in this paper to the $q^0$ sector while we 
apply the usual perturbative Fock quantisation to the $q^1$ sector.\\
B.\\
A natural regularisation of all the quadratic forms constructed in this 
paper such as ${\sf Vol}[S], H_{k,Z}$ is obtained by considering 
$H_{k,Z,M}$ which is obtained from $H_{k,Z}$ by decomposing the polynomial
dependence of $H_{k,Z}$ in terms of annihilation and creation operators 
and modes $I$ and to restrict the sum over modes by $I<M$ with mode cut-off 
$M$ on the ONB $e_I$ of $L_2(\sqrt{\det(g)} d^Dx,\sigma)$. Then 
$H_{k,Z,M}$ preserves the span
${\cal D}_{M,Z}$ 
of excitations of a coherent state $\Omega_Z$ where the test functions 
$f$ of excitations $<f,A>^\ast$ have the property $<f,e_I>=0, I>M$. 
Here $Z$ is a point in the classical phase space subject to the 
condition that the metric $q[Z]$ that is encoded by it has Euclidian signature.
This is a real space regularisation independent of Fourier transform 
techniques which in principle 
works on any $\sigma$. It provides a UV cut-off as 
$M\to \infty$ corresponds to infinite spatial resolution. Thus at finite 
$M$ this provides\\ 
1. a well defined theory of quantum gravity in interaction with 
matter\\
2. using ordinary Fock space techniques combined with new non-perturbative 
techniques, in particular all Hilbert spaces are separable\\
3. all gauge invariance has been removed.\\
Thus the powerful machinery of Fock spaces, Feynman diagrammes etc. 
can be applied e.g. in constructing a scattering matrix. We expect 
the theory to be especially simple when the coherent state label 
$q[Z]$ coincides with the background metric $g$ used to construct 
the Fock representation.

To take the $M\to\infty$ limit and define actual operators 
one needs to invoke renormalisation
similar to \cite{34} or using Hamiltonian/Wilsonian renormalisation 
(see \cite{19} and references therein) based on 
the family of theories $({\cal H}_{M,Z}, H_{k,Z,M})$.\\
C.\\
If one takes the quadratic form $H_{k,Z}$ as it is (no cut-off)
one can construct 
the scattering matrix from it by the usual methods of 
QFT (Gell-Mann Low formula) if one 
can usefully decompose $H_{k,Z}$ into free and interacting 
terms. However the free term cannot just be a quadratic term 
because it does not exist in this non perturbative approach.  
One expects however that for matrix elements between matter 
excitations of the vector $\Omega_Z$ 
the Hamiltonian is effectively a QFT on CST Hamiltonian for matter fields 
for which such a decomposition is possible. There will be corrections 
as compared to QFT in CST due to fluctuation effects of $\Omega_Z$ and 
due to its gravitational 
excitations.\\  
D.\\
The commutator calculation has been performed only for Euclidian 
vacuum gravity (however with cut-off removed!) in four dimensions in 
the $(A,E)$ polarisation. 
By the methods introduced in the present work, it is possible to repeat 
the calculation for both signatures and any matter content in 
all three polarisations discussed and even in any dimension 
if one works in the $(q,p)$ or $(e,P)$ polarisation. However the computational
effort is significantly higher.

\begin{appendix}

\section{Nomal ordered commutators of normal ordered monomials}
\label{sa}

In this section we display the results of the computation of commutators 
of normal ordered monomials of 
$F_\alpha=A_\alpha-A_\alpha^\ast,\; G_\alpha=A_\alpha+A_\alpha^\ast$
as a linear combination of such normal ordered monomials. That 
linear combination is unique given the annihilation and creation 
algebra $A_\alpha, A_\alpha^\ast,\;[A_\alpha, A_\beta^\ast]=
\delta_{\alpha\beta}\; E$ where $E$ is the unit algebra element
and $\alpha,\beta,..$ can be from any index set. Since 
normal ordered monomials of $F,\; G$ are separately 
totally symmetric in the labels of $F,G$ factors respectively, that 
symmetry must be displayed also y the r.h.s. of the calculation which is 
a good consistency check. A normal orderd 
monomial is defined by its order $m$ and the number $p=0,..,m$
of factors $F$. If one computes the commuator 
of monomials with data $(m,p),(n,q)$ respectively, the r.h.s. is a linear 
combination of normal ordered
monomials of order $m+n-2k,\;k=1,..[(m+n)/2]$ where 
$[.]$ is the Gauss bracket unless $m=n,p=q$ in which case $1\le k\le m-1$.
Each such monomial comes with $k$ factors of Kronecker $\delta$. 
The various powers of 2 can be attributed to first the commutator 
$[F_\alpha,G_\beta]=2\delta_{\alpha\beta}\;E$ and the symmetrisation 
operation 
$K_\alpha L_\beta+K_\alpha L_\beta=:2\; K_{(\alpha} L_{\beta)}$ for any 
objects $K,L$, i.e. there is one factor of 2 for each Kronecker $\delta$
and for each symmetrisation in 2 indices. Note that a factor of 2 Kronecker
$\delta$ symmetrised in one pair of indices is automatically symmetrised 
in the other pair which is why some of the terms apparently lack an expected 
symmetry. Clearly there are no Kronecker $\delta$ for pairs of indices 
carried by the same monomial on the left hand side.   

Thus it appears that one can almost uniquely guess the result 
of the right hand side by these simple rules if one simply writes 
down all possible normal ordered terms $F^{p-r} G^{m-p-s} F^{q-s} G^{n-q-r}$
following from 
$[:F^p G^{m-p}:,:F^q G^{n-q}:]$
for the terms involving $k=r+s$ Kronecker factors and then symmetrises 
and normal orders, the intuition being that one needs to take commutators 
$[F,G]$ with $F,G$ respectively taken from  $:F^p G^{m-p}:,
:F^q G^{n-q}:$ respectively (plus sign) or vice versa (minus sign).
However, this is not the case 
for the terms involving more than one Kronecker factor: 
For instance the double Kronecker
terms $+:G^2:,\; +:F^2:$ in the computation of $[:FG:,:F^2 G^2:]$ are 
unexpected, one 
would rather have expected a term $:FG:$ by this intuition which however
is absent. 

When doing the computation by hand, at least for  
low order monomials, the computational effort can 
be reduced by not using lemma \ref{la3.1} but rather succesively 
using identities such as 
\ba \label{a.0}
&& :F_\alpha\; G_\beta:=
A_\beta^\ast \; F_\alpha+F_\alpha\;A_\beta 
=\frac{1}{2}[
(G_\beta-F_\beta) \; F_\alpha+F_\alpha\;(G_\beta+F_\beta)]
=\frac{1}{2}[G_\beta\; F_\alpha+F_\alpha\; G_\beta]
\nonumber\\
&&
:F_\alpha\; G_\beta\; G_\gamma:=
A_\gamma^\ast\; :F_\alpha\; G_\beta: 
+:F_\alpha\; G_\beta:\; A_\gamma
\nonumber
\ea
etc. which has the advantage to express the normal ordered objects 
in terms of symmetrised objects just containing $F,G$ so that 
$[F,F]=[G,G]=[F,G]-2E=0$ can be exploited in the commutator calculation.
When the comutator has been calculated, which at this point 
contains only one Kronecker factor, one needs to rearrange the result 
into the terms in the above list that one has produced which requires
further commutations producing the additional powers of Kronecker symbols.

Although straightforward, this becomes quickly algebraically very involved,
the hand written computation involves an order of 20 pages. The 
end result still fits on a single page. For better readability we have 
written $E^k$ instead of $E$ for the term that involves $k$ Kronecker 
symbols.
   
\newpage

~\\
1. $[F,F]$
\begin{equation} \label{a.1}
[\;:F_\alpha:\;,\;:F_\mu:\;]=E\cdot\; 0
\end{equation}
2. $[F,FG]$
\begin{equation} \label{a.2}
[\;:F_\alpha:\;,\;:F_\mu\; G_\rho:\;]=2\; E\cdot\;\delta_{\alpha\rho}\;:F_\mu:
\end{equation}
3. $[F,FFG]$
\begin{equation} \label{a.3}
[\;:F_\alpha:\;,\;:F_\mu\; F_\nu \; G_\rho:\;]=
2\;E\cdot\; \delta_{\alpha\rho}\;:F_\mu\; F_\nu:
\end{equation}
4. $[F,FFGG]$
\begin{equation} \label{a.4}
[\;:F_\alpha:\;,\;:F_\mu\; F_\nu \; G_\rho\;G_\sigma:\;]=
4\;E\cdot\;\delta_{\alpha(\rho}\;:G_{\sigma)}\;F_\mu\; F_\nu:
\end{equation}
5. $[FG,FG]$
\begin{equation} \label{a.5}
[\;:F_\alpha\;G_\gamma:\;,\;:F_\mu\; G_\rho:\;]=2\;E\cdot\;
\left(
\delta_{\alpha\rho}\;:F_\mu G_\gamma:
-\;\delta_{\mu\beta}\;:F_\alpha G_\rho:
\right)
\end{equation}
6. $[FG,FFG]$
\begin{eqnarray} \label{a.6}
&& [\;:F_\alpha\;G_\gamma:\;,\;:F_\mu\;F_\nu\; G_\rho:\;]
\\
&=& 2\;
\{
E\cdot \left( \delta_{\alpha\rho}\;:F_\mu\;F_\nu\;G_\gamma: 
-2\;\delta_{\gamma(\mu}\;:F_{\nu)}\; F_\alpha\; G_\rho:
\right)
+E^2\cdot \left( 
2\;\delta_{\alpha(\mu}\;\delta_{\nu)\gamma} \; :G_\rho: \right)
\}
\nonumber
\end{eqnarray}
7. $[FG,FFGG]$
\begin{eqnarray} \label{a.7}
&& [\;:F_\alpha\;G_\gamma:\;,\;:F_\mu\;F_\nu\; G_\rho\;G_\sigma:\;]
\\
&=& 4\;
\{
E\cdot \left(
\delta_{\alpha(\rho}\;:G_{\sigma)}\; F_\mu\;F_\nu\;G_\gamma:
-\;\delta_{\gamma(\mu}\;:F_{\nu)}\; F_\alpha\; G_\rho\; G_\sigma:
\right)
\nonumber\\
&& +E^2\cdot \left(
\delta_{\alpha(\mu}\;\delta_{\nu)\gamma} \; :G_\rho\; G_\sigma:
+\;\delta_{\alpha(\rho}\;\delta_{\sigma)\gamma} \; :F_\mu\; F_\nu:
\right)
\}
\nonumber
\end{eqnarray}
8. $[FFG,FFG]$
\begin{eqnarray} \label{a.8}
&& [\;:F_\alpha\;F_\beta\;G_\gamma:\;,\;:F_\mu\;F_\nu\; G_\rho:\;]
\\
&=& 4\;
\{
E\cdot\left(
\delta_{\rho(\alpha}\;:F_{\beta)}\; F_\mu\;F_\nu\;G_\gamma:
-\;\delta_{\gamma(\mu}\;:F_{\nu)}\; F_\alpha\; F_\beta \;G_\rho:
\right)
\nonumber\\
&& + E^2 \cdot\left(
\delta_{\gamma(\mu}\;\delta_{\nu)(\alpha}\; :F_{\beta)}\; G_\rho:
-2\;\delta_{\rho(\alpha}\;\delta_{\beta)(\mu} \; :F_{\nu)}\; G_\gamma:
\right)
\}
\nonumber
\end{eqnarray}
9. $[FFG,FFGG]$
\begin{eqnarray} \label{a.9}
&& [\;:F_\alpha\;F_\beta\;G_\gamma:\;,\;:F_\mu\;F_\nu\; G_\rho\; G_\sigma:\;]
\\
&=& 4\;
\{
E \cdot \left(
- \; \delta_{\gamma(\mu}\; :F_{\nu)}\; F_{\alpha}\; F_{\beta}\;  
G_{\rho}\; G_{\sigma}: 
+ 2\; :F_{(\alpha}\; \delta_{\beta)(\rho}\; G_{\sigma)}\; 
F_{\mu}\; F_{\nu}\; G_{\gamma}: 
\right)
\nonumber\\
&& + E^{2} \cdot \left(
2\; :F_{\mu}\; F_{\nu}\; F_{(\alpha}:\; \delta_{\beta)(\rho}\; 
\delta_{\sigma)\gamma} 
+ 2\; :G_{\rho}\; G_{\sigma}\; F_{(\alpha}:\; \delta_{\beta)(\mu}\; 
\delta_{\nu)\gamma} 
- 4\; :G_{\gamma}\; G_{(\rho}\; \delta_{\sigma)(\alpha}\; 
\delta_{\beta)(\mu}\; F_{\nu)}: 
\right) 
\nonumber\\
&& + E^{3} \cdot \left(
- 4\; :F_{(\mu}:\; \delta_{\nu)(\alpha}\; \delta_{\beta)(\rho}\; 
\delta_{\sigma)\gamma} 
- 2\; :F_{(\mu}:\; \delta_{\nu)\gamma}\; \delta_{\alpha(\rho}\; 
\delta_{\sigma)\beta}  
\right) 
\}
\nonumber
\end{eqnarray}
10. $[FFGG,FFGG]$
\begin{eqnarray} \label{a.10}
&& [\;:F_\alpha\;F_\beta\;G_\gamma\; G_\delta:\;,
\;:F_\mu\;F_\nu\; G_\rho\; G_\sigma:\;]
\\
&=& 8\;
\{
E \cdot \left(
-\; :F_{\alpha}\; F_{\beta}\; G_{\rho}\; G_{\sigma}\; G_{(\gamma}\; \delta_{\delta)(\mu}\; F_{\nu)}: 
+\; :F_{\mu}\; F_{\nu}\; G_{\gamma}\; G_{\delta}\; G_{(\rho}\; \delta_{\sigma)(\alpha}\; F_{\beta)}: 
\right)
\nonumber\\
&& + E^{2} \cdot (
-\; 2\; :F_{\alpha}\; F_{\beta}\; F_{(\mu}\; \delta_{\nu)(\gamma}\; 
\delta_{\delta)(\rho}\; G_{\sigma)}: 
-\; 2\; :G_{\gamma}\; G_{\delta}\; F_{(\mu}\; \delta_{\nu)(\alpha}\; 
\delta_{\beta)(\rho}\; G_{\sigma)}: 
\nonumber\\
&& +\; 2\; :F_{\mu}\; F_{\nu}\; G_{(\gamma}\; \delta_{\delta)(\rho}\; 
\delta_{\sigma)(\alpha}\; F_{\beta)}: 
+\; 2\; :G_{\rho}\; G_{\sigma}\; G_{(\gamma}\; \delta_{\delta)(\mu}\; 
\delta_{\nu)(\alpha}\; F_{\beta)}: 
) 
\nonumber\\
&& + E^{3} \cdot (
4\; :F_{(\alpha}\; \delta_{\beta)(\mu}\; \delta_{\nu)(\gamma}\; 
\delta_{\delta)(\rho}\; G_{\sigma)}: 
+\; 2\; :F_{(\alpha}\; \delta_{\beta)(\rho}\; G_{\sigma)}:\; 
\delta_{\mu(\gamma}\; \delta_{\delta)\nu} 
\nonumber\\
&& -\; 4\; :F_{(\mu}\; \delta_{\nu)(\alpha}\; \delta_{\beta)(\rho}\; 
\delta_{\sigma)(\gamma}\; G_{\delta)}: 
-\; 2\; :F_{(\mu}\; \delta_{\nu)(\gamma}\; G_{\delta)}:\; 
\delta_{\alpha(\rho}\; \delta_{\sigma)\beta} 
) 
\}
\nonumber
\end{eqnarray}

\end{appendix}

\end{document}